\documentclass[runningheads]{llncs}

\usepackage{adjustbox}
\usepackage{algorithmic}
\usepackage[ruled,linesnumbered]{algorithm2e}

\usepackage{amsmath,amsfonts,amssymb}

\usepackage{booktabs}
\usepackage{bm}
\usepackage{diagbox}
\usepackage{float}
\usepackage{graphicx}
\usepackage{hyperref}
\hypersetup{colorlinks=true, linkcolor=blue, citecolor=blue, urlcolor=blue}
\usepackage{listings}
\usepackage{mathrsfs}
\usepackage{mathtools}
\usepackage{multirow}
\usepackage[sort&compress,numbers]{natbib}
\usepackage{subcaption}
\usepackage{thmtools}
\usepackage{thm-restate}
\usepackage{tikz}
\usetikzlibrary{arrows.meta,positioning,calc,fit,backgrounds,decorations.pathreplacing,shapes.geometric}
\usepackage{xcolor}
\usepackage[margin=1in]{geometry}

\usepackage[T1]{fontenc}

\spnewtheorem{assumption}{Assumption}{\bfseries}{\itshape}
\spnewtheorem{fact}{Fact}{\bfseries}{\itshape}

\usepackage[capitalize,noabbrev]{cleveref}
\crefname{assumption}{Assumption}{Assumptions}
\Crefname{assumption}{Assumption}{Assumptions}
\crefname{fact}{Fact}{Facts}
\Crefname{fact}{Fact}{Facts}
\crefname{example}{Example}{Examples}
\Crefname{example}{Example}{Examples}

\definecolor{pku-red}{RGB}{139,0,18}

\definecolor{croot}{RGB}{31,119,180}   
\definecolor{cleaf}{RGB}{46,160,67}     
\definecolor{cref}{RGB}{240,110,110}    
\definecolor{cplayer}{RGB}{26,188,188}  
\definecolor{creall}{RGB}{230,145,25}   
\definecolor{cdyn}{RGB}{31,119,180}     
\definecolor{cdb}{RGB}{245,224,140}     

\newcommand{\bbE}{\mathbb{E}}

\newcommand{\bbR}{\mathbb{R}}

\newcommand{\bbU}{\mathbb{U}}

\newcommand{\calF}{\mathcal{F}}

\newcommand{\calH}{\mathcal{H}}
\newcommand{\calI}{\mathcal{I}}

\newcommand{\calL}{\mathcal{L}}

\newcommand{\calP}{\mathcal{P}}

\newcommand{\calT}{\mathcal{T}}

\newcommand{\calV}{\mathcal{V}}

\newcommand{\bmp}{{\bm{p}}}

\newcommand{\bms}{{\bm{s}}}

\newcommand{\bmu}{{\bm{u}}}
\newcommand{\bmv}{{\bm{v}}}
\newcommand{\bmw}{{\bm{w}}}
\newcommand{\bmx}{{\bm{x}}}

\newcommand{\one}{\bm{1}}

\DeclareMathOperator*{\argmax}{argmax}

\newcommand{\poly}{\mathrm{poly}}

\newcommand{\ie}{\emph{i.e.}}

\newcommand{\SW}{\mathrm{SW}}

\newcommand{\tbmp}{\tilde{\bmp}}

\newcommand{\tbmx}{\tilde{\bmx}}
\newcommand{\tbms}{\tilde{\bms}}

\newcommand{\tp}{\tilde{p}}

\newcommand{\tx}{\tilde{x}}

\renewcommand{\ts}{\tilde{s}}

\newcommand{\tmu}{{\tilde{\mu}}}
\newcommand{\bmmu}{{\bm{\mu}}}
\newcommand{\tbmmu}{{\tilde{\bmmu}}}

\DeclareMathOperator{\Children}{Children}
\DeclareMathOperator{\Parent}{Parent}
\DeclareMathOperator{\Path}{Path}
\DeclareMathOperator{\Subtree}{Subtree}
\DeclareMathOperator{\Desc}{Desc}
\newcommand{\Model}{\textsc{Model}}

\usepackage[normalem]{ulem}

\crefname{appendix}{Appendix}{Appendices}
\Crefname{appendix}{Appendix}{Appendices}

\usepackage{enumerate}
\usepackage[inline]{enumitem}

\title{Profit Reallocation Mechanisms in Tree-based Data Trading}

\authorrunning{Y. Ma et al.}
\author{Yunxuan Ma\inst{1,*} \and Wu Xin\inst{2,*} \and Jichen Li\inst{3} \and Yusen Zheng\inst{1} \and Xiaotie Deng\inst{\dagger}}
\institute{CFCS, School of Computer Science, Peking University, Beijing, China\\ \email{yunxuanma@pku.edu.cn, yusen@stu.pku.edu.cn} \and School of Economics, Peking University, Beijing, China\\ \email{xinwu@pku.edu.cn} \and Department of Computer Science and Technology, Tsinghua University, Beijing, China\\ \email{jichenli@mail.tsinghua.edu.cn}
}

\begin{document}

\maketitle
\begingroup
\renewcommand{\thefootnote}{}%
\footnotetext{* Equal contribution.}%
\footnotetext{\textdagger~Corresponding author.}%
\endgroup
\begin{abstract}
Markets for data promise to unlock its economic value, yet in practice they remain far less active than expected---one reason is that those who supply data are not rewarded for the value it creates downstream. A defining feature of data is its \emph{replicability}: a buyer can refine purchased data into a new product and resell it to \emph{many} downstream buyers, so a single source seeds a branching cascade of resales that naturally forms a \emph{tree}. Because an upstream seller captures none of this downstream value, its incentive to trade is weakened. Existing works propose \emph{profit reallocation}---returning part of downstream revenue to upstream contributors---as a natural remedy. But whether profit reallocation works on the tree-structured markets that replicable data actually induces has remained open.

To bridge this gap, we develop a principled framework for profit reallocation on tree-structured data markets. We introduce a sequential trading game on a tree and a general class of budget-feasible profit reallocation mechanisms (PRMs) over it. We derive efficient algorithms to compute the induced equilibria---a polynomial-time exact algorithm for discrete valuations and a fully polynomial-time approximation scheme (FPTAS) for continuous ones---via a subtree decomposition technique that tames the potential coupling across a seller's children. We then prove that, under mild assumptions, \emph{any} budget-feasible PRM weakly expands the trades that occur in equilibrium, and any budget-balanced PRM additionally weakly improves social welfare, relative to the baseline that reallocates nothing. Experiments on synthetic markets confirm that these benefits are substantial, persist even when the assumptions fail, and grow with the depth and branching of the tree.

\keywords{Data Trading \and Profit Reallocation Mechanism \and Equilibrium Analysis \and Tree Structure}
\end{abstract}

\newpage
\section{Introduction}
\label{sec:intro}

Data has become a central production factor of the modern economy, and trading is a primary means by which its value is unlocked and redistributed~\citep{analytics2016age,richter2019data}.
A growing number of platforms---such as Snowflake Data Marketplace~\citep{SnowflakeDocs_MarketplaceAbout}, AWS Data Exchange~\citep{AWSDocs_DataExchangeLanding}, and Dawex~\citep{Dawex_Web_DataMarketplaceModel}---now intermediate the exchange of data products.
Yet, in practice, data markets remain far less active than their potential would suggest~\citep{Koutroumpis2017UnfulfilledPotential,OECD2021ValueOfData,EuropeanParliament2023DataActPressRelease}.
Among the many reasons that make the data market struggle, a widely cited cause is the absence of incentive structures that reward all parties whose data ultimately creates value~\citep{OECD_EASD_2019,azcoitia2022survey,bauer2024designing}.

A substantial body of work studies incentive designs in data markets through auctions~\citep{ghosh2011selling,zhang2020selling,agarwal2024towards} and pricing~\citep{chen2019towards,mi2025multi}.
However, most works treat data as physical goods, overlooking the \emph{replicable} nature of data:
once acquired, a buyer can process the data into a new product and resell it to multiple downstream buyers, so a single source can give rise to a cascade of resales~\citep{JonesTonetti2020NonrivalryData,shiller2013digital,biswas2021incentive}. This can form a tree structure of data flow.
Because an upstream seller captures no direct revenue and benefits only indirectly from these downstream resales, its incentive to trade in the first place is weakened.
Recent work proposes \emph{profit reallocation}---redistributing part of downstream revenue back to upstream contributors---as a remedy for this incentive problem~\citep{xin2025tbds,ma2026incentivizing}, yet these analyses assume that each seller resells to at most one buyer.
The design and effects of profit reallocation on the tree-structured markets that replicable data naturally induces remain unexplored.

To close this gap, we develop the first principled framework for profit reallocation on \emph{tree-structured} data markets.
We model data trading as a sequential game on a rooted tree in which the root player originates the data and every player may process and resell its data product to multiple children (\cref{sec:model}).
On top of this game we formalize a \emph{profit reallocation mechanism} (PRM): whenever a trade occurs, part of its payment is redistributed to the buyer's ancestors by a budget-feasible redistribution rule---the total amount redistributed never exceeds the payment---so every player benefits from the trade.
Intuitively, a PRM lets the original seller invest in intermediaries using data, consequently earning direct revenue from every downstream transaction its data feeds.

Given a PRM, the first challenge is to understand how strategic players will behave, \ie, to compute the induced equilibrium; throughout we adopt the notion of \emph{perfect Bayesian equilibrium}, which demands optimality at every information set.
Branching makes this computation formidable: a seller prices data product to \emph{every} child, so its pricing action is a vector whose dimension equals its number of children, and its utility couples the entire subtree beneath it---a naive evaluation is exponential in the tree size.
Fortunately, a key structural insight dissolves this difficulty: under the Markovian valuation process, sibling subtrees are conditionally independent given the seller's valuation, so the seller's pricing problem \emph{separates} across its children.
This subtree decomposition collapses equilibrium strategies to single-variable functions and lets a seller optimize each child's price separately, eliminating any joint optimization over the price vector.
Building on it, we design a polynomial-time exact algorithm for discrete valuations and a fully polynomial-time approximation scheme (FPTAS) for continuous valuations (\cref{sec:algo}), turning the evaluation of any PRM into a tractable computation.

Beyond computation, we examine whether data trading actually benefits from profit reallocation, and answer in the affirmative.
We prove that, under mild assumptions, any budget-feasible PRM weakly expands the set of valuation profiles under which each trade occurs, relative to the baseline mechanism that does not reallocate profits; and any budget-balanced PRM, for which payments are fully redistributed rather than partly withheld by the mechanism, additionally weakly improves social welfare (\cref{sec:theoretic}).
Here, too, branching is the crux of the difficulty: rewarding a seller more from \emph{one} child's subtree could in principle distort its prices to the \emph{other} children, breaking trades elsewhere in the tree. We rule this out by carefully interpolating between PRMs so that adjacent PRMs differ on only one subtree at a time.
Finally, our experiments (\cref{sec:exp}) confirm that the advantage of profit reallocation is substantial, persists even when the theoretical assumptions are violated, and grows with the depth and branching of the tree.

In summary, our contributions are:
\begin{itemize}[left=0em]
\item A tree-based data trading game that captures one-to-many resale, together with a general, budget-feasible profit reallocation mechanism on it (\cref{sec:model});
\item Efficient equilibrium computation---a polynomial-time exact algorithm and an FPTAS---that makes any PRM evaluable (\cref{sec:algo});
\item A theoretical guarantee that any budget-feasible PRM weakly expands trade, and any budget-balanced PRM additionally weakly improves welfare, over the baseline mechanism on trees (\cref{sec:theoretic});
\item Experiments confirming that the advantage is robust to violations of the theoretical assumptions and strengthens as the tree size grows (\cref{sec:exp}).
\end{itemize}
Together, these results suggest that deploying profit reallocation can meaningfully invigorate real-world data markets.

\section{Related Work}
\label{sec:related}

\paragraph{Incentives and pricing in data markets.}
Designing incentives for data exchange is a long-standing concern~\citep{liu2024data,azcoitia2022survey}. One line of work sells data or statistics through auctions, often under privacy constraints~\citep{ghosh2011selling,nissim2014redrawing,zhang2020selling,agarwal2024towards}, while another studies pricing of data and machine-learning products~\citep{chen2017model,chen2019towards,mi2025multi,qin2023research}. Incentive mechanisms have also been developed for federated learning and collaborative analytics~\citep{yu2020fairness,li2023martfl,han2024dynamic,zhang2025incentive}, and for revenue or profit sharing among contributors~\citep{cao2017game,bauer2024designing}. These works largely treat data as a good that is consumed by its buyer, and do not model the strategic consequences of buyers becoming sellers themselves.

\paragraph{Replicability and resale of data.}
Unlike rival goods, data is non-rival and can be replicated and resold~\citep{JonesTonetti2020NonrivalryData,shiller2013digital,ali2024reselling}. A handful of works acknowledge that resale can generate value along a transaction chain~\citep{biswas2021incentive,xin2025tbds,ma2026incentivizing}.
Biswas et al.~\cite{biswas2021incentive} design a marketplace that permits resale but do not analyze incentive design for it.
Xin et al.~\cite{xin2025tbds} propose a blockchain-based mechanism, \textsc{TBDS}, that traces transactions from resale and pays upstream contributors a fixed fraction of downstream revenue.
Ma et al.~\cite{ma2026incentivizing} study the incentive effects of profit reallocation where each data owner resells to at most one buyer.

\paragraph{Mechanisms on networks and sequential trade.}
Our reallocation of value to upstream participants is reminiscent of diffusion auctions, referral mechanisms, and multi-level marketing, which reward intermediaries for propagating an item or information through a network~\citep{emek2011mechanisms,li2017mechanism,li2022diffusion,zhang2020redistribution}. A separate line studies equilibrium behavior in sequential bilateral or intermediated trade~\citep{condorelli2017bilateral,manea2018intermediation,sandholm2006sequences}. Our setting differs in three respects: the traded object is replicable data, so a seller does not relinquish it; the focus is on how an external reallocation mechanism reshapes equilibrium trading behavior; and trades occur on a platform among mutually anonymous players via posted prices, rather than on a known social network.

\section{Model}
\label{sec:model}

\subsection{Tree-Based Data Trading Game}
\label{subsec:game}

We consider a data trading game on a rooted tree $\calT$.
The node set $\calI$ is the set of all players, with each directed edge pointing from a parent to a child. Let $r \in \calI$ denote the root, and $\calL \subseteq \calI$ denote the set of leaf nodes. We write $N \coloneqq |\calI|$ for the total number of players. We define the following functions for players:
\begin{itemize}[left=0em]
\item $\Parent(i)$, for $i \in \calI \setminus \{r\}$: the unique parent of $i$ in $\calT$.
\item $\Children(i)$, for $i \in \calI$: the set of children of $i$. For leaf nodes $i \in \calL$, $\Children(i) = \varnothing$.
\item $\Path(i)$, for $i \in \calI$: the set of strict ancestors of $i$, \ie, the nodes on the path from $r$ to $i$ excluding $i$. We have $\Path(r) = \varnothing$.
\item $\Subtree(i)$, for $i \in \calI$: all nodes in the subtree rooted at $i$, including $i$ itself.
\item $\Desc(i) \coloneqq \Subtree(i) \setminus \{i\}$, for $i \in \calI$: the set of strict descendants of $i$.
\end{itemize}
\cref{fig:tree-notation} illustrates these notations on a small instance.

\begin{figure}[t]
\centering
\resizebox{0.66\textwidth}{!}{%
\begin{tikzpicture}[
  >={Latex[length=1.8mm]},
  vtx/.style={circle, draw, line width=0.5pt, minimum size=7.5mm, inner sep=0pt, font=\small},
  rt/.style={vtx, fill=croot, text=white},
  lf/.style={vtx, fill=cleaf, text=white},
  rf/.style={vtx, fill=cref},
  it/.style={vtx, fill=white},
  ed/.style={->, line width=0.6pt, black!70},
]
\node[rt] (r)   at (0,0)        {$r$};
\node[lf] (l)   at (-1.5,-1.35) {$l$};
\node[it] (u)   at (1.5,-1.35)  {$u$};
\node[rf] (ul)  at (0.55,-2.75) {$ul$};
\node[lf] (uu)  at (2.65,-2.75) {$uu$};
\node[lf] (ull) at (-0.35,-4.1) {$ull$};
\node[lf] (ulu) at (1.45,-4.1)  {$ulu$};
\draw[ed] (r)--(l);  \draw[ed] (r)--(u);
\draw[ed] (u)--(ul); \draw[ed] (u)--(uu);
\draw[ed] (ul)--(ull); \draw[ed] (ul)--(ulu);
\begin{scope}[on background layer]
\node[draw=cref, dashed, line width=1pt, rounded corners=3pt,
      fit=(ul)(ull)(ulu), inner sep=7pt] (box) {};
\end{scope}
\node[text=cref, font=\small, anchor=south west] at ([xshift=-3pt,yshift=1.5pt]box.north west) {$\Subtree(ul)$};
\node[rt] (lr) at (5.0,0.05) {$r$};
\node[right=1.2mm of lr, font=\small] {: root};
\node[lf] (ll) at (5.0,-1.05) {$l$};
\node[right=1.2mm of ll, font=\small] {: leaves};
\node[anchor=west, font=\small] (info) at (4.55,-3.05)
  {$\left\{\begin{array}{l}\Parent(ul)=u\\[5pt]\Children(ul)=\{ulu,ull\}\\[5pt]\Path(ul)=\{r,u\}\end{array}\right.$};
\draw[->, dashed, line width=0.7pt, black!70] (ul.south east) to[out=-38,in=182] (info.west);
\end{tikzpicture}%
}
\caption{An illustration of the tree notation on a $7$-player instance. The root $r$ is shown in blue and the leaves $\calL = \{uu, l, ulu, ull\}$ in green. Taking node $ul$ as the reference: its parent is $\Parent(ul) = u$, its children are $\Children(ul) = \{ulu, ull\}$, and its strict ancestors are $\Path(ul) = \{r, u\}$. The red dashed box marks $\Subtree(ul) = \{ul, ulu, ull\}$, so that $\Desc(ul) = \{ulu, ull\}$. Best viewed in color.}
\label{fig:tree-notation}
\end{figure}

Replicable data flow downward along the edges of $\calT$, originating at the root $r$. Once a non-root player $i$ receives data from its parent, it refines the data into a sellable product that it may, in turn, offer to each of its children. Player $i$'s \emph{valuation} is a single scalar $v_i \in \calV_i \subseteq \bbR$: the net worth of the data product to $i$, namely the benefit it derives minus the processing cost, which may be negative. This quantity is observed only by player $i$.
We write $\bmv =\{v_i\}_{i \in \calI}$ as valuation profile, $\calV \coloneqq \prod_{i \in \calI} \calV_i$ for the space of valuation profiles, $\bmv_{-i}$ for the valuation profile other than player $i$'s, and $\calV_{-i} \coloneqq \prod_{k \neq i} \calV_k$ for its domain. We assume each $\calV_i$ is a closed bounded interval.
We let the valuation profile evolve as a Markov process down the tree.

\begin{assumption}[Markovian on Trees]
\label{asp:markov}
The root valuation is drawn as $v_r \sim f_r(\cdot)$. For every non-root player $i \in \calI \setminus \{r\}$, the valuation $v_i$ depends only on $v_{\Parent(i)}$:
\[
v_i \sim f_i(\cdot \mid v_{\Parent(i)}).
\]
The valuation profile $\bmv = \{v_i\}_{i \in \calI}$ has the joint distribution $f(\bmv) = f_r(v_r) \prod_{i\ne r} f_i(v_i|v_{\Parent(i)})$.
We denote the joint distribution of $\bmv$ by $\calF$, and the conditional CDF by $F_i$. Both $\calF$, $F_i$s and $f_i$s are common knowledge.
\end{assumption}

This assumption fits the tree structure: a player $i$'s product's worth is governed by the immediate input (player $\Parent(i)$'s product) from which it is refined, rather than by the raw material that forms the immediate input.
In other words, this assumption captures that the value of a refined product may depend on the value of its raw input.
It also subsumes the classical independent-private-values setting~\citep{myerson1981optimal} as a special case.

\subsection{Player Action}
\label{subsec:action}

Each non-root player $j$ corresponds to a potential trade $j$. The seller $i = \Parent(j)$ first commits to a posted price $p_j \in \bbR_+$ for player $j$; seeing this price, player $j$ responds with a binary decision $x_j \in \{0,1\}$, describing her willingness to buy ($x_j = 1$) or not ($x_j = 0$).

Trade $i$ actually occurs if and only if trade $\Parent(i)$ occurs and player $i$ is willing to buy the data from player $\Parent(i)$. We define the indicator of whether trade $i$ occurs:
\begin{equation}
\label{eq:yi}
y_i \;\coloneqq\; \prod_{j \in \Path(i) \cup \{i\}} x_j, \forall i \in \calI \backslash\{r\},
\end{equation}
where we set $x_r \coloneqq 1$ and $p_r \coloneqq 0$ by convention (the root player initially possesses the data).
If $y_i = 1$, trade $i$ occurs, player $i$ obtains the data, realizes valuation $v_i$, and pays $p_i$; otherwise $y_i = 0$, player $i$ gains nothing and pays nothing. For $m \in \Desc(j)$, we define the \emph{conditional trade indicator}:
\begin{equation}
\label{eq:y-conditional}
y_{m \mid j} \;\coloneqq\; \prod_{\substack{\ell \in \Path(m) \cup \{m\} \\ \ell \notin \Path(j) \cup \{j\}}} x_\ell,
\end{equation}
which is the product of trade decisions along the path from $j$ to $m$, excluding $j$. This captures whether trade $m$ occurs conditional on trade $j$ occurs.

\subsection{Player Strategy}
\label{subsec:strategy}

At each stage, a player observes the public trading history along their path from the root. For player $i$, the observed history is:
\begin{equation}
\label{eq:hi}
h_i = \big\{(j, p_j, x_j)\big\}_{j \in \Path(i) \setminus \{r\}} \cup \{(i, p_i)\},
\end{equation}
which includes all past trade outcomes on the path from $r$ to $i$ and the price $p_i$ currently offered to $i$. We set $h_r = \varnothing$. Let $\calH_i$ denote the set of all possible histories for player $i$.

A strategy for player $i$ consists of:
\begin{itemize}[left=0em]
\item \textbf{Pricing strategy} (for $i \notin \calL$): $\tp_i: \calH_i \times \Children(i) \times \calV_i \to \bbR_+$, where $\tp_i(h_i, j, v_i)$ is the price player $i$ offers to child $j$.
\item \textbf{Buying strategy} (for $i \neq r$): $\tx_i: \calH_i \times \calV_i \to \{0, 1\}$, where $\tx_i(h_i, v_i)$ is the player $i$'s willingness to buy given public trading history and her private valuation $v_i$.
\end{itemize}
We call $(\tbmp, \tbmx) = \big(\{\tp_i\}_{i \notin \calL},\; \{\tx_i\}_{i \neq r}\big)$ a strategy profile.

\subsection{Game Dynamic}
\label{subsec:dynamic}

\begin{definition}[Game Dynamics on Trees]
\label{def:game-evolve}
Given a valuation profile $\bmv$ and a strategy profile $(\tbmp, \tbmx)$, the game evolves recursively from root to leaves:\\
1. Initialize $h_r = \varnothing$.\\
2. For each non-leaf player $i$ and each child $j \in \Children(i)$, player $i$ posts price $p_j = \tp_i(h_i, j, v_i)$ to child $j$.\\
3. Each non-root player $i$ decides $x_i = \tx_i(h_i, v_i)$.\\
4. The game concludes with the terminal history $h = \{(i, p_i, x_i)\}_{i \in \calI \setminus \{r\}}$.
\end{definition}

We denote this mapping from valuation profile $\bmv$, strategy profile $(\tbmp, \tbmx)$ to the terminal history $h$ as $h(\tbmp, \tbmx;\bmv)$.

\subsection{Profit Reallocation Mechanism}
\label{subsec:mechanism}

A \emph{profit reallocation mechanism} (PRM) specifies how the payment from each trade is distributed among upstream players.

\begin{definition}[Profit Reallocation Mechanism on Trees]
\label{def:prm}
A PRM is a function $\pi: (\calI \setminus \{r\}) \times (\calI \setminus \calL) \to \bbR_+$, where $\pi(i,j)$ is the proportion of the payment from trade $i$ allocated to player $j$. That is, when trade $i$ occurs at price $p_i$, player $j$ receives the monetary transfer of $\pi(i,j) \cdot p_i$. The function $\pi$ must satisfy following constraints to make it economically meaningful:
\begin{enumerate}[left=0em]
\item \textbf{Path Only:} $\pi(i,j) > 0$ only if $j \in \Path(i)$. Only ancestors of buyer $i$ are eligible for profit reallocation.
\item \textbf{Budget Feasibility:} $\sum_{j \in \Path(i)} \pi(i,j) \le 1$ for all $i \in \calI \setminus \{r\}$. The revenue of ancestors cannot exceed the payment from the buyer in any trade.
\end{enumerate}
\end{definition}

The PRM-aware model is specified by $\Model = (\calT, \pi, \calF)$.
A distinguished PRM is the \emph{baseline mechanism}, which does not perform profit reallocation:

\begin{definition}[Baseline Mechanism]
\label{def:baseline}
The \emph{baseline mechanism} $\pi^0$ allocates the entire payment to the direct seller: $\pi^0(i, \Parent(i)) = 1$ and $\pi^0(i, j) = 0$ for $j \neq \Parent(i)$.
\end{definition}

\subsection{Utility and Solution Concept}
\label{subsec:solution}

Given a PRM $\pi$ and terminal history $h$, the realized utility of player $i$ is:
\begin{equation}
\label{eq:utility-raw}
U_i(h; v_i) = y_i \cdot (v_i - p_i) + \sum_{j \in \Desc(i)} y_j \cdot \pi(j,i) \cdot p_j.
\end{equation}
The first term is player $i$'s own surplus from trade $i$; the second term aggregates the profit reallocation routed to $i$ from descendant trades, which by the Path Only condition (\cref{def:prm}) arrive only from $\Desc(i)$. We also denote social welfare as the aggregate utility $\SW(h;\bmv) = \sum_{i \in \calI} U_i(h;v_i)$.

\begin{figure*}[t]
\centering
\resizebox{0.9\textwidth}{!}{%
\begin{tikzpicture}[
  >={Latex[length=2mm]},
  rootN/.style={circle, draw, minimum size=8mm, inner sep=0pt, fill=croot,   text=white, font=\small},
  playerN/.style={circle, draw, minimum size=8mm, inner sep=0pt, fill=cplayer, text=white, font=\small},
  leafN/.style={circle, draw, minimum size=8mm, inner sep=0pt, fill=cleaf,   text=white, font=\small},
  db/.style={cylinder, shape border rotate=90, draw, aspect=0.28,
             minimum width=6.5mm, minimum height=5.5mm, fill=cdb, font=\tiny, inner sep=1pt},
  dyn/.style={font=\footnotesize, text=cdyn, align=left, inner sep=1pt},
  val/.style={font=\footnotesize, inner sep=1pt},
  ubox/.style={draw, rounded corners=2pt, font=\footnotesize, inner sep=3.5pt},
  ed/.style={->, line width=0.6pt, black!70},
  reOne/.style={->, line width=1pt, red!80!black},
  reHalf/.style={->, line width=1pt, creall},
  dcon/.style={-{Latex[length=1.6mm]}, dotted, line width=0.7pt, black!55},
]
\node[rootN]   (r)  at (0,0)        {$r$};
\node[playerN] (l)  at (-3.4,-2.1)  {$l$};
\node[playerN] (u)  at (3.4,-2.1)   {$u$};
\node[leafN]   (ll) at (-5.1,-4.3)  {$ll$};
\node[leafN]   (lu) at (-2.1,-4.3)  {$lu$};
\node[leafN]   (ul) at (2.1,-4.3)   {$ul$};
\node[leafN]   (uu) at (5.1,-4.3)   {$uu$};
\node[db, left=1.5mm of r]  (dbr) {data};
\node[db, left=1.5mm of u]  (dbu) {data};
\node[db, left=1.5mm of uu] {data};
\draw[ed] (r)--(l); \draw[ed] (r)--(u);
\draw[ed] (l)--(ll); \draw[ed] (l)--(lu);
\draw[ed] (u)--(ul); \draw[ed] (u)--(uu);
\node[val, below=0.5mm of r]  {$v_r=0$};
\node[val, anchor=east] at ([xshift=-1mm]l.west)   {$v_l=2$};
\node[val, anchor=east] at ([xshift=-1mm]dbu.west) {$v_u=-1$};
\node[val, below=0.5mm of ll] {$v_{ll}=3$};
\node[val, below=0.5mm of lu] {$v_{lu}=4$};
\node[val, below=0.5mm of ul] {$v_{ul}=5$};
\node[val, below=0.5mm of uu] {$v_{uu}=6$};
\node[dyn, right=1mm of l]  {$p_l=2$\\$x_l=0$};
\node[dyn, right=1mm of u]  {$p_u=1$\\$x_u=1$};
\node[dyn, right=1mm of ll] {$p_{ll}=6$\\$x_{ll}=0$};
\node[dyn, right=1mm of lu] {$p_{lu}=5$\\$x_{lu}=1$};
\node[dyn, right=1mm of ul] {$p_{ul}=4$\\$x_{ul}=0$};
\node[dyn, right=1mm of uu] {$p_{uu}=5$\\$x_{uu}=1$};
\coordinate (os) at (uu.north);
\draw[reOne]  (u.north west) to[out=150,in=-32]
      node[pos=0.5, right=0.6mm, font=\scriptsize, text=red!80!black] {$1$} (r.east);
\draw[reHalf] (os) to[out=122,in=-42]
      node[pos=0.42, left=0.4mm, font=\scriptsize, text=creall] {$0.5$} (u.south east);
\draw[reHalf] (os) to[out=76,in=-14]
      node[pos=0.10, right=0.4mm, font=\scriptsize, text=creall] {$0.5$} (r.east);
\node[ubox, anchor=west] (Ur)  at (7.6,0.1)   {$U_r={\color{black}0}+{\color{red!80!black}1}\!\cdot\!{\color{cdyn}1}+{\color{creall}0.5}\!\cdot\!{\color{cdyn}5}=3.5$};
\node[ubox, anchor=west] (Uu)  at (7.6,-2.4)  {$U_u={\color{black}-1}-{\color{cdyn}1}+{\color{creall}0.5}\!\cdot\!{\color{cdyn}5}=0.5$};
\node[ubox, anchor=west] (Uuu) at (7.6,-4.3)  {$U_{uu}={\color{black}6}-{\color{cdyn}5}=1$};
\draw[dcon] (Ur.west)  to[bend right=8] (r.east);
\draw[dcon] (Uu.west)  -- (u.east);
\draw[dcon] (Uuu.west) -- (uu.east);
\draw[decorate, decoration={brace, amplitude=5pt, mirror}, black!70]
      ([yshift=-6mm]ll.south) -- ([yshift=-6mm]lu.south);
\node[ubox, below=13mm of $(ll)!0.5!(lu)$] {$U_l=U_{ll}=U_{lu}=0$};
\node[ubox, below=9mm of ul] {$U_{ul}=0$};
\begin{scope}[shift={(-5.3,-6.65)}]
  \node[rootN, scale=0.8] (lgp) at (0,0) {$r$};
  \node[right=1mm of lgp, font=\footnotesize] (lgpt) {: player};
  \node[right=6mm of lgpt, font=\footnotesize] (lgv) {$v_r$: private valuation};
  \node[right=6mm of lgv, font=\footnotesize, text=cdyn] (lgd) {$p_u,x_u$};
  \node[right=1mm of lgd, font=\footnotesize] (lgdt) {: game dynamic};
  \coordinate (a0) at ([xshift=7mm]lgdt.east);
  \draw[reHalf] (a0) -- ($(a0)+(0.8,0)$) node[midway, above=0.3mm, font=\scriptsize, text=creall] {$0.5$};
  \coordinate (a1) at ($(a0)+(1.25,0)$);
  \draw[reOne] (a1) -- ($(a1)+(0.8,0)$) node[midway, above=0.3mm, font=\scriptsize, text=red!80!black] {$1$};
  \node[anchor=west, font=\footnotesize] at ($(a1)+(0.95,0)$) {: profit reallocation};
\end{scope}
\end{tikzpicture}%
}
\caption{A running example of the tree-based data trading game. Best viewed in color.}
\label{fig:example}
\end{figure*}

\begin{example}
\label{ex:running}
\cref{fig:example} shows an instance with $N = 7$ players; the valuations, the PRM $\pi$, and all actions are given there, and the realized utilities follow from \cref{eq:utility-raw}. Two effects are worth highlighting. First, player $u$ has a \emph{negative} value $v_u = -1$, so its direct trade loses $v_u - p_u = -2$; yet the reallocated profit $\pi(uu,u)\,p_{uu} = 2.5$ from the successful downstream trade $uu$ leaves it with $U_u = 0.5$---so it's sometimes rational to buy the data even if the valuation is negative. Second, trade $lu$ fails despite $x_{lu}=1$, because the upstream trade $l$ fails ($x_l=0$), so $l, lu, ll$ all earn zero.
\end{example}

Under the strategy profile $(\tbmp, \tbmx)$, the expected utility of player $i$ is:
\begin{equation}
\label{eq:expected-utility}
u_i(\tbmp, \tbmx) \;\coloneqq\; \bbE_{\bmv \sim \calF}\!\big[U_i(h(\tbmp, \tbmx; \bmv); v_i)\big],
\end{equation}
where $h(\tbmp, \tbmx; \bmv)$ is the terminal history generated by $(\tbmp, \tbmx)$ and $\bmv$ according to game dynamic (\cref{def:game-evolve}).

How would strategic players behave under a given PRM? As the game is sequential and players hold private information, we adopt \emph{perfect Bayesian equilibrium} (PBE)~\citep{fudenberg1991game,MWG}---the standard refinement of Nash equilibrium for such games---as our solution concept: each player must act optimally at \emph{every} information set she may reach, under a belief about the others' valuations that is updated by Bayes' rule.

Player $i$'s \emph{information set} is the pair $(h_i, v_i)$ of her observed history and private valuation. A \emph{belief system} $\mu = \{\mu_i\}_{i \in \calI}$ specifies, at each information set, a distribution $\mu_i(\cdot \mid h_i, v_i) \in \Delta(\calV_{-i})$ over the other players' valuations $\bmv_{-i}$, where $\Delta(\calV_{-i})$ is the set of probability distributions on $\calV_{-i}$; this is player $i$'s posterior about $\bmv_{-i}$ upon observing $(h_i, v_i)$. Given a strategy profile $(\tbmp, \tbmx)$ and the belief $\mu_i$, player $i$'s conditional expected utility at $(h_i, v_i)$ is
\begin{equation}
\label{eq:cond-utility}
\bbU_i\big((\tbmp, \tbmx), \mu_i \mid h_i, v_i\big) \;\coloneqq\; \bbE_{\bmv_{-i} \sim \mu_i(\cdot \mid h_i, v_i)}\!\big[\, U_i\big(h(\tbmp, \tbmx;\, (v_i, \bmv_{-i})); v_i\big)\big],
\end{equation}
where $U_i$ is the realized utility~\eqref{eq:utility-raw} and $h(\cdot\,; \bmv)$ the terminal history generated by the game dynamics (\cref{def:game-evolve}).

\begin{definition}[Perfect Bayesian Equilibrium~\citep{fudenberg1991game,MWG}]
\label{def:pbe}
An assessment $\big((\tbmp^*, \tbmx^*), \tbmmu\big)$ of a strategy profile and a belief system is a \emph{perfect Bayesian equilibrium} if both of the following hold.
\begin{enumerate}[label=(\alph*),left=0em]
\item \textup{(Sequential rationality)} For every player $i \in \calI$, every information set $(h_i, v_i)$, and every deviation $(\tp_i, \tx_i)$ in player $i$'s own strategy components ($\tp_i$ only for the root, $\tx_i$ only for a leaf, both otherwise),
\[
\bbU_i\big((\tbmp^*, \tbmx^*), \tmu_i \mid h_i, v_i\big) \;\ge\; \bbU_i\big((\tp_i, \tbmp^*_{-i}, \tx_i, \tbmx^*_{-i}), \tmu_i \mid h_i, v_i\big).
\]
\item \textup{(Bayes-consistency)} At every information set $(h_i, v_i)$ reached with positive probability under $(\tbmp^*, \tbmx^*)$, the belief $\tmu_i(\cdot \mid h_i, v_i)$ equals the posterior distribution of $\bmv_{-i}$ induced by the prior $\calF$ and the profile $(\tbmp^*, \tbmx^*)$ given $(h_i, v_i)$.
\end{enumerate}
\end{definition}

\section{Equilibrium Computation}
\label{sec:algo}

Computing an equilibrium is far from trivial: a player's strategy maps the exponentially large space of histories to multi-dimensional actions. Our structural reduction (\cref{subsec:simplification}) shows that, under the Markov property, equilibrium strategies collapse to history-independent pricing and threshold rules and a seller's optimal pricing strategy separates across its children's subtrees. Motivated by this, we further derive an exact polynomial-time algorithm for finitely supported valuations (\cref{subsec:exact-finite}) and an FPTAS for continuous valuations (\cref{subsec:approx-continuous}).
For space limits, we defer all proofs to \cref{sec:omitted-proofs}.

\subsection{Equilibrium Simplification}
\label{subsec:simplification}

First, we show that the Markovian structure on trees yields a clean posterior, which provides a foundation for further strategy simplification.

\begin{fact}
\label{fact:simplify}
Let $\calF$ satisfy \cref{asp:markov}. For any strategy profile $(\tbmp, \tbmx)$ and any player $i$, the posterior distribution of descendant valuations given $(v_i, h_i)$ satisfies $\calF(\bmv_{\Desc(i)} \mid v_i, h_i) = \calF(\bmv_{\Desc(i)} \mid v_i)$.
\end{fact}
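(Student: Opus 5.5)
The plan is to show that the history $h_i$ is a (possibly randomized-free, deterministic) function of $v_i$ together with valuations of players \emph{outside} $\Subtree(i)$. Conditional independence under the Markov factorization then gives the claim.

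\textbf{Step 1: $h_i$ depends only on non-descendant valuations.} By \cref{def:game-evolve} and \cref{eq:hi}, $h_i$ consists of the prices $p_j$ and decisions $x_j$ for $j \in \Path(i)\setminus\{r\}$, plus $p_i$. Each price $p_j=\tp_{\Parent(j)}(h_{\Parent(j)}, j, v_{\Parent(j)})$ and each decision $x_j=\tx_j(h_j,v_j)$ is a function of the mover's own valuation and that mover's history. Inducting down the path from $r$, we get that $h_i = g(\bmv_{\Path(i)})$ for a deterministic map $g$ determined by $(\tbmp,\tbmx)$. In particular $h_i$ is measurable with respect to $\bmv_{\Path(i)}$. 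Note that $i$'s own decision $x_i$ is not part of $h_i$, and no valuation in $\Desc(i)$ enters $g$.

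\textbf{Step 2: conditional independence from the factorization.} Using the density in \cref{asp:markov}, group the factors as
\[
f(\bmv) = \Big[f_r(v_r)\!\!\prod_{k\notin \Subtree(i),\,k\neq r}\!\! f_k(v_k\mid v_{\Parent(k)})\Big]\, f_i(v_i\mid v_{\Parent(i)}) \prod_{k\in\Desc(i)} f_k(v_k\mid v_{\Parent(k)}).
\]
For $k \in \Desc(i)$ we have $\Parent(k)\in \Subtree(i)$, so the last product depends only on $(v_i,\bmv_{\Desc(i)})$. The remaining factors depend only on $\bmv_{\calI\setminus\Desc(i)}$. Hence $\bmv_{\Desc(i)}$ is conditionally independent of $\bmv_{\calI\setminus \Subtree(i)}$ given $v_i$, and its conditional density is $\prod_{k\in\Desc(i)} f_k(v_k\mid v_{\Parent(k)}) = \calF(\bmv_{\Desc(i)}\mid v_i)$.

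\textbf{Step 3: combine.} Since $h_i=g(\bmv_{\Path(i)})$ and $\Path(i)\subseteq \calI\setminus\Subtree(i)$, conditioning additionally on the event $\{h_i = \text{observed}\}$ is conditioning on a measurable set of $\bmv_{\calI\setminus\Subtree(i)}$. For any such positive-probability event $A$,
\[
\Pr(\bmv_{\Desc(i)}\in B\mid v_i, A)=\frac{\bbE[\one_A\,\Pr(\bmv_{\Desc(i)}\in B\mid v_i,\bmv_{\calI\setminus\Subtree(i)})\mid v_i]}{\Pr(A\mid v_i)}=\calF(B\mid v_i).
\]

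\textbf{Main obstacle.} The main subtlety is measure-theoretic. With continuous valuations, individual histories (for example, a specific price $p_i$) may have probability zero, so "conditioning on $h_i$" must be understood via regular conditional distributions, or as a belief specification off the equilibrium path. I would handle this by stating the result as an almost-sure identity of regular conditional distributions, which follows from the conditional independence in Step 2 plus the measurability in Step 1. For histories of probability zero, Bayes-consistency in \cref{def:pbe} imposes nothing, so we are free to choose beliefs satisfying the identity. It is this choice that the subsequent simplification uses.
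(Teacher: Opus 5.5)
Your proposal is correct and follows essentially the same route as the paper. Both arguments write $h_i$ as a deterministic function of $\bmv_{\Path(i)}$ and then read off $\bmv_{\Desc(i)} \perp \bmv_{\Path(i)} \mid v_i$ from the Markov factorization of the joint density. The differences are minor: you condition on all of $\bmv_{\calI\setminus\Subtree(i)}$, which spares you the paper's step of integrating out the remaining nodes $O$, and you treat regular conditional distributions and off-path beliefs explicitly, which the paper handles only implicitly.
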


Intuitively, $h_i = \phi_i(\bmv_{\Path(i)})$ is a deterministic function $\phi_i$ of the ancestor valuations $\bmv_{\Path(i)}$, which are conditionally independent of $\bmv_{\Desc(i)}$ given $v_i$ under \cref{asp:markov}.
\cref{fact:simplify} allows us to set $\tmu_i(\bmv_{\Desc(i)}|h_i,v_i) = \calF(\bmv_{\Desc(i)} \mid v_i)$ without loss of generality.
We now present the core simplification.

\begin{restatable}{lemma}{lemSimplifySell}
\label{lem:simplify-sell}
Let $\calF$ satisfy \cref{asp:markov}.
There exists a perfect Bayesian equilibrium $(\tbmp, \tbmx, \tbmmu)$ and functions $\{\tp^\dagger_j(v)\}_{j \in \calI \setminus \{r\}}$, $\{\tx^\dagger_i(v,p)\}_{i \in \calI \setminus \{r\}}$ such that:
\begin{equation}
\label{eq:lem:simplify-sell}
\begin{aligned}
\tp_i(h_i, j, v_i) &= \tp^\dagger_j(v_i), &&\forall i \notin \calL,\; j \in \Children(i),\; \forall v_i, h_i \\
\tx_i(h_i, v_i) &= \tx^\dagger_i(v_i, p_i), &&\forall i \neq r,\; \forall v_i, h_i
\end{aligned}
\end{equation}
\end{restatable}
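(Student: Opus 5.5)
We construct the equilibrium by backward induction over the tree, processing nodes in reverse topological order (leaves first). The two sequential decisions attached to each non-root node $j$ are handled together: the buying decision $x_j$, and the prices $p_k$ posted by $j$ to its children $k \in \Children(j)$. The inductive invariant for the subtree rooted at $j$ has three parts. First, all players in $\Desc(j)$ use history-independent rules of the form in \eqref{eq:lem:simplify-sell}. Second, the reallocated income that $j$ receives from $\Desc(j)$, conditional on trade $j$ occurring, is a function only of $v_j$:
\[
R_j(v_j) \coloneqq \bbE\Big[\sum_{m \in \Desc(j)} y_{m\mid j}\,\pi(m,j)\,p_m \,\Big|\, v_j\Big].
\]
Third, for every ancestor $a \in \Path(j)$, the analogous expected income from $\Subtree(j)$ is a function $G_{j,a}(v_{\Parent(j)})$ of the parent's valuation and of the price $p_j$ only.

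The invariant is maintained by two applications of \cref{fact:simplify} together with the Markov property in \cref{asp:markov}.

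\emph{Buyer step.} At information set $(h_j, v_j)$ with trade $\Parent(j)$ having occurred, $j$ compares $v_j - p_j + \max_{\bmp_{\Children(j)}}(\text{income from children})$ against $0$. By \cref{fact:simplify} the belief over $\bmv_{\Desc(j)}$ is $\calF(\cdot\mid v_j)$. The descendants' rules do not depend on $h_j$, by induction. Hence this comparison depends only on $(v_j, p_j)$, and we set $\tx^\dagger_j(v_j,p_j) = \one[v_j - p_j + V_j(v_j) \ge 0]$, breaking ties in favour of buying. If trade $\Parent(j)$ has not occurred, $j$'s utility is identically $0$, so any choice is optimal and we may still use $\tx^\dagger_j$.

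\emph{Seller step.} Seller $i$'s continuation utility decomposes additively over children. The child-$k$ term is $\pi(k,i)\,p_k\,\Pr[\tx^\dagger_k(v_k,p_k)=1\mid v_i]$ plus the reallocation that $i$ receives from $\Desc(k)$. By the invariant, this term is a function of $(v_i, p_k)$ alone. Sibling subtrees are conditionally independent given $v_i$, and $i$'s own term $v_i - p_i$ does not involve $\bmp_{\Children(i)}$. The maximisation over the price vector therefore separates into independent one-dimensional problems, and we choose $\tp^\dagger_k(v_i) \in \argmax_{p_k} \Phi_k(v_i,p_k)$, with a fixed measurable selection that does not depend on $h_i$. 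Because $y_{m\mid k}$ depends only on $v_k$, on descendant valuations, and on history-free rules, the third part of the invariant propagates upward.

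Beliefs off the equilibrium path are set to the Markov conditional, as permitted by \cref{fact:simplify}. On-path consistency holds automatically, since the strategies are deterministic functions of the valuations.

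The main obstacle is existence of the maximiser in the seller step. Each $\Phi_k$ may be discontinuous in $p_k$, because acceptance is a threshold event. The approach is to show that, for fixed $v_i$, the acceptance probability $p_k \mapsto \Pr[v_k + V_k(v_k) \ge p_k \mid v_i]$ is upper semicontinuous; the weak-inequality tie-break is what gives this. We also restrict $p_k$ to a compact interval, using the bounded $\calV_k$ and the budget feasibility of $\pi$, which bound $V_k$. Upper semicontinuity on a compact set then yields an argmax. The descendant income terms must be handled with the same care: we need them to be measurable and bounded as functions of $(v_i,p_k)$. If upper semicontinuity fails for those terms, the fallback is to select a limit point of near-optimal prices and verify optimality directly.
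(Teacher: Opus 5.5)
Your proposal is correct and takes essentially the same route as the paper's proof. Both run backward induction from the leaves and use \cref{fact:simplify} so that a player's payoff depends on $h_i$ only through $(p_i, y_{\Parent(i)})$. Both then separate the seller's problem child by child, and obtain a maximizer from upper semicontinuity (via the accept-when-indifferent convention) on a compact price interval, carrying measurability and boundedness in the inductive hypothesis.

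Your fallback is unnecessary. The downstream income term has the form $\mathbb{E}[\mathbf{1}\{s_k(v_k)\ge p_k\}\,D \mid v_i]$ with $D\ge 0$ bounded and independent of $p_k$, so it is upper semicontinuous by the same argument; this is exactly the paper's reverse-Fatou step.
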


\cref{lem:simplify-sell} establishes that, despite the rich history space, equilibrium pricing for each child depends only on the seller's valuation and the buying decision only on the buyer's valuation and the offered price: by \cref{fact:simplify}, $h_i$ carries no information about descendant valuations beyond $v_i$, so player $i$ does not benefit from more knowledge on $h_i$.
The proof is done by backward induction on the tree.

\begin{restatable}{corollary}{corSimplifyBuy}
\label{cor:simplify-buy}
Under \cref{asp:markov}, there exists a perfect Bayesian equilibrium $(\tbmp, \tbmx, \tbmmu)$ satisfying the conclusions of \cref{lem:simplify-sell} in which each buyer $i \in \calI \setminus \{r\}$ uses a threshold strategy: there exists a function $\ts^\dagger_i: \calV_i \to \bbR$ such that
\begin{equation}
\label{eq:threshold}
\tx^\dagger_i(v_i, p_i) =
\begin{cases}
1, & \text{if } \ts^\dagger_i(v_i) \ge p_i, \\
0, & \text{otherwise,}
\end{cases}
\end{equation}
where $\ts^\dagger_i(v_i)$ represents the maximum price player $i$ is willing to pay.
\end{restatable}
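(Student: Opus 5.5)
The plan is to start from the equilibrium $(\tbmp,\tbmx,\tbmmu)$ produced by \cref{lem:simplify-sell}, where buyer $i$'s decision depends only on $(v_i,p_i)$. I will then show that, at each buyer's information set, the gain from buying takes the form $\ts^\dagger_i(v_i) - p_i$ with $\ts^\dagger_i$ independent of $p_i$. If the buying strategy is replaced by the induced threshold rule, which breaks ties toward buying, we again get a PBE that still satisfies \cref{eq:lem:simplify-sell}.

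The first step is to decompose buyer $i$'s conditional utility at $(h_i,v_i)$ using \cref{eq:utility-raw}. If $x_i=0$, then $y_i=0$ and $y_j=0$ for all $j\in\Desc(i)$, so the utility is $0$. If $x_i=1$, the buyer reaches this information set only when $y_{\Parent(i)}=1$, so $y_i=1$ and $y_j=y_{j\mid i}$ for $j\in\Desc(i)$. The utility is then
\[
v_i - p_i + \bbE\Big[\sum_{j\in\Desc(i)} y_{j\mid i}\,\pi(j,i)\,p_j\Big].
\]
By \cref{fact:simplify}, the expectation is over $\bmv_{\Desc(i)}\sim\calF(\cdot\mid v_i)$.

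The second step is the key observation that under the strategies of \cref{lem:simplify-sell}, every descendant price $p_j=\tp^\dagger_j(v_{\Parent(j)})$ is independent of history. Each descendant decision $x_j=\tx^\dagger_j(v_j,p_j)$ likewise does not depend on $p_i$. Player $i$'s own pricing $\tp^\dagger_j(v_i)$ to her children depends only on $v_i$, and this holds both on and off path. Hence the continuation term $R_i(v_i)\coloneqq\bbE[\sum_{j\in\Desc(i)} y_{j\mid i}\pi(j,i)p_j\mid v_i]$ is a function of $v_i$ alone, and I define $\ts^\dagger_i(v_i)\coloneqq v_i+R_i(v_i)$.

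For the third step, sequential rationality requires $x_i=1$ whenever $\ts^\dagger_i(v_i)>p_i$ and $x_i=0$ whenever $\ts^\dagger_i(v_i)<p_i$. The original $\tx^\dagger_i$ can differ from the threshold rule \eqref{eq:threshold} only on the tie set $\{p_i=\ts^\dagger_i(v_i)\}$.

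The main obstacle is this tie-breaking modification. Changing $\tx_i$ at ties changes the distribution of outcomes faced by the ancestors of $i$, and could therefore destroy the optimality of their pricing $\tp^\dagger$. I would avoid this by not modifying an existing equilibrium at all. Instead, I would rerun the backward induction of \cref{lem:simplify-sell} from the leaves upward, fixing each buyer's rule to be the threshold rule with ties broken toward buying before any ancestor's pricing is optimized. This works because the induction constructs each level's best responses given the already-fixed descendant strategies, and it places no restriction on which best response is chosen at ties. Choosing ``buy at ties'' is a valid best response, so ancestors optimize against these fixed rules.

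Two details remain to check. First, with the Markov structure and a threshold rule, the seller's objective $p\cdot\Pr[\ts^\dagger_j(v_j)\ge p\mid v_i]$ plus downstream terms is upper semicontinuous in $p$, so a maximizer exists. Second, beliefs set via \cref{fact:simplify} remain Bayes-consistent. Together these give the claimed PBE.
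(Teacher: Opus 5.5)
Your proposal is correct and is essentially the paper's argument: the paper sets $\ts^\dagger_i(v_i) = v_i + \sum_{j \in \Children(i)} \max_{p_j \ge 0} G^{(j)}_i(v_i, p_j)$, which coincides with your $v_i + R_i(v_i)$ once $i$ prices optimally, and reads the threshold rule directly off the backward induction of \cref{lem:simplify-sell}. That induction already selects $x_i = \one\{\Phi(v_i) \ge 0\}$, i.e., buys at ties, which is exactly the rerun you describe.

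One harmless slip: information sets with $y_{\Parent(i)} = 0$ are also reached, since every non-root player moves. There $i$'s payoff is $0$ for either action, so the threshold rule remains optimal.
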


By \cref{lem:simplify-sell,cor:simplify-buy}, a perfect Bayesian equilibrium can be characterized by a \emph{simplified strategy profile} $(\tbmp, \tbms) = (\{\tp_j\}_{j \in \calI \setminus \{r\}},\; \{\ts_i\}_{i \in \calI \setminus \{r\}})$, where $\tp_j: \calV_{\Parent(j)} \to \bbR_+$ is the pricing function and $\ts_i: \calV_i \to \bbR$ is the threshold function. We denote $(\tbmp^*, \tbms^*)$ as an equilibrium, if, by transforming $(\tbmp^*, \tbms^*)$ into original strategy profile $(\tbmp^*, \tbmx^*)$ as in \cref{eq:lem:simplify-sell,eq:threshold}, there is $\tbmmu^*$ such that $((\tbmp^*, \tbmx^*), \tbmmu^*)$ is a perfect Bayesian Equilibrium as in \cref{def:pbe}.

The key structural property of the tree model is the following subtree decomposition. For each non-leaf player $i \in \calI \setminus \calL$, each child $j \in \Children(i)$, and a simplified strategy profile $(\tbmp, \tbms)$, we define the \emph{gain function} for player $i$ selling to child $j$ at price $p_j$:
\begin{equation}
\label{eq:gain-child}
G^{(j)}_i(v_i, p_j) \;\coloneqq\; \bbE_{\bmv_{\Subtree(j)} \sim \calF(\cdot|v_i)} \bigg[ x_j \cdot \bigg(\pi(j,i)\cdot p_j + \sum_{m \in \Desc(j)} \pi(m,i) \cdot p_m \cdot y_{m \mid j} \bigg)\bigg],
\end{equation}
where, given the profile $(\tbmp, \tbms)$, $x_j = \one\{\ts_j(v_j) \ge p_j\}$, $p_m = \tp_m(v_{\Parent(m)})$, and $y_{m|j}$ (\cref{eq:y-conditional}) are determined within $\Subtree(j)$. Note that $G^{(j)}_i$ only depends on $(\tbmp, \tbms)$ restricted to $\Subtree(j)$; this is because the Markov property (\cref{asp:markov}) makes sibling subtrees conditionally independent given $v_i$.

\begin{restatable}{lemma}{lemSubtreeDecomp}
\label{lem:subtree-decomp}
Under \cref{asp:markov}, a simplified strategy profile $(\tbmp^*, \tbms^*)$ is a perfect Bayesian equilibrium if, for every non-leaf player $i \in \calI \setminus \calL$ and every valuation $v_i \in \calV_i$, the following two conditions hold:
\begin{enumerate}[left=0em]
\item \textup{(Pricing decomposition)} The price offered to each child $j \in \Children(i)$ maximizes the corresponding gain:
\begin{equation}
\label{eq:pricing-decomp}
\tp^*_j(v_i) \in \argmax_{p_j \ge 0}\; G^{(j)}_i(v_i, p_j), \quad \forall j \in \Children(i).
\end{equation}
\item \textup{(Threshold decomposition)} The threshold function decomposes as:
\begin{equation}
\label{eq:threshold-decomp}
\ts^*_i(v_i) = v_i + \sum_{j \in \Children(i)} \max_{p_j \ge 0}\; G^{(j)}_i(v_i, p_j).
\end{equation}
\end{enumerate}
\end{restatable}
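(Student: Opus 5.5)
We argue that the simplified profile, lifted to the original strategy space via \cref{eq:lem:simplify-sell,eq:threshold} and paired with the beliefs $\tmu_i(\bmv_{\Desc(i)}\mid h_i,v_i)=\calF(\bmv_{\Desc(i)}\mid v_i)$ justified by \cref{fact:simplify}, satisfies the two conditions of \cref{def:pbe}. Bayes-consistency for descendant valuations follows directly from \cref{fact:simplify}. For the other coordinates of $\bmv_{-i}$ (ancestors and non-descendants), we take beliefs equal to the Bayesian posterior at histories reached with positive probability and arbitrary elsewhere. These coordinates will turn out not to enter player $i$'s continuation payoff. The substantive work is sequential rationality, which we verify by first rewriting player $i$'s continuation utility at $(h_i,v_i)$.

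\textbf{Step 1 (utility decomposition).} Fix $(h_i,v_i)$ and let the others follow the profile. By \eqref{eq:utility-raw}, and because $y_i$ and $p_i$ are already determined once $i$ decides $x_i$, we have for every $m\in\Desc(i)$ that $y_m = y_i\, y_{m\mid i}$. Grouping the descendants by the child $j\in\Children(i)$ whose subtree contains them, and using $y_{m\mid i}=x_j\,y_{m\mid j}$, gives
\[
U_i = y_i\Big( v_i - p_i + \sum_{j\in\Children(i)} x_j\big(\pi(j,i)p_j + \textstyle\sum_{m\in\Desc(j)}\pi(m,i)p_m y_{m\mid j}\big)\Big).
\]
Player $i$ controls $x_i$ and the prices $(p_j)_{j}$. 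Everything downstream depends only on $\bmv_{\Subtree(j)}$ and the simplified strategies restricted to $\Subtree(j)$, and these strategies are history-independent. We take the conditional expectation under $\calF(\cdot\mid v_i)$. By \cref{asp:markov} the subtrees $\Subtree(j)$ are conditionally independent given $v_i$, so the bracket has expectation $v_i - p_i + \sum_j G^{(j)}_i(v_i,p_j)$. The ancestor valuations only affect $y_i$ through $x_i$ and the upstream decisions, which $i$'s deviation cannot change. A deviation by $i$ to a history-dependent or randomized price vector therefore yields an expected value that is pointwise dominated by $\sum_j \max_{p_j} G^{(j)}_i(v_i,p_j)$.

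\textbf{Step 2 (sequential rationality).} When $y_i=1$ or $i=r$, the pricing part of the continuation is maximized coordinatewise, and condition \eqref{eq:pricing-decomp} gives exactly this. Given optimal pricing, buying yields $v_i - p_i + \sum_j\max G^{(j)}_i$ and not buying yields $0$. Buying is therefore optimal iff $p_i\le \ts^*_i(v_i)$ under \eqref{eq:threshold-decomp}, which is precisely the threshold rule \eqref{eq:threshold}; ties are broken toward buying, consistent with ``$\ge$''. Leaves are the special case with no children, so $\ts^*_i(v_i)=v_i$. Because nothing in this argument depends on $h_i$, optimality holds at every information set, including off-path ones. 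That is the PBE requirement.

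\textbf{Main obstacle.} The delicate point is justifying the separation in Step 1 under off-path histories and deviations. Deviations are joint over the buy decision and the price vector, and at off-path information sets the beliefs are arbitrary. We handle this with two observations. First, \cref{fact:simplify} fixes the descendant belief regardless of $h_i$. Second, the continuation play of descendants is history-independent, so the price $p_j$ chosen by $i$ influences only $\Subtree(j)$ through $x_j$ and the downstream prices. A joint deviation is then bounded by its best deterministic coordinatewise choice. We also need the maxima in \eqref{eq:pricing-decomp} to be attained; we take this as part of the hypothesis, since the statement writes $\argmax$ and $\max$.
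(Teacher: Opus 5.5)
Your proposal is correct and follows essentially the same route as the paper's proof. You factor $y_m = y_i\,y_{m\mid i}$ with $y_{m\mid i}=x_j\,y_{m\mid j}$ and partition $\Desc(i)$ into the child subtrees, so the conditional expected utility becomes $x_i\,y_{\Parent(i)}\big[(v_i-p_i)+\sum_j G^{(j)}_i(v_i,p_j)\big]$; you then maximize each $G^{(j)}_i$ separately and read off the threshold rule, with \cref{fact:simplify} supplying history-independent descendant beliefs at every information set, and your explicit treatment of joint deviations, non-descendant beliefs and attainment of the maxima only spells out points the paper handles tersely.
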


We write the two conditions~\eqref{eq:pricing-decomp}--\eqref{eq:threshold-decomp} compactly as
\begin{equation}
\label{SE:transform}
\tp^*_j(v_i) \in \argmax_{p_j \ge 0}\; G^{(j)}_i(v_i, p_j), \qquad \ts^*_i(v_i) = v_i + \sum_{j \in \Children(i)} \max_{p_j \ge 0}\; G^{(j)}_i(v_i, p_j),
\end{equation}
for all $i \in \calI \setminus \calL$ and $j \in \Children(i)$.

The proof, deferred to \cref{sec:omitted-proofs}, decomposes player $i$'s conditional expected utility into the additive per-child gains $\{G^{(j)}_i\}_{j \in \Children(i)}$ and shows that conditions~\eqref{eq:pricing-decomp}--\eqref{eq:threshold-decomp} make the prescribed actions maximize player $i$'s conditional expected utility at every information set.
Therefore, computing a perfect Bayesian equilibrium reduces to finding a simplified strategy profile $(\tbmp^*, \tbms^*)$ satisfying~\eqref{SE:transform}.

\subsection{Polynomial Algorithm for Finite Valuation Support}
\label{subsec:exact-finite}

When valuations take finitely many values, strategies admit finite tabular representations and we show that exact equilibria can be computed by a nested dynamic program.

\begin{assumption}[Finite Valuation Support]
\label{asp:finite}
$\calF$ has finite valuation support if $\Pr[\forall i \in \calI,\; v_i \in W] = 1$ for $\bmv \sim \calF$, where $W = \{w_1, \ldots, w_S\} \subseteq \bbR$.
\end{assumption}

Under \cref{asp:finite}, strategies have finite representations: $p^*(i,k) \coloneqq \tp^*_i(w_k)$ and $s^*(i,k) \coloneqq \ts^*_i(w_k)$ for $i \in \calI \setminus \{r\},\; k \in [S]$. The conditional distribution is represented by $Q = \{q_{i,k,\ell}\}$ where $q_{i,k,\ell} \coloneqq f_i(v_i = w_\ell \mid v_{\Parent(i)} = w_k)$. We denote the finite-support model as $(\calT, \pi, Q)$.

\begin{restatable}{theorem}{thmAlgoExact}
\label{thm:algo-exact}
Let \cref{asp:markov,asp:finite} hold. Given the model $(\calT, \pi, Q)$, there exists an algorithm (\cref{alg:exact}) that computes an exact perfect Bayesian equilibrium in $O\big(N d\, S^4\big)$ time, where $S = |W|$ is the support size and $d \coloneqq \max_{i \in \calI} |\Path(i)|$ is the depth of $\calT$.
\end{restatable}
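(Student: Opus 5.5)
The plan is a bottom-up dynamic program over $\calT$ that builds a simplified profile $(\tbmp^*,\tbms^*)$ satisfying~\eqref{SE:transform}. By \cref{lem:subtree-decomp}, such a profile is a perfect Bayesian equilibrium, so correctness reduces to computing these objects exactly.

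The difficulty is that $G^{(j)}_i$ in~\eqref{eq:gain-child} involves reallocations from all of $\Subtree(j)$ to $i$. Moreover, the same subtree must later be evaluated for every strict ancestor of $i$. I would therefore store, for every non-root node $j$, every ancestor $a\in\Path(j)$ and every $\ell\in[S]$, the quantity
\[
R_a(j,\ell)\coloneqq \bbE\Big[\textstyle\sum_{m\in\Desc(j)}\pi(m,a)\,p_m\,y_{m\mid j}\;\Big|\;v_j=w_\ell\Big].
\]
This is the expected reallocation to $a$ from trades strictly below $j$, given that trade $j$ occurred. By \cref{asp:markov} and the threshold form (\cref{cor:simplify-buy}), it satisfies the recursion
\[
R_a(j,\ell)=\sum_{c\in\Children(j)}\sum_{\ell'\in[S]} q_{c,\ell,\ell'}\,\one\{s^*(c,\ell')\ge p^*(c,\ell)\}\big(\pi(c,a)\,p^*(c,\ell)+R_a(c,\ell')\big).
\]
Leaves have $R_a\equiv 0$. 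The gain then reads
\[
G^{(j)}_i(w_k,p)=\sum_{\ell} q_{j,k,\ell}\,\one\{s^*(j,\ell)\ge p\}\big(\pi(j,i)\,p+R_i(j,\ell)\big).
\]

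Processing nodes in post-order, I handle each node $j$ as follows.
\begin{enumerate*}[label=(\roman*)]
\item All children $c$ of $j$ are already finished, i.e.\ $p^*(c,\cdot)$, $s^*(c,\cdot)$ and $R_a(c,\cdot)$ are known for every $a\in\Path(c)$ (which includes $j$). Compute $s^*(j,\ell)=w_\ell+\sum_{c}\max_p G^{(c)}_j(w_\ell,p)$, reusing the maxima found when pricing $c$.
\item For every $k$, compute $p^*(j,k)\in\argmax_{p\ge0}G^{(j)}_{\Parent(j)}(w_k,p)$.
\item Compute $R_a(j,\cdot)$ for all $a\in\Path(j)$ by the recursion above.
\end{enumerate*}
This order is consistent, because pricing $j$ needs only $s^*(j,\cdot)$ and $R_{\Parent(j)}(j,\cdot)$.

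The main step, and the expected obstacle, is reducing the continuous maximization over $p\ge0$ to finitely many candidates. As a function of $p$, the acceptance set $\{\ell:s^*(j,\ell)\ge p\}$ is piecewise constant and changes only at the values $s^*(j,\ell)$. On each piece, $G^{(j)}_i$ is affine in $p$ with nonnegative slope, since $\pi\ge0$. Thanks to the weak inequality in~\eqref{eq:threshold}, each piece is closed on the right at a threshold value. Hence the supremum is attained within the candidate set $\{0\}\cup\{s^*(j,\ell):s^*(j,\ell)\ge0\}$, where $p=0$ covers the case in which every nonnegative threshold is already accepted. The one piece beyond all thresholds yields gain $0$, and this value is matched by any candidate that is accepted with probability zero. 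I would verify this tie and boundary handling carefully, since it is what makes the argmax exist and be computed exactly.

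For the running time: pricing costs $S$ parent values times at most $S+1$ candidates times $S$ terms, i.e.\ $O(S^3)$ per node. The tables $R_a(j,\cdot)$ cost $O(S^2)$ per (ancestor, child edge) pair, hence $O(dS^2)$ summed over the children of $j$. Both bounds lie within $O(NdS^4)$; the stated bound leaves room for a naive, unsorted evaluation of candidates. Finally, I would invoke \cref{lem:subtree-decomp} together with \cref{fact:simplify} to conclude that the computed tables, expanded via~\eqref{eq:lem:simplify-sell} and~\eqref{eq:threshold}, form an exact perfect Bayesian equilibrium.
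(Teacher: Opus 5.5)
Your proposal is correct, but it evaluates the gain functions by a different route from the paper's. The paper computes each value $G^{(j)}_i(w_k,p)$ from scratch with a forward dynamic program. That program pushes the probabilities $\Pr[Y(m,\ell)]$ (``$v_m=w_\ell$ and every player from $j$ down to $m$ buys'') top-down through $\Subtree(j)$ at cost $O(|\Subtree(j)|S^2)$. It is repeated for all $S$ seller valuations and all $O(S)$ candidate prices, and this repetition is exactly where the $S^4$ and the factor $\sum_{j\neq r}|\Subtree(j)|\le Nd$ in the stated bound come from.

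You instead observe that the downstream part of the gain, conditioned on $v_j$, depends neither on $v_i$ nor on $p_j$. You therefore memoize it bottom-up as the per-ancestor tables $R_a(j,\ell)$, which is the backward (adjoint) form of the same linear propagation. Your recursion for $R_a$ is right, by the Markov property together with $y_{m\mid j}=x_c\,y_{m\mid c}$ for $m\in\Subtree(c)$.

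What your route buys:
\begin{itemize}
\item an $O(S)$ closed form for $G^{(j)}_i(w_k,\cdot)$, which makes the piecewise-affine, nondecreasing-on-pieces structure transparent (your candidate-price reduction is otherwise the same as the paper's);
\item a total running time of $O(NS^3+NdS^2)$, strictly inside the claimed $O(NdS^4)$;
\item at the cost of storing $O(NdS)$ numbers.
\end{itemize}
The paper's forward DP is a self-contained per-query routine with no per-ancestor bookkeeping, and it also serves as the exact evaluator of transaction volume and welfare in the experiments. Both arguments conclude correctness through the same appeal to \cref{lem:subtree-decomp}.

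Two small things to tidy. First, as listed, step (ii) uses $R_{\Parent(j)}(j,\cdot)$, which is only produced in step (iii). Since (iii) depends only on the children's data, just perform it before (ii). Second, the candidate $p=0$ is needed only when no $s^*(j,\ell)$ is nonnegative. Otherwise $0$ lies in the same piece as the smallest nonnegative threshold and is dominated by it. Keeping $0$ as a candidate is harmless, but your stated reason for it is not the right one.
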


\noindent\emph{Proof Sketch.}
\cref{alg:exact} computes the simplified equilibrium by backward induction over the tree, from the leaves to the root. Suppose the thresholds and prices of every node in $\Subtree(j)$ are already known for each child $j$ of a node $i$. By the subtree decomposition (\cref{lem:subtree-decomp}), player $i$'s problem separates across children, so its price to child $j$ maximizes the per-child gain $G^{(j)}_i(w_k,\cdot)$ independently. We further observe that the optimal price need only be searched over the finite candidate set $\{s^*(j,\ell):\ell\in[S]\}$ of buyer thresholds: between two consecutive thresholds the set of accepting buyers is fixed while raising the price weakly increases revenue, so an optimum can be attained at some threshold. Hence $i$'s strategy follows once $G^{(j)}_i(w_k,p)$ is evaluated for the $O(S)$ candidate prices and $O(S)$ valuations.

The core difficulty is evaluating $G^{(j)}_i(w_k,p)$: by~\eqref{eq:gain-child} it is an expectation of reallocated payments over the \emph{entire} profile $\bmv_{\Subtree(j)}$, and summing naively over all profiles is exponential in $|\Subtree(j)|$. We overcome this with a forward dynamic program (\cref{alg:forward-dp}) that decomposes the expectation into a weighted sum of the $O(|\Subtree(j)|\,S)$ elementary events $Y(m,\ell) = \{v_m = w_\ell \text{ and every player from } j \text{ down to } m \text{ is willing to buy}\}$, whose probabilities propagate top-down through $\Subtree(j)$ via the Markov transition $Q$ in $O(|\Subtree(j)|\,S^2)$ time. Repeating over the $O(S)$ valuations and $O(S)$ candidate prices of $i$ gives $O(|\Subtree(j)|\,S^4)$ per child; summing over all edges and using $\sum_{j\neq r}|\Subtree(j)| = \sum_m \mathrm{depth}(m) \le Nd$ yields the $O(Nd\,S^4)$ bound.\qed

\vspace{1em}

\cref{alg:exact} makes PRMs evaluable by computing and evaluating the exact equilibria under PRMs in polynomial time.
In \cref{sec:exp}, we further use \cref{alg:exact} for empirical evaluation of PRMs.


\begin{algorithm}[t]
\caption{Exact Perfect Bayesian Equilibrium on Trees}
\label{alg:exact}
\KwIn{Tree $\calT$ with node set $\calI$, PRM $\pi$, transition probabilities $Q = \{q_{i,k,\ell}\}$, support $W = \{w_1, \ldots, w_S\}$}
\KwOut{Equilibrium strategies $\{s^*(i,k)\}_{i \in \calI \setminus \{r\},\, k \in [S]}$, $\{p^*(j,k)\}_{j \in \calI \setminus \{r\},\, k \in [S]}$}
\For{each leaf $i \in \calL$, $k \in [S]$}{
  $s^*(i,k) \gets w_k$\;
}
\For{each non-leaf $i \in \calI \setminus \calL$ in bottom-up order}{
  \For{$k = 1, \ldots, S$}{
    \For{each $j \in \Children(i)$}{
      $\calP_j \gets \{\ell \in [S] : s^*(j, \ell) \ge 0\}$\;
      \eIf{$\calP_j = \varnothing$}{
        $p^*(j, k) \gets 0$;\quad $G^*_j \gets 0$\;
      }{
        \For{each $k' \in \calP_j$}{
          $G(k') \gets \textsc{ForwardDP}(\calT, \pi, Q, j, i, k, s^*(j, k'))$\;
        }
        $l^* \gets \argmax_{k' \in \calP_j} G(k')$\;
        $p^*(j, k) \gets s^*(j, l^*)$;\quad $G^*_j \gets G(l^*)$\;
      }
    }
    $s^*(i, k) \gets w_k + \sum_{j \in \Children(i)} G^*_j$\;
  }
}
\textbf{return}: $\{s^*(i,k)\}_{i \in \calI \setminus \{r\},\, k \in [S]}$, $\{p^*(j,k)\}_{j \in \calI \setminus \{r\},\, k \in [S]}$
\end{algorithm}

\subsection{FPTAS for Continuous Valuation Support}
\label{subsec:approx-continuous}

With continuous support, exact equilibria generally lack a finite representation, so we seek an $\varepsilon$-approximate equilibrium---relaxing the conditions of \cref{SE:transform} by a tolerance $\varepsilon$.

\begin{definition}[$\varepsilon$-Approximate Equilibrium]
\label{def:eps-equilibrium}
Given a profile $(\tbmp, \tbms)$, let $\ts^\dagger_i(v_i) \coloneqq v_i + \sum_{j \in \Children(i)} \max_{p_j \ge 0} G^{(j)}_i(v_i, p_j)$ be the \emph{threshold} of player $i$ against the others' strategies in $(\tbmp, \tbms)$. The profile $(\tbmp, \tbms)$ is an \emph{$\varepsilon$-approximate equilibrium} if, for every player $i$ and every $v_i \in \calV_i$, both of the following hold:
\begin{enumerate}[label=(\alph*),left=0em]
\item \textup{(Pricing near-optimality)} For all $j \in \Children(i)$,
\[
G^{(j)}_i(v_i, \tp_j(v_i)) - \max_{p_j \ge 0} G^{(j)}_i(v_i, p_j) \;\ge\; -\varepsilon \cdot \max\{1, \ts^\dagger_i(v_i)\}.
\]
\item \textup{(Threshold accuracy)} $\;|\ts_i(v_i) - \ts^\dagger_i(v_i)| \le \varepsilon \cdot \max\{1, \ts^\dagger_i(v_i)\}$.
\end{enumerate}
At $\varepsilon = 0$ both conditions reduce to~\eqref{SE:transform}, recovering an exact perfect Bayesian equilibrium (\cref{lem:subtree-decomp}); $\varepsilon$ thus measures how far each player is from best-responding.
\end{definition}

\begin{definition}[Log-Lipschitzness]
\label{asp:lipschitz}
Under \cref{asp:markov} with each valuation supported on $[L_0, H_0]$, we say $f_i$ is $K_0$-log-Lipschitz if, for each $i \in \calI \setminus \{r\}$ and all $v_i, a, b \in [L_0, H_0]$:
$|\log f_i(v_i \mid a) - \log f_i(v_i \mid b)| \le K_0\,|a - b|$.
\end{definition}

\begin{restatable}{theorem}{thmAlgoApprox}
\label{thm:algo-approx}
Under \cref{asp:markov}, assume $f_i$ is $K_0$-log-Lipschitz for all $i \in \calI \setminus \{r\}$, and each $v_i$ has support $[L_0, H_0] \subseteq \bbR$. Then there exists an algorithm (\cref{alg:approx}) that outputs an $\varepsilon$-approximate equilibrium in $\poly(N, \varepsilon^{-1}, H_0 - L_0, K_0)$ time.
\end{restatable}

The algorithm (\cref{alg:approx}, deferred to \cref{sec:omitted-proofs}) discretizes the continuous supports into a finite grid, forming a discretized model with distribution $\hat\calF$, and computes its exact equilibrium via \cref{alg:exact}. The proof, also in \cref{sec:omitted-proofs}, shows that exact equilibria of the discretized model are approximate equilibria of the original continuous model, using the log-Lipschitz condition to bound the discretization error.
The result indicates that the discretized model is a provably faithful approximation for the continuous one.
\section{Equilibrium Comparison between PRMs}
\label{sec:theoretic}

This section establishes that profit reallocation mechanisms on trees expand equilibrium trade events compared with the baseline mechanism. As there may be multiple equilibria, we fix the simplified perfect Bayesian equilibrium $(\tbmp^*, \tbms^*)$ satisfying~\eqref{SE:transform}, constructed by the backward induction of \cref{lem:simplify-sell,cor:simplify-buy}.
Ties in each $\argmax$ are resolved by choosing the \emph{smallest} element. This tie-breaking rule is independent of $\pi$, thus does not provide advantages to some specific PRM. We impose two additional assumptions:

\begin{assumption}[Positivity]
\label{asp:posi}
$\Pr_{\bmv \sim \calF}[v_i > 0] = 1$ for all $i \in \calI$.
\end{assumption}

\begin{assumption}[Homogeneity]
\label{asp:homo}
Under \cref{asp:markov}, for all $\alpha, w, v > 0$ and $i \in \calI \setminus \{r\}$: $F_i(\alpha v \mid \alpha w) = F_i(v \mid w)$.
\end{assumption}

Positivity says that a data product is always worth more than the cost of producing it. Homogeneity is a scaling property common in economic modeling~\citep[\S6.2]{AGT:nisan_algorithmic_2007}, \citep[\S3.2.3]{DSGE}; it holds whenever the value of the output is proportional to the value of the input.

Given a simplified strategy profile $(\tbmp, \tbms)$, we define the \emph{trade event} for trade $j \in \calI \setminus \{r\}$ as
\begin{equation}
\label{eq:trade-event}
E_j(\tbmp, \tbms) \;\coloneqq\; \big\{\bmv \in \calV : \ts_j(v_j) \ge \tp_j(v_{\Parent(j)})\big\},
\end{equation}
\ie, the set of valuation profiles under which player $j$ is willing to buy at the price her parent offers.


\subsection{Local Comparison between Mechanisms}
\label{subsec:partial}

We first compare two PRMs that reallocate profit differently along a single trade, where one rewards the upstream seller more from downstream resale than the other. This relation is inherently asymmetric, so we phrase it as \emph{local dominance} rather than a symmetric difference.

\begin{definition}[Local Dominance]
\label{def:local-diff}
For a non-root player $j \in \calI \setminus \{r\}$, let $i \coloneqq \Parent(j)$. A PRM $\pi^1$ \emph{locally dominates} another PRM $\pi^2$ \emph{at trade $j$} if:
\begin{enumerate}[label=(\alph*),left=0em]
\item $\pi^1(m, k) = \pi^2(m, k)$ for all $m \in \calI \setminus \{r\}$ and all $k \in \calI \setminus \{i\}$.
\item $\pi^1(j, i) \le \pi^2(j, i)$: player $i$'s one-shot profit from trade $j$ is weakly smaller under $\pi^1$.
\item $\pi^1(m, i) \ge \pi^2(m, i)$ for all $m \in \Desc(j)$: player $i$'s reallocated profit from downstream trades is weakly larger under $\pi^1$.
\item $\pi^1(m, k) = \pi^2(m, k) = \pi^0(m, k)$ for all $k \in \Path(i)$ and all $m \in \calI \setminus \{r\}$: both mechanisms use baseline mechanisms for all strict ancestors of $i$.
\item $\pi^1(m, i) = \pi^2(m, i)$ for all $m \notin \Subtree(j)$: player $i$'s reallocation is unchanged outside $\Subtree(j)$.
\end{enumerate}
That is, relative to $\pi^2$, the dominating mechanism $\pi^1$ shifts $i$'s reward away from its one-shot payment on trade $j$ toward the downstream resale within $\Subtree(j)$, leaving everything else fixed.
\end{definition}

\begin{restatable}{theorem}{thmPartial}
\label{thm:partial}
Let $\calF$ satisfy \cref{asp:markov,asp:posi,asp:homo}. Let $\pi^1$ locally dominate $\pi^2$ at trade $j$ (\cref{def:local-diff}). Let $(\tbmp^{k,*}, \tbms^{k,*})$ be the equilibrium under $\pi^k$ for $k \in \{1,2\}$. Then:
\begin{enumerate}
\item $E_m(\tbmp^{1,*}, \tbms^{1,*}) = E_m(\tbmp^{2,*}, \tbms^{2,*})$ for all $m \in \calI \setminus \{r\}$ with $m \neq j$.
\item $E_j(\tbmp^{2,*}, \tbms^{2,*}) \subseteq E_j(\tbmp^{1,*}, \tbms^{1,*})$.
\end{enumerate}
\end{restatable}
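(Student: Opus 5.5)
The plan is to argue by backward induction on the tree. In three regions of the tree the equilibrium $(\tbmp^{k,*},\tbms^{k,*})$ either coincides across $k\in\{1,2\}$ or differs in a controlled way. Homogeneity (\cref{asp:homo}) will make every threshold linear in the owner's valuation. That linearity is what makes the trade events invariant where they should be, and monotone at $j$.

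\emph{Step 1: scale-equivariance.} Using \cref{asp:posi,asp:homo}, I will show by bottom-up induction that, under either PRM, for every non-root $m$ and all $\alpha>0$ we have $\ts^*_m(\alpha v)=\alpha\,\ts^*_m(v)$ and $\tp^*_m(\alpha w)=\alpha\,\tp^*_m(w)$.
\begin{itemize}
\item Leaves satisfy this trivially, since $\ts^*_m(v)=v$.
\item For the inductive step, homogeneity gives $G^{(j)}_i(\alpha v_i,\alpha p)=\alpha\,G^{(j)}_i(v_i,p)$. The substitution $v_{\Subtree(j)}\mapsto\alpha v_{\Subtree(j)}$ preserves the law conditional on $v_i$, and every price and threshold below scales by $\alpha$.
\item The smallest-argmax tie-breaking rule commutes with scaling, so the prices scale as well. \Cref{eq:threshold-decomp} then gives $\ts^*_m(v)=c_m v$ with a constant $c_m\ge 1>0$, because every $\max G\ge 0$.
\end{itemize}

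\emph{Step 2: Part 1 (unchanged events).} I split the nodes $m\neq j$ into three groups.
\begin{itemize}
\item \emph{Nodes in $\Subtree(j)$, and subtrees of $i$'s other children or off the path.} The utilities here involve only $\pi(\cdot,k)$ with $k\neq i$, which are unchanged by \cref{def:local-diff}(a). So by backward induction their prices and thresholds are identical under $\pi^1$ and $\pi^2$.
\item \emph{Player $i$.} Its gains $G^{(j')}_i$ for $j'\neq j$ are unchanged by (e). Only $G^{(j)}_i$ differs, so $\ts^{*}_i$ changes only through its slope $c_i$.
\item \emph{Ancestors of $i$.} Condition (d) says these use the baseline mechanism. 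Hence the parent $k$ of $i$ has gain $G^{(i)}_k(v_k,p)=p\Pr[c_i v_i\ge p\mid v_k]$, and with $p=c_i t$ this equals $c_i\, t\Pr[v_i\ge t\mid v_k]$. The smallest maximizer is therefore $c_i\,t^\star(v_k)$ with $t^\star$ independent of $c_i$. It follows that $E_i=\{v_i\ge t^\star(v_k)\}$ is the same under both PRMs, and $\ts^*_k$ is again linear with a possibly new slope.
\end{itemize}
Iterating this argument up $\Path(i)$ shows that every event $E_m$ with $m\in\Path(i)\cup\{i\}$ is unchanged. Events in sibling subtrees are unchanged because their strategies are.

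\emph{Step 3: Part 2 (expansion at $j$), the main obstacle.} Since $\ts^*_j=c_j v_j$ is common to both PRMs, it suffices to show $\tp^{1,*}_j(v_i)\le\tp^{2,*}_j(v_i)$. Writing $p=c_j t$, the gain has the form
\[
G^{k}(t)=a_k c_j\, t\,P(t)+\beta_k M(t),\qquad P(t)=\Pr[v_j\ge t\mid v_i],\quad M(t)=\bbE[v_j\one\{v_j\ge t\}\mid v_i],
\]
where $a_k=\pi^k(j,i)$. To get this form I use that, conditional on $x_j=1$ and $v_j$, the downstream reallocated revenue does not depend on $p_j$. By the scaling of Step 1 its conditional expectation is $\beta_k v_j$ with $\beta_k=\sum_{m\in\Desc(j)}\pi^k(m,i)\gamma_m$, for constants $\gamma_m\ge 0$ that are common to $k=1,2$. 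Conditions (b) and (c) give $a_1\le a_2$ and $\beta_1\ge\beta_2$.
\begin{itemize}
\item If $a_1=0$, then $G^1$ is nonincreasing in $t$, and the smallest maximizer is the minimal feasible price. This is trivially no larger than $\tp^{2,*}_j$.
\item Otherwise I maximize $G^k/a_k = c_j tP(t)+(\beta_k/a_k)M(t)$, where $\beta_1/a_1\ge\beta_2/a_2$. Because $M$ is nonincreasing in $t$, this objective has decreasing differences in $(t,\beta/a)$.
\end{itemize}
A revealed-preference argument then finishes. Suppose $t_1>t_2$ are the smallest maximizers. Summing $G^2(t_2)\ge G^2(t_1)$ and $G^1(t_1)>G^1(t_2)$, the latter strict by the smallest-tie rule, contradicts monotonicity of $M$.

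The delicate points are the reduction of the downstream term to exactly $\beta v_j$, which needs the full equivariance of Step 1, and the careful handling of ties. Once these are settled, $\tp^{1,*}_j\le\tp^{2,*}_j$ with the same $\ts^*_j$ yields $E_j(\tbmp^{2,*},\tbms^{2,*})\subseteq E_j(\tbmp^{1,*},\tbms^{1,*})$.
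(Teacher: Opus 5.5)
Your proposal is correct. Part~1 follows the paper's own route: the same partition of the trades $m\neq j$, subtree decomposition plus condition~(e) to freeze $i$'s prices to its other children, and homogeneity scaling (linear thresholds $\ts^*_i(v)=c_i v$ with $c_i\ge 1$). That scaling shows the path events $\{v_i\ge t^\star(v_{\Parent(i)})\}$ do not depend on the changed slopes, which is the paper's \cref{lem:scaling}(2) applied up $\Path(i)$.

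Part~2 is where you diverge. The paper uses neither homogeneity nor positivity there. It splits $G^{(j),k}_i=r^k_i+d^k_i$ and observes $r^1_i=\frac{\pi^1(j,i)}{\pi^2(j,i)}\,r^2_i$ and that $d^1_i-d^2_i$ is nonincreasing in $p$. It then runs a revealed-preference contradiction on the raw gains, with the proportionality (its ``Property~1'') controlling the sign of the one-shot difference.

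You instead use the equivariance of Step~1 to collapse the expected downstream revenue to $\beta_k v_j$, and you divide by $a_k$. The one-shot terms then coincide and only the scalar $\lambda_k=\beta_k/a_k$ varies, so the comparison becomes a standard decreasing-differences argument. This is tidy, but the homogeneity reduction is not needed. Dividing by $a_k$ already works for the general downstream term, because $\pi^1(m,i)/a_1\ge\pi^2(m,i)/a_2$ for every $m\in\Desc(j)$ makes $d^1_i/a_1-d^2_i/a_2$ nonincreasing in $p$. The paper's version therefore shows that the price monotonicity $\tp^{1,*}_j\le\tp^{2,*}_j$ needs only \cref{asp:markov}, once $\ts^*_j$ is known to be common to both mechanisms.

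One detail to make explicit: the final summation must be applied to the normalized objectives $G^k/a_k$. If you sum the unnormalized inequalities $G^2(t_2)\ge G^2(t_1)$ and $G^1(t_1)>G^1(t_2)$, you are left with the term $(a_1-a_2)\,c_j\,[t_1P(t_1)-t_2P(t_2)]$. Its sign has to be established separately, for instance by first deducing $t_1P(t_1)>t_2P(t_2)$ from $G^1(t_1)>G^1(t_2)$ and $M(t_1)\le M(t_2)$. That deduction is exactly the job of the paper's Property~1.
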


\emph{Proof Sketch.}
Write $i \coloneqq \Parent(j)$. The two mechanisms differ only in how they reward $i$, and only from within $\Subtree(j)$ (conditions (a)--(e)); every other reallocation is fixed. We show each trade event $E_m$ with $m\neq j$ is unchanged, then analyze trade $j$.

\emph{(i) $m\notin\Subtree(i)\cup\Path(i)$ and (ii) $m\in\Desc(j)$.} These equilibria are computed inside subtrees whose \emph{internal} profit reallocations are same between $\pi_1$ and $\pi_2$, so their equilibrium strategies, hence $E_m$, are identical.

\emph{(iii) Sibling trades $m\in\Subtree(c)$, $c\in\Children(i)\setminus\{j\}$.} By subtree decomposition (\cref{lem:subtree-decomp}), $i$'s optimal price to $c$ and equilibrium strategy within $\Subtree(c)$ depend only on the reallocations within $\Subtree(c)$ as well as the reallocations from $\Subtree(c)$ to $i$. While condition (e) ensures that these reallocations are same between $\pi_1$ and $\pi_2$. Similarly, $E_m$, are identical.

\emph{(iv) Path trades $m\in\Path(i)\cup\{i\}$.} Both $\pi_1$ and $\pi_2$ use the baseline mechanism along this path (condition (d)), so the seller collects only her \emph{one-shot} payment. We first derive a homogeneity scaling lemma (\cref{lem:scaling} in \cref{sec:omitted-proofs}): $\ts_m(v_m)$ is linear in $v_m$, and hence $E_m = \{\bmv : v_m/v_{\Parent(m)} \ge \tau_m\}$ with $\tau_m$ depending only on $F_m$. Therefore, $E_m$ is identical between $\pi_1$ and $\pi_2$.

\emph{Trade $j$.} By (ii), all strategies within $\Subtree(j)$ are unchanged, so $\ts_j$ is identical between $\pi_1$ and $\pi_2$. Only the price $\tp_j$ may differ. Split $i$'s gain $G^{(j)}_i = r_i + d_i$ into the one-shot revenue $r_i$ from $j$ and the reallocated revenue $d_i$ from $\Desc(j)$. Under the dominating $\pi^1$, $i$'s one-shot revenue is smaller (condition (b)) and its reallocated revenue is larger (condition (c)): $i$ earns \emph{less} directly from $j$ and \emph{more} from what $j$ resells. As the reallocated revenue accrues only when $j$ buys and then resells, $i$ is incentivized to induce more downstream trade, and its only lever is to \emph{lower} $\tp_j$, admitting more willingness for player $j$ to buy, and resulting $E_j(\tbmp^{2,*}, \tbms^{2,*}) \subseteq E_j(\tbmp^{1,*}, \tbms^{1,*})$.
\qed

\subsection{Comparison with Baseline Mechanism}
\label{subsec:global}

The local comparison result (\cref{thm:partial}) serves as the building block for a global comparison between any PRM and the baseline mechanism.

\begin{restatable}{theorem}{thmGlobal}
\label{thm:global}
Let \cref{asp:markov,asp:posi,asp:homo} hold.
Let $\pi^0$ be the baseline mechanism (\cref{def:baseline}) and $\pi$ be any feasible PRM satisfying \cref{def:prm}. Denote by $(\tbmp^{0,*}, \tbms^{0,*})$ and $(\tbmp^*, \tbms^*)$ the equilibria under $\pi^0$ and $\pi$, respectively. Then $E_j(\tbmp^{0,*}, \tbms^{0,*}) \subseteq E_j(\tbmp^*, \tbms^*)$ for all $j \in \calI \setminus \{r\}$.
\end{restatable}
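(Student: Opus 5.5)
We will prove \cref{thm:global} by interpolating between $\pi^0$ and $\pi$ through a finite chain of PRMs $\pi^0=\sigma^0,\sigma^1,\dots,\sigma^T=\pi$, chosen so that each consecutive pair satisfies $\sigma^{t+1}$ locally dominates $\sigma^t$ at some trade (\cref{def:local-diff}). \cref{thm:partial} then gives $E_m(\sigma^t)\subseteq E_m(\sigma^{t+1})$ for every trade $m$: equality for $m\neq j_t$ and inclusion at $j_t$. Chaining these inclusions along the path yields $E_j(\tbmp^{0,*},\tbms^{0,*})\subseteq E_j(\tbmp^*,\tbms^*)$ for all $j$.

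The order of modification is dictated by condition (d): when we modify player $i$'s share, every strict ancestor of $i$ must still receive exactly its baseline share. We therefore process sellers $i\in\calI\setminus\calL$ in \emph{reverse} topological order (deepest first), and for the current seller $i$ we handle its children $j\in\Children(i)$ one at a time. At the step for $(i,j)$ we change only the column $\pi(\cdot,i)$ restricted to $\Subtree(j)$. We set $\pi(m,i)$ equal to the target value $\pi(m,i)$ for all $m\in\Desc(j)$, which is an increase from $0$, so condition (c) holds. We also set $\pi(j,i)$ from its current value to the target $\pi(j,i)$, a decrease from $1$, so condition (b) holds. Condition (e) holds because only entries indexed by $m\in\Subtree(j)$ move. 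Condition (a) holds because only column $i$ changes. Condition (d) holds because ancestors of $i$ are processed later and so still sit at the baseline. Deeper sellers have already been switched to their target columns, which is allowed since (d) constrains only $\Path(i)$.

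We must also check that every intermediate $\sigma^t$ is a feasible PRM (\cref{def:prm}). Path Only holds trivially. For budget feasibility, fix a buyer $m$. In any intermediate mechanism, each ancestor $k$ of $m$ holds either its target share $\pi(m,k)$ or its baseline share, and the baseline share is nonzero only for $k=\Parent(m)$. Since the parent is the deepest ancestor, it is processed first, so an intermediate row has the form (target on deeper ancestors, baseline on shallower ancestors). The shallower baseline entries are all zero because they are not the parent. Once the parent is switched, the row is therefore a sub-row of $\pi(m,\cdot)$, with sum at most $1$. Before the parent is switched, the row is $(1,0,\dots,0)$, with sum $1$.

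The main obstacle is the scenario where the target has $\pi(j,i)>1$ in some sense or where $\pi(j,i)$ exceeds the current value, which would break condition (b). The current value at the time of the step is the baseline value $1$ and the target satisfies $\pi(j,i)\le 1$ by budget feasibility, so (b) is safe. A subtler point is whether \cref{thm:partial} really compares the fixed tie-broken equilibria consistently across intermediate mechanisms. This holds because the equilibrium selection (backward induction, smallest argmax) is defined uniformly for every PRM, so each $\sigma^t$ has a well-defined equilibrium and the chain of inclusions is legitimate.
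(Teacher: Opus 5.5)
Your proposal is correct and follows essentially the same route as the paper. Both arguments build a bottom-up, child-by-child interpolation in which each step switches seller $i$'s column on $\Subtree(j)$ from $\pi^0$ to $\pi$, check conditions (a)--(e) of \cref{def:local-diff}, use the prefix structure of switched ancestors to get budget feasibility of every intermediate mechanism, and chain the inclusions from \cref{thm:partial}. You use implicitly that the final mechanism equals $\pi$; this holds because entries $\pi(m,i)$ with $m\notin\Desc(i)$ vanish under both $\pi$ and $\pi^0$ by Path Only.
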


\begin{proof}
We interpolate from $\pi^0$ to $\pi$ through a sequence of mechanisms in which each mechanism locally dominates its predecessor at a \emph{single} trade in the sense of \cref{def:local-diff}, and then chain the trade-event inclusions of \cref{thm:partial}. The key point is how to construct a sequence of mechanisms that satisfying this condition.

We list the non-leaf players $\calI \setminus \calL$ as $i_1, \ldots, i_M$ in a bottom-up order, so that every player $i$ precedes all players in $\Path(i)$. For each $i_m$, enumerate its children $\Children(i_m) = \{j_{m,1}, \ldots, j_{m,d_m}\}$ in any fixed order.
We will define atomic operations indexed by the pairs $(m,t)$, $1 \le m \le M$, $1 \le t \le d_m$, and take each operation in lexicographic order starting with $\pi^0$; there are $\sum_m d_m = N-1$ number of such operation, with each operation corresponding to a trade.
Specifically, operation $(m,t)$ switches player $i_m$'s reallocation \emph{from the subtree} $\Subtree(j_{m,t})$ from $\pi^0$ to $\pi$, leaving everything else unchanged. Writing $\sigma\in\{1,...,N-1\}$ as the integer index of $(m,t)$, this defines mechanisms $\pi^{[0]} = \pi^0, \pi^{[1]}, \ldots, \pi^{[N-1]} = \pi$ by
\[
\pi^{[\sigma]}(n, k) \;=\;
\begin{cases}
\pi(n, i_m), & k = i_m \text{ and } n \in \Subtree(j_{m,t}),\\
\pi^{[\sigma-1]}(n, k), & \text{otherwise.}
\end{cases}
\]
Equivalently, for trade index $n \in \calI \backslash \{r\}$ and player index $k\in \calI \backslash \calL$ where $k \in \Path(n)$, let $c_k(n)$ denote the unique child of $k$ on the path from $k$ to $n$, \ie, the single element of $(\Path(n)\cup\{n\}) \cap \Children(k)$. Then $\pi^{[\sigma]}(n,k)$ equals the target $\pi(n,k)$ once the operation switching $k$'s reallocation from $\Subtree(c_k(n))$ (\ie, the operation indexed by $(m,t)$ where $i_m = k, j_{m,t} = c_k(n)$) has occurred, and equals the baseline $\pi^0(n,k)$ otherwise.
Let $(\tbmp^{[\sigma],*}, \tbms^{[\sigma],*})$ be the equilibrium under $\pi^{[\sigma]}$ and abbreviate $E^{[\sigma]}_n \coloneqq E_n(\tbmp^{[\sigma],*}, \tbms^{[\sigma],*})$.

Fix operation $(m,t)$ and write $i \coloneqq i_m$, $j \coloneqq j_{m,t}$, $\pi^- \coloneqq \pi^{[\sigma-1]}$, $\pi^+ \coloneqq \pi^{[\sigma]}$. By construction $\pi^-$ and $\pi^+$ agree everywhere except on $\{\pi(n,i) : n \in \Subtree(j)\}$. We verify that $\pi^+$ locally dominates $\pi^-$ at trade $j$ (\cref{def:local-diff}), \ie, $\pi^+$ plays the role of $\pi^1$ and $\pi^-$ that of $\pi^2$:
\begin{itemize}[left=0em]\itemsep0.1em
\item[(a)] only reallocation to $i$ changes, so $\pi^+(n,k) = \pi^-(n,k)$ for all $k \neq i$;
\item[(b)] the one-shot share $\pi(j,i)$ is set only at this operation, so $\pi^+(j,i) = \pi(j,i) \le 1 = \pi^0(j,i) = \pi^-(j,i)$;
\item[(c)] for $n \in \Desc(j)$, the share $\pi(n,i)$ is set only at this operation, so $\pi^+(n,i) = \pi(n,i) \ge 0 = \pi^0(n,i) = \pi^-(n,i)$ (here $\pi^0(n,i)=0$ because $i = \Parent(j) \neq \Parent(n)$);
\item[(d)] every strict ancestor $k \in \Path(i)$ follows $i$ in the bottom-up order, so no operation has yet altered reallocation \emph{to} $k$; hence $\pi^-(n,k) = \pi^+(n,k) = \pi^0(n,k)$ for all $n$ and all $k \in \Path(i)$;
\item[(e)] $\pi^-(n,i) = \pi^+(n,i)$ for all $n \notin \Subtree(j)$, since operation $(m,t)$ only changes the mechanism within $\Subtree(j)$.
\end{itemize}

We next show that $\pi^{[\sigma]}$ is a valid PRM (satisfying Path Only and Budget Feasibility) for every $\sigma$:
\begin{itemize}[left=0em]
\item[(1)] For Path Only, each element of $\pi^{[\sigma]}(n,k)$ is either $\pi(n,k)$ or $\pi^0(n,k)$. As Path Only holds for both $\pi(\cdot)$ and $\pi^0(\cdot)$, Path Only must hold for $\pi^{[\sigma]}(\cdot)$.
\item[(2)] For Budget Feasibility, fix a trade $n$ with ancestors $\Path(n) = \{k_1, \ldots, k_q\}$ ordered from $k_1 = \Parent(n)$ up to the root $k_q = r$. Since operations process trades in a bottom-up order, $\pi^{[\sigma]}(n,k_i)$ is switched from $\pi^0$ to $\pi$ before $\pi^{[\sigma]}(n,k_{i+1})$; hence at any stage the ``already switched ancestors'' are $K = \{k_1, \ldots, k_a\}$ or $K = \emptyset$.
We consider two subcase: (a) If $K = \emptyset$, then $\sum_b \pi^{[\sigma]}(n,k_b) = \sum_b \pi^0(n,k_b) = 1$. (b) If $K = \{k_1, \ldots, k_a\}$, notice that $\pi^0(n,k_i) = 0$ for $i > a \ge 1$, thus $\sum_b \pi^{[\sigma]}(n,k_b) = \sum_{b \le a} \pi^{[\sigma]}(n,k_b) = \sum_{b \le a} \pi(n,k_b) \le \sum_{b} \pi(n,k_b) \le 1$ by feasibility of $\pi$. 
\end{itemize}

Thus $\pi^+$ locally dominates $\pi^-$ at $j$, and \cref{thm:partial} yields $E^{[\sigma-1]}_j \subseteq E^{[\sigma]}_j$ (the dominated $\pi^-$ has the smaller trade event) and $E^{[\sigma-1]}_{n} = E^{[\sigma]}_{n}$ for every other trade $n$.
Composing the inclusions over all $N-1$ steps gives $E_j(\tbmp^{0,*}, \tbms^{0,*}) = E^{[0]}_j \subseteq E^{[N-1]}_j = E_j(\tbmp^*, \tbms^*)$ for all $j \in \calI \setminus \{r\}$.\qed
\end{proof}

The expansion of every trade event translates directly into a social welfare improvement.

\begin{corollary}[Welfare Improvement]
\label{cor:welfare}
Let \cref{asp:markov,asp:posi,asp:homo} hold and let $\pi$ be any \emph{budget-balanced} PRM, \ie, $\sum_{i \in \Path(j)} \pi(j,i) = 1$ for every $j \in \calI \setminus \{r\}$, with equilibrium $(\tbmp^*, \tbms^*)$; let $(\tbmp^{0,*}, \tbms^{0,*})$ be the equilibrium under the baseline mechanism $\pi^0$. Then $\pi$ weakly improves expected social welfare over $\pi^0$:
\[
\bbE_{\bmv \sim \calF}\big[\SW(h(\tbmp^*, \tbms^*; \bmv); \bmv)\big] \;\ge\; \bbE_{\bmv \sim \calF}\big[\SW(h(\tbmp^{0,*}, \tbms^{0,*}; \bmv); \bmv)\big].
\]
\end{corollary}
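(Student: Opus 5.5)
The plan is to write social welfare as a sum of per-trade surpluses and then use \cref{thm:global}. First I compute $\SW(h;\bmv)=\sum_{i}U_i(h;v_i)$ directly from \cref{eq:utility-raw}. Summing over all players, trade $n$ contributes $y_n(v_n-p_n)$ to player $n$ and $\sum_{k\in\Path(n)}\pi(n,k)\,y_n\,p_n$ to its ancestors. Under budget balance, these payment terms cancel exactly. The same holds for $\pi^0$, which is itself budget-balanced. Hence, under both mechanisms,
\[
\SW(h;\bmv)=\sum_{n\in\calI\setminus\{r\}} y_n\, v_n ,
\]
with $v_r$ contributing nothing, since the root does not trade. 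The realized welfare is therefore the total valuation of the non-root players who actually obtain the data.

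Next, I relate $y_n$ to the trade events. In the simplified equilibrium, $x_m=\one\{\ts_m(v_m)\ge \tp_m(v_{\Parent(m)})\}$, which equals $\one\{\bmv\in E_m\}$. By \cref{eq:yi}, $y_n=\one\{\bmv\in \bigcap_{m\in\Path(n)\cup\{n\},\,m\ne r}E_m\}$. \cref{thm:global} gives $E_m(\tbmp^{0,*},\tbms^{0,*})\subseteq E_m(\tbmp^*,\tbms^*)$ for every $m$. Taking intersections along each path, I get $y^0_n(\bmv)\le y_n(\bmv)$ pointwise in $\bmv$.

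Finally, I use positivity (\cref{asp:posi}): $v_n>0$ almost surely, so $y^0_n v_n\le y_n v_n$ almost surely for each $n$. Summing over $n$ and taking expectations under $\calF$ gives the claimed inequality. The expectations are finite because each $\calV_i$ is bounded.

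The only delicate point is the welfare identity. I must check that the sum $\sum_i\sum_{j\in\Desc(i)} y_j\pi(j,i)p_j$ can be reindexed as $\sum_j y_j p_j\sum_{i\in\Path(j)}\pi(j,i)$. This uses the Path Only condition, which says $\pi(j,i)=0$ unless $i\in\Path(j)$. It is this identity that needs budget balance: with strict budget feasibility, the withheld payments would appear as $-\sum_n y_n p_n(1-\sum_k\pi(n,k))$, and that term could offset the gain from expanded trade. Everything else follows immediately from \cref{thm:global}.
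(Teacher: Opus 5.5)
Your proof is correct and follows the paper's argument essentially step for step. It uses the same four ingredients: the budget-balance identity $\SW(h;\bmv)=\sum_i y_i v_i$ obtained via Path Only, each $y_n$ written as the indicator of an intersection of trade events along the root-to-$n$ path, the inclusions from \cref{thm:global}, and positivity to pass to expectations. One small quibble: under the convention $x_r=1$, $p_r=0$, the root's own term $y_r v_r = v_r$ does appear in $U_r$ (see the running example), so the identity should include it. That term is the same under both mechanisms, so it does not affect the comparison.
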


The proof, deferred to \cref{sec:omitted-proofs}, rests on a simple identity: under budget balance the monetary transfers cancel, so welfare equals the total realized valuation, $\SW(h;\bmv) = \sum_{i} y_i v_i$. By \cref{thm:global} each indicator $y_i$ is weakly larger under $\pi$ than under the baseline $\pi^0$, and positivity ($v_i > 0$) turns this into a weakly larger welfare.

\section{Experiments}
\label{sec:exp}

We complement the theory with exact computational experiments addressing four questions: (a) how much profit reallocation improves over the baseline on trees; (b) whether the gains survive dropping \cref{asp:homo,asp:posi}; (c) how they scale with the tree depth $d$ and branching factor $m$ in regular trees; and (d) whether they persist on an irregular tree.

\subsection{Setup}

We consider two families of tree structures. The first is \emph{complete $m$-ary trees} of depth $d$: the root originates the data and every non-leaf player has $m$ children, giving levels $0,\ldots,d$. By the level-symmetry of the tree under a Markovian valuation process, all players on the same level share the same equilibrium strategy, so it suffices to compute $d+1$ representative strategies. The second is the \emph{irregular} tree of \cref{fig:tree-notation}: a $7$-player instance with root $r$ and children $u,l$, where $l$ is a leaf, $u$ has children $uu$ (a leaf) and $ul$, and $ul$ has two leaf children $ulu,ull$---so branching and depth vary across nodes (depths range from $1$ to $3$). Lacking level-symmetry, it is solved without any symmetry reduction. All equilibria are computed exactly with \cref{alg:exact}.

\paragraph{Valuation models.}
Each child's valuation follows a local random walk on the grid: from a parent $i$ with value $v_i = w_k$, each child $j$ generates $k'$ independently in one of $k-2,\dots,k+2$ with equal probability (clipped into $[0,2d]$ for M2) and set $v_j = w_{k'}$. The four models share this transition and differ only in the values of $\{w_k\}$s. Specifically:
\begin{itemize}[left=0em]\itemsep0.1em
\item \textbf{M1} (homogeneous, positive): $w_k = e^{(k-2d)/2}$, $k=0,\ldots,4d$, root $v_r = w_{2d} = 1$; satisfies both \cref{asp:homo,asp:posi}.
\item \textbf{M2} (non-homogeneous, positive): $w_k = k/2d$, $k=0,\ldots,2d$, root $v_r = w_0 = 0$; violates homogeneity.
\item \textbf{M3} (homogeneous, non-positive): $w_k = (-1)^k e^{(k-2d)/2}$, $k=0,\ldots,4d$, root $v_r = w_{2d} = 1$; violates positivity.
\item \textbf{M4} (non-homogeneous, non-positive): $w_k = (k-2d)/2d$, $k=0,\ldots,4d$, root $v_r = w_{2d} = 0$; violates both.
\end{itemize}

\paragraph{Mechanisms.}
We compare four budget-balanced PRMs. For a trade $m$ with ancestors $\Path(m)$, writing $\ell_m \coloneqq |\Path(m)|$ and, for $k \in \Path(m)$, $s(m,k) \coloneqq \mathrm{depth}(\Parent(m)) - \mathrm{depth}(k) \in \{0,\ldots,\ell_m-1\}$ for the number of steps from the direct seller $\Parent(m)$ up to $k$:
\begin{itemize}[left=0em]\itemsep0.1em
\item \textbf{RAW} (baseline, \cref{def:baseline}): the direct seller keeps everything, $\pi(m,\Parent(m)) = 1$ and $\pi(m,k) = 0$ for $k \neq \Parent(m)$.
\item \textbf{TBDS}: a fixed-ratio rule with $\alpha = 0.5$ that shares a trade's payment geometrically up the ancestors---$\pi(m,k) = (1-\alpha)\,\alpha^{\,s(m,k)}$ for $k \neq r$, with the root absorbing the remainder $\pi(m,r) = \alpha^{\,\ell_m-1}$; this mechanism is originally proposed by Xin et al.~\cite{xin2025tbds} and is a special case of our PRMs.
\item \textbf{AVE}: each trade's payment is split uniformly among its ancestors, $\pi(m,k) = 1/\ell_m$ for every $k \in \Path(m)$.
\item \textbf{OPT}: a budget-balanced PRM optimized by a zeroth-order optimizer maximizing equilibrium welfare, run separately for each model.
\end{itemize}

Note that RAW, TBDS, and AVE are model-independent PRMs, while OPT relies on the underlying distribution $\calF$.

\paragraph{Metrics.}
We report two equilibrium quantities: \emph{social welfare} $\SW=\sum_{i\in\calI}u_i(\tbmp^*,\tbmx^*)$ and \emph{transaction volume} $\mathrm{TV}=\sum_{i\in\calI\setminus\{r\}}\Pr[y_i=1]$, the expected number of completed trades. Both are evaluated \emph{exactly} by the forward dynamic program, so every reported value is exact.

\subsection{Results}

\begin{table}[t]
\centering
\small
\caption{Equilibrium transaction volume (TV) and social welfare (SW) at $(d{=}5,\,m{=}2)$, computed exactly for RAW, TBDS~\citep{xin2025tbds}, AVE and OPT.}
\label{tab:exp}
\begin{tabular*}{\textwidth}{@{\extracolsep{\fill}} l cccc cccc @{}}
\toprule
& \multicolumn{4}{c}{Transaction Volume} & \multicolumn{4}{c}{Social Welfare} \\
\cmidrule(lr){2-5}\cmidrule(lr){6-9}
Mechanism & M1 & M2 & M3 & M4 & M1 & M2 & M3 & M4 \\
\midrule
RAW  & $2.69$  & $21.17$ & $0.66$  & $5.74$  & $36.70$  & $5.56$ & $7.47$  & $2.37$ \\
TBDS & $8.10$  & $26.03$ & $3.68$  & $9.94$  & $75.99$  & $6.76$ & $14.03$ & $3.61$ \\
AVE  & $12.83$ & $37.27$ & $4.48$  & $12.91$ & $92.81$  & $8.23$ & $19.72$ & $4.33$ \\
OPT  & $42.80$ & $46.33$ & $16.81$ & $20.15$ & $140.00$ & $9.33$ & $39.14$ & $5.34$ \\
\bottomrule
\end{tabular*}
\end{table}

\paragraph{Welfare and trade improvements (a, b).}
\cref{tab:exp} reports both metrics for a representative configuration ($d=5$, $m=2$). Every PRM improves over RAW across all four models, including those violating \cref{asp:homo,asp:posi}. Averaged over the four models, the model-independent AVE and TBDS already yield large gains over RAW---$289\%/112\%$ and $189\%/67\%$ in transaction volume/social welfare.
The results suggest that PRMs remain effective across diverse scenarios, even when \cref{asp:homo,asp:posi} no longer hold.

\begin{figure}[t]
\centering
\includegraphics[width=\textwidth]{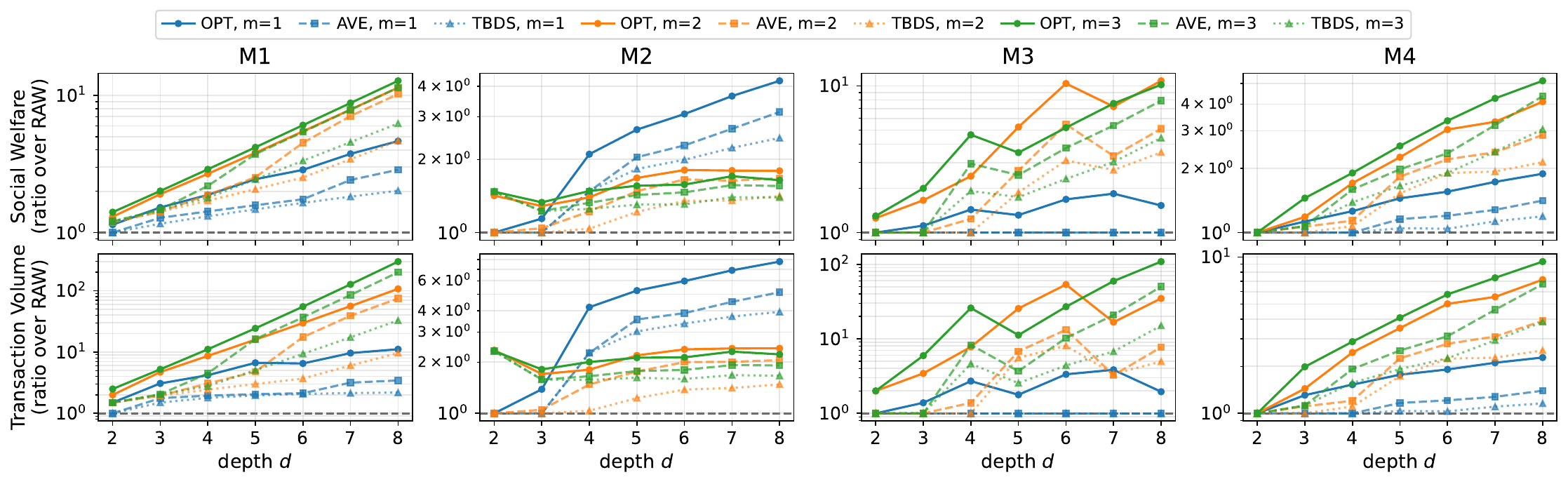}
\caption{Equilibrium social welfare (top) and transaction volume (bottom) of the TBDS (dotted, $\triangle$), AVE (dashed, $\square$), and OPT (solid, $\circ$) mechanisms, expressed as a ratio over RAW, as the tree depth $d$ varies from $2$ to $8$. Each column represents one valuation model; colors denote the branching factor $m \in \{1,2,3\}$. Ratios above the dotted line ($=1$) indicate that profit reallocation improves upon the baseline. The results show that the advantage of PRMs generally grows with depth and branching.}
\label{fig:exp-vary}
\end{figure}

\paragraph{Scaling with depth and branching (c).}
Varying $2 \le d\le 8$ and $m\in\{1,2,3\}$ with results shown in \cref{fig:exp-vary}, the advantage of PRMs over RAW generally grows with both depth $d$ and branching $m$---deeper, bushier trees create more downstream profits to reallocate---and is robust to violations of positivity and homogeneity.

\paragraph{Irregular trees (d).}
We report the results of the \emph{irregular} tree structure as in \cref{fig:tree-notation}, shown in \cref{tab:exp-irregular}, \cref{sec:app-experiments}. The picture is unchanged---PRMs beat RAW for every model, indicating that the benefit of PRMs is not specific to symmetric trees.

\section{Conclusion}
\label{sec:conclusion}

We studied profit reallocation on tree-structured data markets that capture one-to-many resale. We introduced a tree-structured trading game and a general, budget-feasible profit reallocation mechanism, gave a polynomial-time exact algorithm and an FPTAS for the induced equilibria, and proved that reallocation weakly expands equilibrium trade and improves social welfare over the no-reallocation baseline. Open directions include extending from trees to directed acyclic graphs, endogenizing the trading topology, treating richer trading protocols beyond posting-price, studying collusion- or sybil-robust solution concepts, characterizing when a PRM is optimal and designing algorithms for computing approximately optimal PRMs, and objectives beyond welfare such as revenue or fairness.

\subsubsection*{Acknowledgments.}
This work was supported by China Unicom Software Research Institute and the National Natural Science Foundation of China (Grant No. 62572010).

\bibliographystyle{splncs04}
\bibliography{_reference}

@article{ma2026incentivizing,
  title={Incentivizing Data Trading via Profit Reallocation},
  author={Ma, Yunxuan and Xin, Wu and Li, Jichen and Chen, Hongyin and Dong, Xiaoqi and Zheng, Yusen and Cheng, Yukun and Deng, Xiaotie},
  journal={arXiv preprint arXiv:2606.31202},
  year={2026}
}

@article{myerson1981optimal,
  title={Optimal auction design},
  author={Myerson, Roger B},
  journal={Mathematics of Operations Research},
  volume={6},
  number={1},
  pages={58--73},
  year={1981},
  publisher={INFORMS}
}

@article{analytics2016age,
  title={The age of analytics: competing in a data-driven world},
  author={Analytics, McKinsey},
  journal={McKinsey Global Institute Research},
  year={2016}
}

@INPROCEEDINGS{cao2017game,
  author={Cao, Zhi and Zhang, Honggang and Liu, Benyuan and Sheng, Bo},
  booktitle={2018 IEEE 37th International Performance Computing and Communications Conference}, 
  title={A Game-theoretic Framework for Revenue Sharing in Edge-Cloud Computing System}, 
  year={2018},
  volume={},
  number={},
  pages={1-8},
  doi={10.1109/PCCC.2018.8710866}
}

@article{richter2019data,
  title={The data sharing economy: on the emergence of new intermediaries},
  author={Richter, Heiko and Slowinski, Peter R},
  journal={IIC-International Review of Intellectual Property and Competition Law},
  volume={50},
  number={1},
  pages={4--29},
  year={2019},
  publisher={Springer}
}

@inproceedings{ghosh2011selling,
  title={Selling privacy at auction},
  author={Ghosh, Arpita and Roth, Aaron},
  booktitle={Proceedings of the 12th ACM Conference on Electronic Commerce},
  pages={199--208},
  year={2011}
}

@inproceedings{nissim2014redrawing,
  title={Redrawing the boundaries on purchasing data from privacy-sensitive individuals},
  author={Nissim, Kobbi and Vadhan, Salil and Xiao, David},
  booktitle={Proceedings of the 5th conference on Innovations in theoretical computer science},
  pages={411--422},
  year={2014}
}

@inproceedings{zhang2020selling,
  title={Selling data at an auction under privacy constraints},
  author={Zhang, Mengxiao and Beltran, Fernando and Liu, Jiamou},
  booktitle={Conference on Uncertainty in Artificial Intelligence},
  pages={669--678},
  year={2020},
  organization={PMLR}
}

@inproceedings{yu2020fairness,
  title={A fairness-aware incentive scheme for federated learning},
  author={Yu, Han and Liu, Zelei and Liu, Yang and Chen, Tianjian and Cong, Mingshu and Weng, Xi and Niyato, Dusit and Yang, Qiang},
  booktitle={Proceedings of the AAAI/ACM Conference on AI, Ethics, and Society},
  pages={393--399},
  year={2020}
}

@article{ali2024reselling,
  title={Reselling information},
  author={Ali, S Nageeb and Chen-Zion, Ayal and Lillethun, Erik},
  journal={Games and Economic Behavior},
  volume={148},
  pages={23--43},
  year={2024},
  publisher={Elsevier}
}

@article{shiller2013digital,
  title={Digital distribution and the prohibition of resale markets for information goods},
  author={Shiller, Benjamin Reed},
  journal={Quantitative Marketing and Economics},
  volume={11},
  number={4},
  pages={403--435},
  year={2013},
  publisher={Springer}
}

@article{condorelli2017bilateral,
  title={Bilateral trading in networks},
  author={Condorelli, Daniele and Galeotti, Andrea and Renou, Ludovic},
  journal={The Review of Economic Studies},
  volume={84},
  number={1},
  pages={82--105},
  year={2017},
  publisher={Oxford University Press}
}

@article{manea2018intermediation,
  title={Intermediation and resale in networks},
  author={Manea, Mihai},
  journal={Journal of Political Economy},
  volume={126},
  number={3},
  pages={1250--1301},
  year={2018},
  publisher={University of Chicago Press Chicago, IL}
}

@inproceedings{emek2011mechanisms,
  title={Mechanisms for multi-level marketing},
  author={Emek, Yuval and Karidi, Ron and Tennenholtz, Moshe and Zohar, Aviv},
  booktitle={Proceedings of the 12th ACM conference on Electronic commerce},
  pages={209--218},
  year={2011}
}

@inproceedings{li2017mechanism,
  title={Mechanism design in social networks},
  author={Li, Bin and Hao, Dong and Zhao, Dengji and Zhou, Tao},
  booktitle={Proceedings of the AAAI Conference on Artificial Intelligence},
  volume={31},
  number={1},
  year={2017}
}

@article{li2022diffusion,
  title={Diffusion auction design},
  author={Li, Bin and Hao, Dong and Gao, Hui and Zhao, Dengji},
  journal={Artificial Intelligence},
  volume={303},
  pages={103631},
  year={2022},
  publisher={Elsevier}
}

@inproceedings{zhang2020redistribution,
  title={Redistribution Mechanism on Networks},
  author={Zhang, Wen and Zhao, Dengji and Chen, Hanyu},
  booktitle={Proceedings of the 19th International Conference on Autonomous Agents and MultiAgent Systems},
  pages={1620--1628},
  year={2020}
}

@article{liu2024data,
  title={Data sharing and exchanging with incentive and optimization: a survey},
  author={Liu, Liyuan and Han, Meng},
  journal={Discover Data},
  volume={2},
  number={1},
  pages={2},
  year={2024},
  publisher={Springer}
}

@article{zhang2025incentive,
  title={An Incentive Mechanism for Privacy Preserved Data Trading with Verifiable Data Disturbance},
  author={Zhang, Man and Li, Xinghua and Luo, Bin and Ren, Yanbing and Miao, Yinbin and Liu, Ximeng and Deng, Robert H},
  journal={IEEE Transactions on Dependable and Secure Computing},
  year={2025},
  publisher={IEEE}
}

@article{han2024dynamic,
  title={Dynamic incentive design for federated learning based on consortium blockchain using a stackelberg game},
  author={Han, Baofu and Li, Bing and Wolter, Katinka and Jurdak, Raja and Zhang, Hao and Hu, Yuanyuan and Li, Yi},
  journal={IEEE Access},
  year={2024},
  publisher={IEEE}
}

@inproceedings{biswas2021incentive,
  title={An incentive mechanism for trading personal data in data markets},
  author={Biswas, Sayan and Jung, Kangsoo and Palamidessi, Catuscia},
  booktitle={International Colloquium on Theoretical Aspects of Computing},
  pages={197--213},
  year={2021},
  organization={Springer}
}

@inproceedings{xin2025tbds,
  author       = {Wu Xin and
                  Hongyin Chen and
                  Xiaoqi Dong and
                  Jichen Li and
                  Xiaotie Deng and
                  Zhonghai Wu and
                  Bin Xiao},
  title        = {{TBDS:} Transaction-Based Data Sharing},
  booktitle    = {Frontiers of Algorithmics - 19th International Joint Conference},
  volume       = {15828},
  pages        = {222--237},
  publisher    = {Springer},
  year         = {2025},
  doi          = {10.1007/978-981-96-8312-3\_17}
}

@inproceedings{chen2017model,
  title={Model-based pricing: Do not pay for more than what you learn!},
  author={Chen, Lingjiao and Koutris, Paraschos and Kumar, Arun},
  booktitle={Proceedings of the 1st Workshop on Data Management for End-to-end Machine Learning},
  pages={1--4},
  year={2017}
}

@inproceedings{chen2019towards,
  title={Towards model-based pricing for machine learning in a data marketplace},
  author={Chen, Lingjiao and Koutris, Paraschos and Kumar, Arun},
  booktitle={Proceedings of the 2019 international conference on management of data},
  pages={1535--1552},
  year={2019}
}

@article{azcoitia2022survey,
  title={A survey of data marketplaces and their business models},
  author={Azcoitia, Santiago Andr{\'e}s and Laoutaris, Nikolaos},
  journal={ACM SIGMOD Record},
  volume={51},
  number={3},
  pages={18--29},
  year={2022},
  publisher={ACM New York, NY, USA}
}

@inproceedings{li2023martfl,
  title={martfl: Enabling utility-driven data marketplace with a robust and verifiable federated learning architecture},
  author={Li, Qi and Liu, Zhuotao and Li, Qi and Xu, Ke},
  booktitle={Proceedings of the 2023 ACM SIGSAC Conference on Computer and Communications Security},
  pages={1496--1510},
  year={2023}
}

@article{mi2025multi,
  title={Multi-Party Data Pricing for Complex Data Trading Markets: A Rubinstein Bargaining Approach},
  author={Mi, Bing and Han, Zhengwang and Chen, Kongyang},
  journal={arXiv preprint arXiv:2502.16363},
  year={2025}
}

@inproceedings{qin2023research,
  title={Research on pricing mechanism of data products based on game theory},
  author={Qin, Siyao and Si, Yaqing and He, Bin and Yan, Hongyan and Li, Wen and Shi, Xuemei},
  booktitle={International Conference on Electronic Information Engineering and Data Processing (EIEDP 2023)},
  volume={12700},
  pages={158--165},
  year={2023},
  organization={SPIE}
}

@article{bauer2024designing,
  title={Designing a blockchain-based data market and pricing data to optimize data trading and welfare},
  author={Bauer-H{\"a}nsel, Ingrid and Liu, Qianyu and Tessone, Claudio J and Schwabe, Gerhard},
  journal={International Journal of Electronic Commerce},
  volume={28},
  number={1},
  pages={3--30},
  year={2024},
  publisher={Taylor \& Francis}
}

@inproceedings{sandholm2006sequences,
  title={Sequences of take-it-or-leave-it offers: Near-optimal auctions without full valuation revelation},
  author={Sandholm, Tuomas and Gilpin, Andrew},
  booktitle={Proceedings of the fifth international joint conference on Autonomous agents and multiagent systems},
  pages={1127--1134},
  year={2006}
}

@misc{OECD_EASD_2019,
  title        = {Enhancing Access to and Sharing of Data: Reconciling Risks and Benefits for Data Re-use across Societies},
  author       = {{OECD}},
  year         = {2019},
  publisher    = {OECD Publishing},
  doi          = {10.1787/276aaca8-en},
  url          = {https://www.oecd.org/content/dam/oecd/en/publications/reports/2019/11/enhancing-access-to-and-sharing-of-data_070835df/276aaca8-en.pdf}
}

@book{AGT:nisan_algorithmic_2007,
	address = {Cambridge},
	title = {Algorithmic {Game} {Theory}},
	isbn = {978-0-521-87282-9},
	url = {https://www.cambridge.org/core/books/algorithmic-game-theory/0092C07CA8B724E1B1BE2238DDD66B38},
	urldate = {2025-06-01},
	publisher = {Cambridge University Press},
	editor = {Nisan, Noam and Roughgarden, Tim and Tardos, Eva and Vazirani, Vijay V.},
	year = {2007},
	doi = {10.1017/CBO9780511800481},
}

@book{MWG,
  title={Microeconomic theory},
  author={Mas-Colell, Andreu and Whinston, Michael Dennis and Green, Jerry R and others},
  volume={1},
  year={1995},
  publisher={Oxford university press New York}
}

@book{fudenberg1991game,
  title={Game Theory},
  author={Fudenberg, Drew and Tirole, Jean},
  year={1991},
  publisher={MIT Press},
  address={Cambridge, MA}
}

@article{agarwal2024towards,
  title={Towards data auctions with externalities},
  author={Agarwal, Anish and Dahleh, Munther and Horel, Thibaut and Rui, Maryann},
  journal={Games and Economic Behavior},
  volume={148},
  pages={323--356},
  year={2024},
  publisher={Elsevier}
}

@techreport{OECD2021ValueOfData,
  title  = {Measuring the Economic Value of Data},
  author = {{OECD}},
  year   = {2021},
  institution = {OECD}
}

@article{JonesTonetti2020NonrivalryData,
  title   = {Nonrivalry and the Economics of Data},
  author  = {Jones, Charles I. and Tonetti, Christopher},
  journal = {American Economic Review},
  year    = {2020},
  volume  = {110},
  number  = {9},
  pages   = {2819--2858},
  doi     = {10.1257/aer.20191330}
}

@article{DSGE,
  title={Dynamic Stochastic General Equilibrium Models},
  author={Gil, Hamilton Galindo and Bravo, Alexis Montecinos and Sosa, Marco Antonio Ortiz},
  journal={Springer Texts in Business and Economics},
  year={2024},
  publisher={Springer}
}

@misc{SnowflakeDocs_MarketplaceAbout,
  title        = {About Snowflake Marketplace},
  author       = {{Snowflake Inc.}},
  howpublished = {Snowflake Documentation},
  url          = {https://docs.snowflake.com/en/collaboration/collaboration-marketplace-about}
}

@misc{AWSDocs_DataExchangeLanding,
  title        = {AWS Data Exchange Documentation},
  author       = {{Amazon Web Services, Inc.}},
  howpublished = {AWS Documentation},
  url          = {https://docs.aws.amazon.com/data-exchange/}
}

@misc{Dawex_Web_DataMarketplaceModel,
  title        = {Data marketplace: {Dawex} Data Exchange Technology},
  author       = {{Dawex Systems}},
  howpublished = {Dawex official website},
  url          = {https://www.dawex.com/en/data-exchange-platform/data-marketplace/}
}

@techreport{Koutroumpis2017UnfulfilledPotential,
  author      = {Koutroumpis, Pantelis and Leiponen, Aija and Thomas, Llewellyn D. W.},
  title       = {The (Unfulfilled) Potential of Data Marketplaces},
  institution = {ETLA -- The Research Institute of the Finnish Economy},
  type        = {ETLA Working Papers},
  number      = {53},
  year        = {2017},
  url         = {https://www.etla.fi/wp-content/uploads/etla-working-papers-53.pdf}
}

@misc{EuropeanParliament2023DataActPressRelease,
  author       = {{European Parliament}},
  title        = {Data {Act}: MEPs back new rules for fair access to and use of industrial data},
  year         = {2023},
  month        = mar,
  day          = {14},
  url          = {https://www.europarl.europa.eu/news/en/press-room/20230310IPR77226/data-act-meps-back-new-rules-for-fair-access-to-and-use-of-industrial-data}
}

\newpage
\appendix
\section*{Appendices}
\section{Omitted Proofs and Pseudocode}
\label{sec:omitted-proofs}

\subsection{Omitted Pseudocode}
We defer to here the two subroutines referenced in \cref{sec:algo}: the forward dynamic program \textsc{ForwardDP} (\cref{alg:forward-dp}), which evaluates the gain function $G^{(j)}_i$ called by \cref{alg:exact}, and the FPTAS (\cref{alg:approx}) of \cref{thm:algo-approx}.

\begin{algorithm}[h]
\caption{ForwardDP: Compute $G^{(j)}_i(w_k, p_j)$}
\label{alg:forward-dp}
\KwIn{Tree $\calT$, PRM $\pi$, transition $Q$, subtree root $j$, seller $i$, valuation index $k$, price $p_j$}
\KwOut{$G^{(j)}_i(w_k, p_j)$}
\For{$\ell = 1, \ldots, S$}{
  $\Pr[Y(j, \ell)] \gets q_{j,k,\ell} \cdot \one\{s^*(j, \ell) \ge p_j\}$\;
}
$\text{gain} \gets \pi(j, i) \cdot p_j \cdot \sum_{\ell \in [S]} \Pr[Y(j, \ell)]$\;
\For{each $m \in \Subtree(j)$ \textup{(top-down BFS order)}}{
  \For{each $m' \in \Children(m)$}{
    \For{$\ell' = 1, \ldots, S$}{
      $\Pr[Y(m', \ell')] \gets \sum_{\ell \in [S]} \Pr[Y(m, \ell)] \cdot q_{m',\ell,\ell'} \cdot \one\{s^*(m', \ell') \ge p^*(m', \ell)\}$\;
    }
    $\text{gain} \gets \text{gain} + \sum_{\ell \in [S]} \pi(m', i) \cdot p^*(m', \ell) \cdot \Pr[Y(m, \ell)] \cdot \sum_{\ell'} q_{m',\ell,\ell'} \cdot \one\{s^*(m', \ell') \ge p^*(m', \ell)\}$\;
  }
}
\textbf{return} gain\;
\end{algorithm}

\begin{algorithm}[t]
\caption{FPTAS for $\varepsilon$-Approximate Equilibrium}
\label{alg:approx}
\KwIn{$\varepsilon > 0$, tree $\calT$, PRM $\pi$, conditional CDFs $\{F_i\}$, support $[L_0, H_0]$, Lipschitz constant $K_0$}
\KwOut{$\varepsilon$-approximate equilibrium $(\tbmp^*, \tbms^*)$}
$R \gets \max(|L_0|, |H_0|)$\;
$\varepsilon_0 \gets \frac{\min(\varepsilon, 1)}{8(1 + dK_0)(1 + R)}$\;
$S \gets \lfloor (H_0 - L_0)/\varepsilon_0 \rfloor + 1$\;
$W \gets \{L_0 + \ell \cdot \varepsilon_0\}_{\ell = 0}^{S-1}$\;
\For{each $i \in \calI \setminus \{r\}$, $k, \ell \in \{0, \ldots, S-1\}$}{
  $q_{i,k,\ell} \gets F_i(w_{\ell+1} \mid w_k) - F_i(w_\ell \mid w_k)$\;
}
$\{s^*(i,k)\}, \{p^*(j,k)\} \gets \text{\cref{alg:exact}}(\calT, \pi, Q, W)$\;
\For{each $i \in \calI \setminus \{r\}$}{
  $\ts^*_i(v_i) \gets s^*(i, \lfloor (v_i - L_0)/\varepsilon_0 \rfloor)$\;
  $\tp^*_j(v_i) \gets p^*(j, \lfloor (v_i - L_0)/\varepsilon_0 \rfloor)$ for each $j \in \Children(i)$\;
}
\textbf{return} $(\tbmp^*, \tbms^*)$\;
\end{algorithm}

\subsection{Proof of \texorpdfstring{\cref{fact:simplify}}{Fact~\ref{fact:simplify}}: Posterior Simplification}
\begin{proof}
Fix the strategy profile $(\tbmp, \tbmx)$. By the game dynamics (\cref{def:game-evolve}), every price and decision recorded in $h_i$ (see~\eqref{eq:hi}) is computed recursively from the valuations of the strict ancestors of $i$; hence there is a deterministic map $\phi_i$ with $h_i = \phi_i(\bmv_{\Path(i)})$, where $\bmv_{\Path(i)} = \{v_\ell\}_{\ell \in \Path(i)}$. It therefore suffices to prove the conditional independence $\bmv_{\Desc(i)} \perp \bmv_{\Path(i)} \mid v_i$.

Write $A = \Path(i)$, $D = \Desc(i)$, and $O = \calI \setminus (A \cup \{i\} \cup D)$ for the remaining nodes. By \cref{asp:markov}, the joint density factorizes as
\[
f(\bmv) = \underbrace{f_r(v_r)\!\!\prod_{\ell \in A \setminus \{r\}}\!\! f_\ell(v_\ell \mid v_{\Parent(\ell)})}_{g(\bmv_A)} \cdot\; f_i(v_i \mid v_{\Parent(i)}) \cdot \underbrace{\prod_{\ell \in D} f_\ell(v_\ell \mid v_{\Parent(\ell)})}_{\eta(v_i, \bmv_D)} \cdot \prod_{\ell \in O} f_\ell(v_\ell \mid v_{\Parent(\ell)}),
\]
where we used that every $\ell \in D$ has $\Parent(\ell) \in \{i\} \cup D$, so the factor $\eta$ depends only on $(v_i, \bmv_D)$. Each node in $O$ lies in a subtree hanging off some node of $A$, so integrating out $\bmv_O$ over a complete collection of conditional densities gives $\int \prod_{\ell \in O} f_\ell(v_\ell \mid v_{\Parent(\ell)}) \, d\bmv_O = 1$. Thus
\[
f(\bmv_A, v_i, \bmv_D) = g(\bmv_A)\, f_i(v_i \mid v_{\Parent(i)})\, \eta(v_i, \bmv_D),
\]
and dividing by $\int f(\bmv_A, v_i, \bmv_D)\, d\bmv_D = g(\bmv_A)\, f_i(v_i \mid v_{\Parent(i)}) \int \eta(v_i, \bmv_D)\, d\bmv_D$ yields
\[
f(\bmv_D \mid v_i, \bmv_A) = \frac{\eta(v_i, \bmv_D)}{\int \eta(v_i, \bmv_D)\, d\bmv_D},
\]
which is independent of $\bmv_A$. Hence $\calF(\bmv_D \mid v_i, \bmv_A) = \calF(\bmv_D \mid v_i)$, and since $h_i = \phi_i(\bmv_A)$, we conclude $\calF(\bmv_{\Desc(i)} \mid v_i, h_i) = \calF(\bmv_{\Desc(i)} \mid v_i)$.
\qed
\end{proof}

\subsection{Proof of \texorpdfstring{\cref{lem:simplify-sell}}{Lemma~\ref{lem:simplify-sell}}: Equilibrium Simplification}
\lemSimplifySell*
\begin{proof}
We construct a candidate profile $(\tbmp^\dagger, \tbmx^\dagger)$ of the claimed simplified form together with a Bayes-consistent belief system, and verify the two requirements of \cref{def:pbe}. We proceed by backward induction on the tree (leaves to root), showing that at \emph{every} information set $(h_i, v_i)$ the prescribed action maximizes player $i$'s expected utility $\bbU_i$~\eqref{eq:cond-utility} while all other players follow the profile---that is, the profile is sequentially rational. Throughout, the payoff-relevant belief is the Bayes-consistent posterior $\calF(\cdot \mid v_i)$ over $\bmv_{\Desc(i)}$ at every information set (\cref{fact:simplify}), so part~(b) of \cref{def:pbe} holds automatically and we only argue part~(a).

\paragraph{Base case (leaf nodes).}
For any leaf $i \in \calL$, player $i$ has no children, so $\Desc(i) = \varnothing$. Player $i$'s utility from \cref{eq:utility-raw} reduces to:
\[
U_i(h; v_i) \;=\; y_i \cdot (v_i - p_i).
\]
Expanding $y_i = y_{\Parent(i)} \cdot x_i$ via \cref{eq:yi}, the expected utility conditioned on history $h_i$ and action $x_i$ is:
\[
\bbE[U_i \mid h_i, v_i, x_i] \;=\; x_i \cdot y_{\Parent(i)} \cdot (v_i - p_i),
\]
where $y_{\Parent(i)} \in \{0,1\}$ and $p_i \ge 0$ are both determined by $h_i$.

\emph{Case $y_{\Parent(i)} = 0$:} The utility is $0$ regardless of $x_i$, so any $x_i$ is optimal.

\emph{Case $y_{\Parent(i)} = 1$:} The utility equals $x_i(v_i - p_i)$. Since $x_i \in \{0,1\}$, the unique optimum is $x_i = \one\{v_i \ge p_i\}$.

In both cases, the optimal $x_i$ depends only on $(v_i, p_i)$, not on the full history $h_i$. Hence $\tx_i(h_i, v_i) = \tx^\dagger_i(v_i, p_i)$ for some function $\tx^\dagger_i$.

\paragraph{Inductive step.}
Fix a non-leaf node $i$. As inductive hypothesis, assume that for all $m \in \Desc(i)$ the strategies have been simplified---$\tp^\dagger_m$ depends only on $v_m$ and $\tx^\dagger_m$ only on $(v_m, p_m)$---and are moreover \emph{measurable and bounded}; equivalently, each threshold $\ts^\dagger_m$ (\cref{cor:simplify-buy}) is measurable with $\sup_v |\ts^\dagger_m(v)| < \infty$. This holds at the leaves, where $\ts^\dagger_m(v) = v$ is bounded on the compact $\calV_m$. We show below that node $i$ inherits the same properties, closing the induction.

By \cref{fact:simplify} and the Markov property (\cref{asp:markov}), the posterior distribution satisfies $\calF(\bmv_{\Desc(i)} \mid v_i, h_i) = \calF(\bmv_{\Desc(i)} \mid v_i)$. Under the inductive hypothesis, the strategies of all descendants are functions of their own valuations and local prices only, so the expected utility from \cref{eq:utility-raw} expands as:
\begin{align}
&\bbE[U_i \mid h_i, v_i, x_i, \{p_j\}_{j \in \Children(i)}] \nonumber\\
&= x_i \cdot y_{\Parent(i)} \cdot (v_i - p_i) + \bbE_{\bmv_{\Desc(i)} \sim \calF(\cdot|v_i)}\bigg[\sum_{m \in \Desc(i)} \pi(m,i) \cdot p_m \cdot y_m\bigg]. \label{eq:prf:simplify:expand}
\end{align}
For any $m \in \Desc(i)$, the trade indicator decomposes as $y_m = y_i \cdot y_{m|i} = x_i \cdot y_{\Parent(i)} \cdot y_{m|i}$, where $y_{m|i}$ (\cref{eq:y-conditional}) depends only on valuations and strategies within $\Desc(i)$. Substituting into~\eqref{eq:prf:simplify:expand} and factoring out $x_i \cdot y_{\Parent(i)}$:
\begin{equation}
\label{eq:prf:simplify:factor}
\bbE[U_i \mid h_i, v_i, x_i, \{p_j\}] = x_i \cdot y_{\Parent(i)} \cdot \bigg[(v_i - p_i) + \sum_{j \in \Children(i)} G^{(j)}_i(v_i, p_j)\bigg],
\end{equation}
where $G^{(j)}_i(v_i, p_j)$ is defined in \cref{eq:gain-child}. Each $G^{(j)}_i(v_i, p_j)$ depends only on $v_i$ and $p_j$, since the strategies and valuations within $\Subtree(j)$ are independent of $h_i$ by the inductive hypothesis.

\emph{Pricing.} The factor $x_i \cdot y_{\Parent(i)} \ge 0$ does not depend on $\{p_j\}$, so maximizing $\bbE[U_i \mid \cdot]$ over $\{p_j\}$ reduces to maximizing each $G^{(j)}_i(v_i, p_j)$ independently over $p_j \ge 0$. As $G^{(j)}_i(v_i,\cdot)$ depends only on $v_i$ and the strategies within $\Subtree(j)$, the optimizer is a function of $v_i$ alone; we now show it is attained and well-behaved, which both defines $\tp^\dagger_j$ and propagates the inductive hypothesis.

\emph{Attainment and regularity.} Write, from \cref{eq:gain-child},
\[
G^{(j)}_i(v_i, p) = \pi(j,i)\,p\,\Pr[\ts^\dagger_j(v_j) \ge p \mid v_i] \;+\; \bbE_{\bmv_{\Subtree(j)} \sim \calF(\cdot \mid v_i)}\!\big[\one\{\ts^\dagger_j(v_j) \ge p\}\, D\big],
\]
where $D \coloneqq \sum_{m \in \Desc(j)} \pi(m,i)\, \tp^\dagger_m(v_{\Parent(m)})\, y_{m\mid j} \ge 0$ is independent of $p$ and, by the inductive hypothesis (bounded prices) with $\pi(\cdot,\cdot) \le 1$, bounded. Since $\calV_j$ is compact and $\ts^\dagger_j$ bounded, $\bar s_j \coloneqq \sup_{v}\ts^\dagger_j(v) < \infty$, and any $p > \bar s_j$ gives $G^{(j)}_i(v_i,p) = 0$; the search thus reduces to the compact interval $[0, \bar s_j]$.

For each fixed $v_j$, the acceptance indicator $p \mapsto \one\{\ts^\dagger_j(v_j) \ge p\} = \one_{[0,\, \ts^\dagger_j(v_j)]}(p)$ is upper semicontinuous in $p$ (its superlevel sets are closed), thanks to the accept-when-indifferent convention $\{\ts^\dagger_j \ge p\}$. These indicators are uniformly bounded by $1$ and $D$ is bounded, so the reverse Fatou lemma yields, for every $p \in [0, \bar s_j]$, $\limsup_{p' \to p} G^{(j)}_i(v_i, p') \le G^{(j)}_i(v_i, p)$; that is, $G^{(j)}_i(v_i, \cdot)$ is upper semicontinuous on $[0,\bar s_j]$ (the continuous prefactor $\pi(j,i)p$ preserves this). An upper semicontinuous function on a compact set attains its maximum, so $\argmax_{p \ge 0} G^{(j)}_i(v_i, \cdot) \neq \varnothing$; we select its smallest element as $\tp^\dagger_j(v_i)$. Because $G^{(j)}_i$ is jointly measurable in $(v_i,p)$ and upper semicontinuous in $p$, the measurable maximum theorem makes both $\tp^\dagger_j(v_i)$ and $\max_p G^{(j)}_i(v_i,\cdot)$ measurable in $v_i$, and both are bounded ($\tp^\dagger_j \in [0,\bar s_j]$ and $0 \le \max_p G^{(j)}_i \le \bar s_j + \bbE[D]$). Hence $\tp_i(h_i, j, v_i) = \tp^\dagger_j(v_i)$ is a measurable, bounded function of $v_i$ alone.

Crucially, this argument needs only the measurability and boundedness of the $\Subtree(j)$ strategies---which the induction already carries---together with the accept-when-indifferent convention; it never requires their continuity or any joint regularity, so no circularity arises in the backward induction.

\emph{Buying.} After substituting the optimal prices $\{p^*_j(v_i)\}_{j \in \Children(i)}$, define:
\[
\Phi(v_i) \;\coloneqq\; (v_i - p_i) + \sum_{j \in \Children(i)} G^{(j)}_i\big(v_i, p^*_j(v_i)\big).
\]
Then $\bbE[U_i \mid \cdot] = x_i \cdot y_{\Parent(i)} \cdot \Phi(v_i)$. When $y_{\Parent(i)} = 0$, the utility is $0$ for all $x_i$. When $y_{\Parent(i)} = 1$, the optimal choice is $x_i = \one\{\Phi(v_i) \ge 0\}$, which depends only on $(v_i, p_i)$. Hence $\tx_i(h_i, v_i) = \tx^\dagger_i(v_i, p_i)$. The induced threshold $\ts^\dagger_i(v_i) = v_i + \sum_{j \in \Children(i)} \max_{p} G^{(j)}_i(v_i, p)$ (\cref{cor:simplify-buy}) is measurable and bounded on the compact $\calV_i$, since each $\max_p G^{(j)}_i$ is measurable and bounded; thus node $i$ satisfies the inductive hypothesis, closing the induction.

\paragraph{From the construction to a perfect Bayesian equilibrium.}
The backward induction defines, for every player $i$, the simplified strategies $\tp^\dagger_j$ ($j \in \Children(i)$) and $\tx^\dagger_i$. By construction, the prescribed actions maximize player $i$'s expected utility $\bbU_i\big((\tbmp^\dagger, \tbmx^\dagger), \calF(\cdot\mid v_i) \mid h_i, v_i\big)$~\eqref{eq:cond-utility} over player $i$'s own actions at \emph{every} information set $(h_i, v_i)$, while all other players follow $(\tbmp^\dagger, \tbmx^\dagger)$; in particular, this optimality holds for every offered price $p_i$ and every value of $y_{\Parent(i)}$, hence at on- and off-path information sets alike (the root and leaf cases follow from the same computation restricted to pricing or buying alone). Equivalently, for every deviation $(\tp_i, \tx_i)$,
\[
\bbU_i\big((\tbmp^\dagger, \tbmx^\dagger), \calF(\cdot\mid v_i) \mid h_i, v_i\big) \;\ge\; \bbU_i\big((\tp_i, \tbmp^\dagger_{-i}, \tx_i, \tbmx^\dagger_{-i}), \calF(\cdot\mid v_i) \mid h_i, v_i\big),
\]
which is exactly the sequential-rationality requirement~(a) of \cref{def:pbe}. Together with the Bayes-consistency of the beliefs $\calF(\cdot \mid v_i)$ (\cref{fact:simplify}), this shows that $\big((\tbmp^\dagger, \tbmx^\dagger), \mu\big)$ is a perfect Bayesian equilibrium of the simplified form~\eqref{eq:lem:simplify-sell}.
\qed
\end{proof}

\subsection{Proof of \texorpdfstring{\cref{cor:simplify-buy}}{Corollary~\ref{cor:simplify-buy}}: Threshold Strategy}
\corSimplifyBuy*
\begin{proof}
By \cref{lem:simplify-sell}, buyer $i$'s optimal action is $x_i = 1$ iff $v_i - p_i + \sum_{j \in \Children(i)} \max_{p_j} G^{(j)}_i(v_i, p_j) \ge 0$. Define:
\[
\ts^\dagger_i(v_i) \;\coloneqq\; v_i + \sum_{j \in \Children(i)} \max_{p_j \ge 0}\; G^{(j)}_i(v_i, p_j).
\]
Then $x_i = 1$ iff $\ts^\dagger_i(v_i) \ge p_i$, which is the threshold strategy in \cref{eq:threshold}. For leaf $i$: $\ts^\dagger_i(v_i) = v_i$.
\qed
\end{proof}

\subsection{Proof of \texorpdfstring{\cref{lem:subtree-decomp}}{Lemma~\ref{lem:subtree-decomp}}: Subtree Decomposition}
\lemSubtreeDecomp*
\begin{proof}
We verify the two requirements of \cref{def:pbe} for the simplified profile $(\tbmp^*, \tbms^*)$ paired with the Bayes-consistent beliefs $\calF(\cdot \mid v_i)$ (\cref{fact:simplify}); the latter discharges part~(b), so it suffices to establish sequential rationality, part~(a)---that conditions~\eqref{eq:pricing-decomp}--\eqref{eq:threshold-decomp} make the prescribed actions optimal at every information set. We argue for a non-leaf, non-root player $i$ and fix $v_i \in \calV_i$; the root and leaf cases are analogous, involving only pricing or only buying. Consider an arbitrary deviation $(\tp_i, \tx_i)$ for player $i$, with all other players following $(\tbmp^*_{-i}, \tbmx^*_{-i})$, at an information set with observed history $h_i$, which by \cref{fact:simplify} is a function of $\bmv_{\Path(i)}$ and determines the offered price $p_i$ and the indicator $y_{\Parent(i)} \in \{0,1\}$. We first decompose player $i$'s expected selling profit into per-child gains (Steps 1--3), then identify the optimal actions and conclude sequential rationality (Step 4).

\paragraph{Step 1: Conditional independence.}
By the Markov property (\cref{asp:markov}), for distinct children $j_1, j_2 \in \Children(i)$, the subtree valuations $\bmv_{\Subtree(j_1)} = \{v_m\}_{m \in \Subtree(j_1)}$ and $\bmv_{\Subtree(j_2)} = \{v_m\}_{m \in \Subtree(j_2)}$ satisfy:
\[
\calF(\bmv_{\Subtree(j_1)}, \bmv_{\Subtree(j_2)} \mid v_i) = \calF(\bmv_{\Subtree(j_1)} \mid v_i) \cdot \calF(\bmv_{\Subtree(j_2)} \mid v_i),
\]
since the tree path between any $m_1 \in \Subtree(j_1)$ and $m_2 \in \Subtree(j_2)$ passes through $i$.

\paragraph{Step 2: Factorization of conditional trade indicators.}
For any $m \in \Subtree(j)$ where $j \in \Children(i)$, we show $y_{m \mid i} = x_j \cdot y_{m \mid j}$. By \cref{eq:y-conditional}:
\begin{align*}
y_{m \mid i} &= \prod_{\substack{\ell \in \Path(m) \cup \{m\} \\ \ell \notin \Path(i) \cup \{i\}}} x_\ell.
\end{align*}
Since $j \in \Children(i)$, we have $\Path(j) \cup \{j\} = \Path(i) \cup \{i,j\}$, and:
\[
\big(\Path(m) \cup \{m\}\big) \setminus \big(\Path(i) \cup \{i\}\big) = \{j\} \cup \Big[\big(\Path(m) \cup \{m\}\big) \setminus \big(\Path(j) \cup \{j\}\big)\Big].
\]
Therefore $y_{m \mid i} = x_j \cdot y_{m \mid j}$. In the special case $m = j$: $y_{j \mid i} = x_j$ and $y_{j \mid j} = 1$.

\paragraph{Step 3: Gain decomposition.}
The total expected selling profit of player $i$ decomposes as follows. The descendants of $i$ partition as $\Desc(i) = \bigsqcup_{j \in \Children(i)} \Subtree(j)$. Using this partition:
\begin{align}
&\bbE\bigg[\sum_{m \in \Desc(i)} \pi(m,i) \cdot p_m \cdot y_{m \mid i} \;\bigg|\; v_i\bigg] \nonumber\\
&= \sum_{j \in \Children(i)} \bbE\bigg[\sum_{m \in \Subtree(j)} \pi(m,i) \cdot p_m \cdot x_j \cdot y_{m \mid j} \;\bigg|\; v_i\bigg] \label{eq:prf:decomp:partition}\\
&= \sum_{j \in \Children(i)} \bbE_{\bmv_{\Subtree(j)} \sim \calF(\cdot | v_i)}\bigg[\sum_{m \in \Subtree(j)} \pi(m,i) \cdot p_m \cdot x_j \cdot y_{m \mid j}\bigg], \label{eq:prf:decomp:indep}
\end{align}
where \eqref{eq:prf:decomp:partition} uses Step~2, and \eqref{eq:prf:decomp:indep} uses Step~1 (the summand for subtree $j$ depends only on $\bmv_{\Subtree(j)}$). By definition (\cref{eq:gain-child}), this equals $\sum_{j \in \Children(i)} G^{(j)}_i(v_i, p_j)$. Combined with the utility expansion~\eqref{eq:prf:simplify:factor} from the proof of \cref{lem:simplify-sell}, the conditional expected utility of player $i$ is
\begin{equation}
\label{eq:lem2:decomp}
\bbE_{\bmv_{-i} \sim \calF(\cdot \mid v_i)}\!\big[U_i(h(\tp_i, \tbmp^*_{-i}, \tx_i, \tbmx^*_{-i}; \bmv); v_i) \mid h_i\big] = x_i \cdot y_{\Parent(i)} \cdot \Big[(v_i - p_i) + \sum_{j \in \Children(i)} G^{(j)}_i(v_i, p_j)\Big],
\end{equation}
where $x_i = \tx_i(h_i, v_i) \in \{0,1\}$ and $p_j = \tp_i(h_i, j, v_i)$ are player $i$'s (possibly deviating) actions.

\paragraph{Step 4: Optimal actions and sequential rationality.}
We show the right-hand side of~\eqref{eq:lem2:decomp} is maximized by the equilibrium actions prescribed by $(\tbmp^*, \tbms^*)$, for every realization of $h_i$. As $x_i \cdot y_{\Parent(i)} \ge 0$, each term $G^{(j)}_i(v_i, p_j)$ depends only on $p_j$ (given $v_i$), and the children's prices enter the bracket additively and independently, the bracket is maximized by choosing $p_j \in \argmax_{p_j \ge 0} G^{(j)}_i(v_i, p_j)$ for every $j$, attaining the value
\[
\Phi_i(v_i) \;\coloneqq\; (v_i - p_i) + \sum_{j \in \Children(i)} \max_{p_j \ge 0} G^{(j)}_i(v_i, p_j);
\]
this is exactly the pricing rule~\eqref{eq:pricing-decomp}. Given these prices, the prefactor $x_i \cdot y_{\Parent(i)}$ is maximized over $x_i \in \{0,1\}$ by $x_i = \one\{\Phi_i(v_i) \ge 0\}$ when $y_{\Parent(i)} = 1$, and the value is $0$ regardless of $x_i$ when $y_{\Parent(i)} = 0$. By the threshold rule~\eqref{eq:threshold-decomp}, $\ts^*_i(v_i) = v_i + \sum_{j} \max_{p_j \ge 0} G^{(j)}_i(v_i, p_j)$, so $\Phi_i(v_i) \ge 0 \iff \ts^*_i(v_i) \ge p_i$; hence the optimal buying action coincides with the equilibrium threshold action $x_i = \one\{\ts^*_i(v_i) \ge p_i\}$.

Therefore, at \emph{every} information set $(h_i, v_i)$, the prescribed equilibrium actions maximize the conditional expected utility in~\eqref{eq:lem2:decomp}, which is precisely $\bbU_i\big((\tbmp^*, \tbms^*), \calF(\cdot \mid v_i) \mid h_i, v_i\big)$ of~\eqref{eq:cond-utility}; that is, for every deviation $(\tp_i, \tx_i)$,
\[
\bbU_i\big((\tbmp^*, \tbms^*), \calF(\cdot \mid v_i) \mid h_i, v_i\big) \;\ge\; \bbU_i\big((\tp_i, \tbmp^*_{-i}, \tx_i, \tbmx^*_{-i}), \calF(\cdot \mid v_i) \mid h_i, v_i\big).
\]
This is the sequential-rationality requirement~(a) of \cref{def:pbe}. As $i$, $v_i$, and $h_i$ were arbitrary, and the beliefs $\calF(\cdot \mid v_i)$ are Bayes-consistent (\cref{fact:simplify}), the profile $(\tbmp^*, \tbms^*)$ is a perfect Bayesian equilibrium.
\qed
\end{proof}

\subsection{Proof of \texorpdfstring{\cref{thm:algo-exact}}{Theorem~\ref{thm:algo-exact}}: Polynomial Algorithm}
\thmAlgoExact*
\begin{proof}
We prove that \cref{alg:exact,alg:forward-dp} correctly compute a perfect Bayesian equilibrium in $O(N d\, S^4)$ time, where $d$ is the depth of $\calT$.

\paragraph{Price Simplification.}
For a fixed non-leaf player $i$ and child $j \in \Children(i)$, define the finite set of candidate prices:
\begin{equation}
\label{eq:prf:algo:candidates}
\calP_j \;=\; \big\{s^*(j, \ell) : \ell \in [S]\big\}.
\end{equation}
We show that the optimal price $p^*_j$ must lie in $\calP_j$ (or be $0$ if all values in $\calP_j$ are negative).

\emph{Case 1:} All elements in $\calP_j$ are negative. For any $p_j \ge 0$ and any $\ell \in [S]$: $\ts^*_j(w_\ell) = s^*(j,\ell) < 0 \le p_j$, so $x_j = \one\{s^*(j,\ell) \ge p_j\} = 0$. Therefore $G^{(j)}_i(w_k, p_j) = 0$ for all $p_j \ge 0$, and setting $p_j = 0$ is optimal.

\emph{Case 2:} $\calP_j$ contains nonneg elements. Let $s_{\max} = \max \calP_j$. Setting $p_j > s_{\max}$ gives $G^{(j)}_i(w_k, p_j) = 0$, weakly dominated by $p_j = s_{\max}$. For $p_j \le s_{\max}$, consider the effect of increasing $p_j$ within an interval $(s^*(j, \ell_a), s^*(j, \ell_b))$ between two consecutive values in $\calP_j$. In this interval, the set of buying types $\{\ell : s^*(j,\ell) \ge p_j\}$ remains constant, while the per-trade revenue $\pi(j,i) \cdot p_j$ weakly increases. Hence:
\begin{equation}
\label{eq:prf:algo:monotone}
G^{(j)}_i(w_k, s^*(j,\ell_b)) \;\ge\; G^{(j)}_i(w_k, p_j) \quad \text{for all } p_j \in (s^*(j,\ell_a), s^*(j,\ell_b)).
\end{equation}
This shows that the optimal $p^*_j$ lies in $\calP_j$.

\paragraph{Backward Dynamic Program.}
\cref{alg:exact} processes the tree bottom-up with terminal condition $s^*(i,k) = w_k$ for all leaves $i \in \calL$ and $k \in [S]$. For each non-leaf node $i$ (processed after all descendants), each valuation $w_k$, and each child $j \in \Children(i)$, the algorithm enumerates candidate prices $p_j = s^*(j, k')$ for $k' \in \calP_j$ and computes the gain $G^{(j)}_i(w_k, s^*(j,k'))$ via the Forward DP (\cref{alg:forward-dp}). The optimal price is then:
\begin{equation}
\label{eq:prf:algo:optimal}
l^*(j,k) \in \argmax_{k' \in \calP_j} G^{(j)}_i(w_k, s^*(j,k')), \qquad p^*(j,k) = s^*(j, l^*(j,k)).
\end{equation}
After processing all children, the threshold value is updated:
\begin{equation}
\label{eq:prf:algo:threshold}
s^*(i,k) = w_k + \sum_{j \in \Children(i)} G^{(j)}_i\big(w_k, p^*(j,k)\big).
\end{equation}

\paragraph{Computing the Gain Function via Forward DP.}
The core difficulty is computing $G^{(j)}_i(w_k, p_j)$, which involves the expected reallocated profit from all trades in $\Subtree(j)$. By \cref{eq:gain-child}:
\begin{equation}
\label{eq:prf:algo:gain-expand}
G^{(j)}_i(w_k, p_j) = \bbE_{\bmv_{\Subtree(j)} \sim \calF(\cdot|w_k)}\bigg[\sum_{m \in \Subtree(j)} \pi(m,i) \cdot p_m \cdot x_j \cdot y_{m|j}\bigg].
\end{equation}
Naively evaluating this sum over all possible valuation profiles $\bmv_{\Subtree(j)}$ leads to exponential complexity. We overcome this using a Forward DP.

Define the event $Y(m, \ell)$ for $m \in \Subtree(j)$ and $\ell \in [S]$ as:
\[
Y(m, \ell) \;\coloneqq\; \{v_m = w_\ell\} \cap \{y_{m|j} \cdot x_j = 1\},
\]
\ie, player $m$ has valuation $w_\ell$ and all trades from $j$ down to $m$ (including both $j$ and $m$) have occurred.

\emph{Initialization (trade $j$).} For $\ell \in [S]$, given $v_i = w_k$ and price $p_j = s^*(j,k')$:
\begin{equation}
\label{eq:prf:algo:Y-init}
\Pr[Y(j, \ell)] = \Pr[v_j = w_\ell, x_j = 1 \mid v_i = w_k] = q_{j,k,\ell} \cdot \one\{s^*(j,\ell) \ge s^*(j,k')\}.
\end{equation}
The second equality uses: $v_j = w_\ell$ has probability $q_{j,k,\ell}$ conditioned on $v_i = w_k$ (by \cref{asp:markov}), and $x_j = \one\{s^*(j,\ell) \ge p_j\} = \one\{s^*(j,\ell) \ge s^*(j,k')\}$ by the threshold strategy (\cref{cor:simplify-buy}).

\emph{Recursion (from $m$ to its child $m'$).} For each $m \in \Subtree(j)$, $m' \in \Children(m)$, and $\ell' \in [S]$:
\begin{equation}
\label{eq:prf:algo:Y-recurse}
\begin{aligned}
\Pr[Y(m', \ell')] &= \Pr[v_{m'} = w_{\ell'},\; y_{m'|j} \cdot x_j = 1] \\
&= \sum_{\ell \in [S]} \Pr[v_{m'} = w_{\ell'},\; x_{m'} = 1,\; Y(m,\ell)] \\
&= \sum_{\ell \in [S]} \Pr[v_{m'} = w_{\ell'},\; x_{m'} = 1 \mid Y(m,\ell)] \cdot \Pr[Y(m,\ell)] \\
&= \sum_{\ell \in [S]} \Pr[v_{m'} = w_{\ell'},\; x_{m'} = 1 \mid v_m = w_\ell] \cdot \Pr[Y(m,\ell)] \\
&= \sum_{\ell \in [S]} q_{m',\ell,\ell'} \cdot \one\{s^*(m',\ell') \ge p^*(m',\ell)\} \cdot \Pr[Y(m,\ell)].
\end{aligned}
\end{equation}
The fourth equality uses the Markov property: conditioned on $v_m = w_\ell$, the event $\{v_{m'} = w_{\ell'}, x_{m'} = 1\}$ is independent of the events involving ancestors of $m$. The fifth equality uses $\Pr[v_{m'} = w_{\ell'} \mid v_m = w_\ell] = q_{m',\ell,\ell'}$ and $x_{m'} = \one\{s^*(m',\ell') \ge p^*(m',\ell)\}$.

\paragraph{Accumulating the Gain.}
We express $G^{(j)}_i(w_k, p_j)$ using the $Y$-probabilities. For each edge $(m, m')$ with $m \in \Subtree(j)$ and $m' \in \Children(m)$, define the one-shot revenue when $v_m = w_\ell$:
\begin{equation}
\label{eq:prf:algo:oneshot}
r(m', \ell) \;\coloneqq\; p^*(m', \ell) \cdot \sum_{\ell' \in [S]} q_{m',\ell,\ell'} \cdot \one\{s^*(m', \ell') \ge p^*(m', \ell)\}.
\end{equation}
The contribution of trade $m'$ to the gain is:
\begin{align}
&\bbE\big[\pi(m',i) \cdot p^*(m', v_m) \cdot x_j \cdot y_{m'|j} \;\big|\; v_i = w_k\big] \nonumber\\
&= \sum_{\ell \in [S]} \pi(m',i) \cdot p^*(m', \ell) \cdot \Pr[v_m = w_\ell,\; x_{m'} = 1,\; y_{m|j} \cdot x_j = 1] \nonumber\\
&= \sum_{\ell \in [S]} \pi(m',i) \cdot p^*(m', \ell) \cdot \Pr[Y(m,\ell)] \cdot \sum_{\ell'} q_{m',\ell,\ell'} \cdot \one\{s^*(m',\ell') \ge p^*(m',\ell)\} \nonumber\\
&= \sum_{\ell \in [S]} \pi(m',i) \cdot r(m', \ell) \cdot \Pr[Y(m,\ell)]. \label{eq:prf:algo:contrib}
\end{align}
For the root of the subtree ($m' = j$, with $m = i$), we use $p_j = s^*(j,k')$:
\begin{equation}
\label{eq:prf:algo:root-contrib}
\text{root contribution} = \pi(j,i) \cdot s^*(j,k') \cdot \sum_{\ell \in [S]} \Pr[Y(j,\ell)].
\end{equation}
Summing~\eqref{eq:prf:algo:contrib} over all edges in $\Subtree(j)$ and adding~\eqref{eq:prf:algo:root-contrib}:
\begin{equation}
\label{eq:prf:algo:gain-formula}
G^{(j)}_i(w_k, s^*(j,k')) = \pi(j,i) \cdot s^*(j,k') \cdot \sum_{\ell} \Pr[Y(j,\ell)] + \sum_{\substack{m \in \Subtree(j) \\ m' \in \Children(m)}} \sum_{\ell \in [S]} \pi(m',i) \cdot r(m',\ell) \cdot \Pr[Y(m,\ell)].
\end{equation}
This is precisely the computation in \cref{alg:forward-dp}.

\paragraph{Correctness.}
By construction, the strategies $\ts^*_i(w_k) = s^*(i,k)$ and $\tp^*_j(w_k) = p^*(j,k)$ satisfy $p^*(j,k) \in \argmax_{p_j \in \calP_j} G^{(j)}_i(w_k, p_j)$ for each edge $(i,j)$ and each $k \in [S]$. By \eqref{SE:transform}, this strategy profile constitutes a perfect Bayesian equilibrium: each seller's pricing maximizes $G^{(j)}_i$ independently across children (\cref{lem:subtree-decomp}), and the threshold strategy is a best response (\cref{cor:simplify-buy}), so the prescribed actions are optimal at every information set.

\paragraph{Complexity.}
The recursion~\eqref{eq:prf:algo:Y-recurse} processes each edge $(m, m')$ in $\Subtree(j)$ in $O(S^2)$ time (summing over $\ell$ for each $\ell'$). Over all $|\Subtree(j)|$ nodes, the Forward DP takes $O(|\Subtree(j)| \cdot S^2)$ time. This is repeated for $S$ seller valuations and at most $S$ candidate prices, giving $O(|\Subtree(j)| \cdot S^4)$ per edge $(i,j)$. The Backward DP sums over all edges:
\[
\sum_{j \in \calI \setminus \{r\}} O(|\Subtree(j)| \cdot S^4) \;=\; O\bigg(\sum_{j \in \calI \setminus \{r\}} |\Subtree(j)|\bigg) \cdot S^4 \;=\; O(N d\, S^4),
\]
using $\sum_{j \neq r} |\Subtree(j)| = \sum_{m \in \calI} |\Path(m)| = \sum_{m} \mathrm{depth}(m) \le N d$: each node $m$ is counted once for each $j \neq r$ with $m \in \Subtree(j)$---namely $m$ itself and each of its strict ancestors below the root, $\mathrm{depth}(m)$ in total---where $d = \max_m \mathrm{depth}(m)$ is the depth of $\calT$, at most $N$.
\qed
\end{proof}

\subsection{Proof of \texorpdfstring{\cref{thm:algo-approx}}{Theorem~\ref{thm:algo-approx}}: FPTAS}
\thmAlgoApprox*
\begin{proof}
The algorithm has three steps: (i)~discretize the continuous valuation supports, (ii)~compute an exact equilibrium of the discretized model via \cref{thm:algo-exact}, and (iii)~lift the discrete strategies to continuous strategies.

\paragraph{Step 1: Discretization.}
Fix a discretization precision $\varepsilon_0 > 0$ (to be determined). Define the grid:
\[
S \;:=\; \big\lfloor \varepsilon_0^{-1}(H_0 - L_0)\big\rfloor + 1, \qquad w_\ell \;:=\; \ell \cdot \varepsilon_0 + L_0,\quad 0 \le \ell \le S.
\]
These $S+1$ points $w_0, \ldots, w_S$ define $S$ discrete valuations (the states of the discretized model are indexed by $\ell \in \{0,\ldots,S-1\}$). Since $S \ge \varepsilon_0^{-1}(H_0 - L_0)$, the top point satisfies $w_S = S\varepsilon_0 + L_0 \ge H_0$, so the cells $I_\ell = [w_\ell, w_{\ell+1})$ for $0 \le \ell \le S-1$ cover $[L_0, H_0]$ and $F_i(w_S \mid \cdot) = 1$; in particular the boundary transition $q_{i,k,S-1}$ is well defined. Construct a discretized model by defining transition probabilities between grid points: for each non-root $i$ and $k, \ell \in \{0, \ldots, S-1\}$,
\[
q_{i,k,\ell} \;:=\; F_i(w_{\ell+1} \mid w_k) - F_i(w_\ell \mid w_k).
\]
Denote by $\hat{\calF}$ the distribution induced by $Q = \{q_{i,k,\ell}\}$.

\paragraph{Step 2: Exact Equilibrium of Discretized Model.}
Apply \cref{thm:algo-exact} to the discretized model $(\calT, \pi, Q)$ to obtain exact equilibrium values $\{s^*(i,k)\}$ and $\{p^*(i,k)\}$ for all $i \in \calI \setminus \{r\}$ and $k \in \{0,\ldots,S-1\}$.

\paragraph{Step 3: Lifting to Continuous Strategies.}
For any continuous valuation $v_i \in [L_0, H_0]$, define its grid index $k(v_i) := \lfloor \varepsilon_0^{-1}(v_i - L_0)\rfloor$, so that $w_{k(v_i)} \le v_i < w_{k(v_i)+1}$. The lifted strategies are piecewise constant:
\[
\ts^*_i(v_i) \;:=\; s^*\big(i, k(v_i)\big), \qquad \tp^*_j(v_i) \;:=\; p^*\big(j, k(v_i)\big).
\]
Since the lifted threshold $\ts^*_j$ is piecewise constant over grid cells, the optimal deviation for player $i$ can be restricted to the finite set $\{s^*(j,\ell) : 0 \le \ell < S\}$ without loss of generality, by the same price candidate argument as in \cref{thm:algo-exact}.

\paragraph{Error Bound.}
The core step is bounding the approximation error introduced by discretization. Write $G^{(j)}_i(v_i, p_j)$ for the gain under the true distribution $\calF$, and $G^{(j)}_{\hat{\calF},i}(w_k, p_j)$ for the gain under the discretized distribution $\hat{\calF}$. The gain function expands as:
\[
G^{(j)}_i(v_i, p_j) = \bbE_{\bmv_{\Subtree(j)} \sim \calF(\cdot|v_i)}\!\big[R^{(j)}_i(p_j, \bmv_{\Subtree(j)})\big],
\]
where $R^{(j)}_i$ is the profit from trades in $\Subtree(j)$ given the lifted strategies. Since the lifted strategies are piecewise constant over grid cells, $R^{(j)}_i(\bmv_{\Subtree(j)})$ depends on $\bmv_{\Subtree(j)}$ only through the cell indices $\{k(v_m)\}_{m \in \Subtree(j)}$.

\emph{Probability ratio bound.} Using the log-Lipschitz condition (\cref{asp:lipschitz}), for any node $m$ and conditioning value $v_{\Parent(m)} \in [w_k, w_{k+1})$:
\[
\big|\log f_m(v_m \mid v_{\Parent(m)}) - \log f_m(v_m \mid w_k)\big| \;\le\; K_0 \cdot |v_{\Parent(m)} - w_k| \;\le\; K_0 \varepsilon_0.
\]
For a path $m_1, m_2, \ldots, m_d$ in $\Subtree(j)$ with $v_{m_l} \in [w_{k_l}, w_{k_l+1})$, the joint density ratio satisfies:
\begin{equation}
\label{eq:prf:fptas:density-ratio}
e^{-d K_0 \varepsilon_0} \;\le\; \frac{\prod_{l=1}^d f_{m_l}(v_{m_l} \mid v_{\Parent(m_l)})}{\prod_{l=1}^d f_{m_l}(v_{m_l} \mid w_{k(\Parent(m_l))})} \;\le\; e^{d K_0 \varepsilon_0}.
\end{equation}
Since any path in the tree has length at most $N$, integrating over cells and using the Markov factorization $\calF(\bmv_{\Subtree(j)} \mid v_i) = \prod_{m \in \Subtree(j)} f_m(v_m \mid v_{\Parent(m)})$:
\begin{equation}
\label{eq:fptas:prob-bound}
e^{-dK_0\varepsilon_0}\,\Pr_{\hat{\calF}}[A] \;\le\; \Pr_{\calF}[A \mid v_i] \;\le\; e^{dK_0\varepsilon_0}\,\Pr_{\hat{\calF}}[A \mid w_{k(v_i)}],
\end{equation}
for any event $A$ that depends on $\bmv_{\Subtree(j)}$ only through cell indices.

\emph{Gain bound.} Fix a node $i$ with $v_i \in [w_{k_i}, w_{k_i+1})$. Because the lifted strategies are piecewise constant over grid cells, for each child $j$ the optimum of $G^{(j)}_i(v_i, \cdot)$ over $\bbR_+$ is attained on the finite grid $\calP_j = \{s^*(j,\ell)\}_\ell$, for both the true and the discretized models (the candidate-price argument of \cref{thm:algo-exact}). As the per-subtree profit $R^{(j)}_i \ge 0$ depends on $\bmv_{\Subtree(j)}$ only through cell indices, \eqref{eq:fptas:prob-bound} gives, for every $p \in \calP_j$,
\begin{equation}
\label{eq:prf:fptas:gain-upper}
e^{-d K_0 \varepsilon_0}\, G^{(j)}_{\hat{\calF},i}(w_{k_i}, p) \;\le\; G^{(j)}_i(v_i, p) \;\le\; e^{d K_0 \varepsilon_0}\, G^{(j)}_{\hat{\calF},i}(w_{k_i}, p).
\end{equation}
Writing $M_i \coloneqq \sum_{j} \max_{p \in \calP_j} G^{(j)}_i(v_i, p)$ and $\widehat{M}_i \coloneqq \sum_{j} \max_{p \in \calP_j} G^{(j)}_{\hat{\calF},i}(w_{k_i}, p)$ (both nonnegative), taking the max over $\calP_j$ and summing over $j$ yields
\begin{equation}
\label{eq:prf:fptas:M}
e^{-d K_0 \varepsilon_0}\, \widehat{M}_i \;\le\; M_i \;\le\; e^{d K_0 \varepsilon_0}\, \widehat{M}_i .
\end{equation}
By \cref{def:eps-equilibrium}, the best-response threshold under the lifted profile is $\ts^\dagger_i(v_i) = v_i + M_i$, while the algorithm outputs $\ts^*_i(v_i) = s^*(i,k_i) = w_{k_i} + \widehat{M}_i$ (the exact discretized equilibrium, \cref{eq:prf:algo:threshold}).

\emph{Threshold accuracy.} Let $R \coloneqq \max(|L_0|, |H_0|)$, so that $|v_i| \le R$. From~\eqref{eq:prf:fptas:M}, $|\widehat{M}_i - M_i| \le (e^{d K_0 \varepsilon_0} - 1) M_i$; combined with $0 \le v_i - w_{k_i} < \varepsilon_0$,
\[
|\ts^*_i(v_i) - \ts^\dagger_i(v_i)| = \big|(w_{k_i} - v_i) + (\widehat{M}_i - M_i)\big| \;\le\; \varepsilon_0 + (e^{d K_0 \varepsilon_0} - 1)\, M_i .
\]
Since $M_i = \ts^\dagger_i(v_i) - v_i \le \ts^\dagger_i(v_i) + R \le (1+R)\max\{1, \ts^\dagger_i(v_i)\}$ and $e^x - 1 \le 2x$ for $x \in [0,1]$,
\[
|\ts^*_i(v_i) - \ts^\dagger_i(v_i)| \;\le\; \varepsilon_0\big(1 + 2 d K_0 (1+R)\big)\max\{1, \ts^\dagger_i(v_i)\}.
\]

\emph{Pricing near-optimality.} The chosen price $\tp^*_j(v_i) = p^*(j,k_i)$ maximizes $G^{(j)}_{\hat{\calF},i}(w_{k_i}, \cdot)$ over $\calP_j$, so applying the two sides of~\eqref{eq:prf:fptas:gain-upper} gives $G^{(j)}_i(v_i, \tp^*_j(v_i)) \ge e^{-2 d K_0 \varepsilon_0} \max_{p \in \calP_j} G^{(j)}_i(v_i, p)$. With $1 - e^{-y} \le y$,
\[
\max_{p \ge 0} G^{(j)}_i(v_i, p) - G^{(j)}_i(v_i, \tp^*_j(v_i)) \;\le\; 2 d K_0 \varepsilon_0\, M_i \;\le\; 2 d K_0 \varepsilon_0 (1+R)\max\{1, \ts^\dagger_i(v_i)\}.
\]

\emph{Choice of $\varepsilon_0$.} Setting
\begin{equation}
\label{eq:prf:fptas:eps0}
\varepsilon_0 = \frac{\min(\varepsilon,1)}{8\,(1 + d K_0)\,(1 + R)}, \qquad R = \max(|L_0|, |H_0|),
\end{equation}
ensures $d K_0 \varepsilon_0 \le \tfrac18 \le 1$ and bounds both bracketed factors above by $\varepsilon$. Hence conditions~(a) and~(b) of \cref{def:eps-equilibrium} hold for every player, so the lifted profile is an $\varepsilon$-approximate equilibrium. (The definition $R = \max(|L_0|,|H_0|)$ covers $H_0 < 0$ as well, since then all valuations lie in $[L_0, H_0]$ with $|v_i| \le |L_0| = R$.)

\paragraph{Complexity.}
The grid size is $S = O(\varepsilon_0^{-1}(H_0 - L_0)) = O(dK_0(H_0-L_0)(1+|L_0|)/\varepsilon)$. The exact algorithm runs in $O(N d\, S^4)$ time, giving overall complexity $\poly(N, \varepsilon^{-1}, H_0-L_0, K_0)$.
\qed
\end{proof}

\subsection{Homogeneity Scaling Lemma}
\begin{lemma}[Homogeneity Scaling]
\label{lem:scaling}
Let \cref{asp:markov,asp:posi,asp:homo} hold and let $(\tbmp^*, \tbms^*)$ be the simplified equilibrium under a PRM $\pi$ with the smallest-maximizer tie-break.
\emph{(1) Linearity:} for every $i \in \calI \setminus \{r\}$ there is a constant $\hat{s}_i \ge 1$ with $\ts^*_i(v) = \hat{s}_i\, v$ for all $v > 0$, and for every $j \in \calI \setminus \{r\}$ a constant $\hat{p}_j \ge 0$ with $\tp^*_j(v) = \hat{p}_j\, v$ for all $v > 0$.
\emph{(2) Mechanism-independent trade events:} if the seller $i = \Parent(j)$ collects from $\Subtree(j)$ only the one-shot payment of trade $j$---\ie, $\pi(m,i) = 0$ for all $m \in \Desc(j)$---then $E_j(\tbmp^*, \tbms^*) = \{\bmv : v_j / v_{\Parent(j)} \ge \tau_j\}$, where the threshold $\tau_j$ depends only on the conditional law $F_j$, not on $\pi$.
\end{lemma}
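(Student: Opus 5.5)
We prove part (1) by backward induction over the tree, leaves to root, using the equilibrium characterization \eqref{SE:transform} from \cref{lem:subtree-decomp}. The inductive hypothesis is that every node $m \in \Subtree(j)$ satisfies $\ts^*_m(v) = \hat s_m v$ for $v > 0$ with $\hat s_m \ge 1$, and every edge inside $\Subtree(j)$ satisfies $\tp^*_m(v) = \hat p_m v$. For a leaf, $\ts^*_i(v) = v$, so $\hat s_i = 1$.

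For the inductive step, fix a non-leaf $i$, a child $j$, and $\alpha > 0$. We aim for the identity
\[
G^{(j)}_i(\alpha v, \alpha p) = \alpha\, G^{(j)}_i(v, p), \qquad v > 0,\ p \ge 0.
\]
Couple the two valuation profiles by $\bmv'_{\Subtree(j)} = \alpha \bmv_{\Subtree(j)}$. By \cref{asp:homo}, applied edge by edge through the Markov factorization of \cref{asp:markov}, the conditional law of $\alpha\bmv_{\Subtree(j)}$ given $v_i = v$ is the law of $\bmv_{\Subtree(j)}$ given $v_i = \alpha v$. \cref{asp:posi} guarantees all valuations are positive, so the hypothesis is usable everywhere. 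Under this coupling, each acceptance indicator is invariant:
\[
\one\{\hat s_j \alpha v_j \ge \alpha p\} = \one\{\hat s_j v_j \ge p\}, \qquad \one\{\hat s_m \alpha v_m \ge \hat p_m \alpha v_{\Parent(m)}\} = \one\{\hat s_m v_m \ge \hat p_m v_{\Parent(m)}\}.
\]
Hence $x_j$ and every $y_{m\mid j}$ are unchanged, while every payment $p$ and $p_m$ scales by $\alpha$. This gives the identity.

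It follows that $\argmax_p G^{(j)}_i(\alpha v, p) = \alpha \argmax_p G^{(j)}_i(v, p)$. The smallest-maximizer tie-break commutes with multiplication by $\alpha > 0$, so $\tp^*_j(\alpha v) = \alpha \tp^*_j(v)$. Taking $v = 1$ gives $\hat p_j \coloneqq \tp^*_j(1) \ge 0$. Likewise $\max_p G^{(j)}_i(\alpha v, \cdot) = \alpha \max_p G^{(j)}_i(v, \cdot)$, so
\[
\ts^*_i(v) = v\Bigl(1 + \sum_{j \in \Children(i)} \max_p G^{(j)}_i(1, p)\Bigr) \eqqcolon \hat s_i v .
\]
Since $G \ge 0$, we get $\hat s_i \ge 1$, which closes the induction.

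The main obstacle is the measure-theoretic coupling step. One must verify that \cref{asp:homo}, stated for CDFs, transfers to the full joint law of the subtree. I would do this by induction down the subtree on finite-dimensional CDFs, using the product form $f(\bmv) = f_r(v_r)\prod_{i\ne r} f_i(v_i \mid v_{\Parent(i)})$.

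For part (2), assume $\pi(m,i) = 0$ for all $m \in \Desc(j)$. Then $G^{(j)}_i(v_i, p) = \pi(j,i)\, p \Pr[\hat s_j v_j \ge p \mid v_i]$. If $\pi(j,i) > 0$, the maximizer does not depend on the value of $\pi(j,i)$. Substituting $p = \hat p v_i$ and using homogeneity, $\hat p_j$ maximizes
\[
\hat p \cdot \Pr\bigl[v_j / v_i \ge \hat p / \hat s_j \bigm| v_i = 1\bigr].
\]
Writing $\tau = \hat p / \hat s_j$, this equals $\hat s_j\, \tau\,(1 - F_j(\tau^- \mid 1))$. So $\tau_j \coloneqq \hat p_j/\hat s_j$ is the smallest maximizer of $\tau(1 - F_j(\tau^- \mid 1))$, which depends only on $F_j$. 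The trade event is then
\[
E_j = \{\hat s_j v_j \ge \hat p_j v_i\} = \{v_j / v_i \ge \tau_j\}.
\]
The degenerate case $\pi(j,i) = 0$ needs a remark: there $G \equiv 0$ and the tie-break gives $p = 0$, so $\tau_j = 0$. This must be reconciled with, or excluded by, the statement's intent; in the baseline context $\pi(j,i) = 1$, so the case does not arise there.
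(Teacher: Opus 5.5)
Your proposal is correct and follows essentially the same route as the paper. Both arguments use backward induction to establish $G^{(j)}_i(\alpha v,\alpha p)=\alpha\,G^{(j)}_i(v,p)$: homogeneity, propagated down the subtree, scales its conditional law; every acceptance indicator is invariant while payments scale. Linearity of prices (the smallest-maximizer rule commutes with scaling) and of thresholds then follows, and part (2) reduces to the substitution $p=\hat s_j v_i t$.

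Your version is slightly more careful than the paper's in two places. First, the paper writes $1-F_j(t\mid 1)$ where $\Pr[v_j\ge t\mid v_i=1]=1-F_j(t^-\mid 1)$ is meant, as you have it. Second, the paper tacitly treats $\pi(j,i)$ as positive. Your degenerate case $\pi(j,i)=0$, which forces $\tau_j=0$, is excluded only implicitly, but this is harmless because the lemma is applied only along baseline paths where $\pi(j,i)=1$.
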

\begin{proof}
\smallskip\noindent\textit{Part 1 (Linearity).}\enspace
We argue by backward induction on the tree that every equilibrium threshold $\ts^*_i$ and every posted price $\tp^*_j$ is homogeneous of degree one. For a leaf $i$, $\ts^*_i(v) = v = \hat{s}_i v$ with $\hat{s}_i = 1$. Fix a non-leaf $i$ and assume the claim for every node in $\Desc(i)$. Fix a child $j \in \Children(i)$ and recall from \cref{eq:gain-child} that, writing $\bmw \coloneqq \bmv_{\Subtree(j)}$,
\[
G^{(j)}_i(v_i, p) = \bbE_{\bmw \sim \calF(\cdot \mid v_i)}\big[\Psi_p(\bmw)\big],
\qquad
\Psi_p(\bmw) \coloneqq \one\{\ts^*_j(w_j) \ge p\}\Big(\pi(j,i)\,p + \!\!\sum_{m \in \Desc(j)}\!\! \pi(m,i)\, \tp^*_m(w_{\Parent(m)})\, y_{m \mid j}(\bmw)\Big),
\]
where $y_{m\mid j}(\bmw) = \prod_{\ell}\one\{\ts^*_\ell(w_\ell) \ge \tp^*_\ell(w_{\Parent(\ell)})\}$ ranges over the edges on the path from $j$ to $m$ (\cref{eq:y-conditional}). We establish $G^{(j)}_i(\alpha v_i, \alpha p) = \alpha\, G^{(j)}_i(v_i, p)$ for every $\alpha > 0$ in two steps.

\emph{Step 1 (the conditional law scales).} We claim that if $\bmw \sim \calF(\cdot \mid v_i)$ then $\alpha \bmw \sim \calF(\cdot \mid \alpha v_i)$. By the Markov property (\cref{asp:markov}), the conditional law of $\bmw$ given $v_i$ factorizes along the subtree,
\[
\calF(\bmw \mid v_i) = f_j(w_j \mid v_i)\!\!\prod_{m \in \Desc(j)}\!\! f_m\big(w_m \mid w_{\Parent(m)}\big),
\]
with the root factor conditioned on $v_i$. By homogeneity (\cref{asp:homo}), each conditional CDF satisfies $F_m(\alpha x \mid \alpha y) = F_m(x \mid y)$; equivalently, if $W \sim F_m(\cdot \mid y)$ then $\alpha W \sim F_m(\cdot \mid \alpha y)$. Applying this to the top factor ($W = w_j$, $y = v_i$) scales $w_j \mapsto \alpha w_j$ and its conditioning value $v_i \mapsto \alpha v_i$; the scaled $\alpha w_j$ is in turn the conditioning value for the factors of its children, so the rescaling propagates down the subtree by induction on depth. Hence the joint law of $\alpha\bmw$ is exactly $\calF(\cdot \mid \alpha v_i)$, proving the claim.

\emph{Step 2 (the integrand scales).} We claim $\Psi_{\alpha p}(\alpha \bmw) = \alpha\,\Psi_p(\bmw)$ pointwise. Every indicator is invariant: by the inductive hypothesis $\ts^*_\ell(\alpha w_\ell) = \alpha\ts^*_\ell(w_\ell)$ and $\tp^*_\ell(\alpha w_{\Parent(\ell)}) = \alpha\tp^*_\ell(w_{\Parent(\ell)})$ for every $\ell \in \Desc(i)$, so
\[
\one\{\ts^*_j(\alpha w_j) \ge \alpha p\} = \one\{\ts^*_j(w_j) \ge p\},
\qquad
\one\{\ts^*_\ell(\alpha w_\ell) \ge \tp^*_\ell(\alpha w_{\Parent(\ell)})\} = \one\{\ts^*_\ell(w_\ell) \ge \tp^*_\ell(w_{\Parent(\ell)})\},
\]
whence $\one\{\ts^*_j(\alpha w_j)\ge\alpha p\} = \one\{\ts^*_j(w_j)\ge p\}$ and $y_{m\mid j}(\alpha\bmw) = y_{m\mid j}(\bmw)$. The monetary terms scale by $\alpha$: $\pi(j,i)(\alpha p) = \alpha\,\pi(j,i)\,p$ and $\pi(m,i)\tp^*_m(\alpha w_{\Parent(m)}) = \alpha\,\pi(m,i)\tp^*_m(w_{\Parent(m)})$. As $\Psi_{\alpha p}(\alpha\bmw)$ is the (unchanged) indicator times the ($\alpha$-scaled) monetary bracket, the claim follows.

Combining the two steps via the change of variables $\bmu = \alpha\bmw$,
\[
G^{(j)}_i(\alpha v_i, \alpha p)
= \bbE_{\bmu \sim \calF(\cdot \mid \alpha v_i)}\big[\Psi_{\alpha p}(\bmu)\big]
\overset{\text{Step 1}}{=} \bbE_{\bmw \sim \calF(\cdot \mid v_i)}\big[\Psi_{\alpha p}(\alpha\bmw)\big]
\overset{\text{Step 2}}{=} \alpha\,\bbE_{\bmw \sim \calF(\cdot \mid v_i)}\big[\Psi_p(\bmw)\big]
= \alpha\, G^{(j)}_i(v_i, p).
\]
Consequently $\max_{p \ge 0} G^{(j)}_i(\alpha v_i, p) = \alpha \max_{p \ge 0} G^{(j)}_i(v_i, p)$ (substitute $p = \alpha p'$), and the argmax set scales by $\alpha$, so its smallest element does too; hence $\tp^*_j(\alpha v_i) = \alpha\, \tp^*_j(v_i)$. Setting $\alpha = 1/v_i$ gives $\tp^*_j(v) = \hat{p}_j v$ with $\hat{p}_j = \tp^*_j(1) \ge 0$. Likewise, by the threshold decomposition~\eqref{eq:threshold-decomp}, $\ts^*_i(v_i) = v_i + \sum_{j \in \Children(i)} \max_{p \ge 0} G^{(j)}_i(v_i, p)$ is a sum of degree-one homogeneous functions, so $\ts^*_i(v) = \hat{s}_i v$ with $\hat{s}_i = \ts^*_i(1) \ge 1$. This closes the induction.

\smallskip\noindent\textit{Part 2 (Mechanism-independent trade events).}\enspace
Suppose $\pi(m,i) = 0$ for all $m \in \Desc(j)$. Then the downstream sum in $G^{(j)}_i$ vanishes and, using $\ts^*_j(v_j) = \hat{s}_j v_j$ from Part~1,
\[
G^{(j)}_i(v_i, p) = \pi(j,i)\, p \, \Pr[\hat{s}_j v_j \ge p \mid v_i] = \pi(j,i)\, p\, \big(1 - F_j(p/\hat{s}_j \mid v_i)\big).
\]
By \cref{asp:homo}, $F_j(p/\hat{s}_j \mid v_i) = F_j\big(p/(\hat{s}_j v_i) \mid 1\big)$. Substituting $p = \hat{s}_j v_i\, t$,
\[
G^{(j)}_i(v_i, \hat{s}_j v_i t) = \pi(j,i)\, \hat{s}_j v_i \cdot t\,\big(1 - F_j(t \mid 1)\big),
\]
whose maximizer over $t \ge 0$ is $\tau_j \in \argmax_{t \ge 0} t\,(1 - F_j(t \mid 1))$, depending only on $F_j$ (the positive factors $\pi(j,i)$ and $\hat{s}_j v_i$ do not affect the maximizer; the smallest-maximizer rule selects the same $\tau_j$ for every $v_i$). Hence $\tp^*_j(v_i) = \hat{s}_j \tau_j\, v_i$, and
\[
E_j(\tbmp^*, \tbms^*) = \{\ts^*_j(v_j) \ge \tp^*_j(v_i)\} = \{\hat{s}_j v_j \ge \hat{s}_j \tau_j v_i\} = \{v_j / v_i \ge \tau_j\},
\]
with $\tau_j$ depending only on $F_j$.
\qed
\end{proof}

\subsection{Proof of \texorpdfstring{\cref{thm:partial}}{Theorem~\ref{thm:partial}}: Local Comparison}
\thmPartial*
\begin{proof}
For $k \in \{1, 2\}$, abbreviate the trade-event of trade $m$ under mechanism $\pi^k$ as $E^k_m \coloneqq E_m(\tbmp^{k,*}, \tbms^{k,*})$. We prove Part~(1) ($E^1_m = E^2_m$ for $m \neq j$) and Part~(2) ($E^2_j \subseteq E^1_j$) separately.

\paragraph{Proof of Part~(1).}
We partition the trades $m \in \calI \setminus \{r\}$ with $m \neq j$ into four groups: (A) $m \in \Desc(j)$; (B) $m \in \Subtree(c)$ for a sibling $c \in \Children(i) \setminus \{j\}$; (C) $m \in \Path(i) \cup \{i\}$, the trades on the root-to-$i$ path together with trade $i$; and (D) the remaining trades, $m \notin \Subtree(i) \cup \Path(i)$. Since $\Desc(i) = \bigsqcup_{c \in \Children(i)} \Subtree(c)$ and $\Subtree(i) = \{i\} \cup \Desc(i)$, these four groups partition $\{m \neq j\}$.

\emph{Case A: $m \in \Desc(j)$.}
By condition~(a), $\pi^1(n, k) = \pi^2(n, k)$ for all $k \neq i$. By the backward induction in \cref{lem:simplify-sell}, the equilibrium strategies internal to $\Subtree(j)$---namely the thresholds $\{\ts_n\}_{n \in \Subtree(j)}$ and the prices $\{\tp_n\}_{n \in \Desc(j)}$ for trades whose seller $\Parent(n)$ lies in $\Subtree(j)$---are computed bottom-up from the gain functions $\{G^{(n')}_n\}_{n \in \Subtree(j),\, n' \in \Children(n)}$, which depend on $\pi$ only through the reallocations $\{\pi(n, k) : n \in \Desc(j),\, k \in \Subtree(j) \cap \Path(n)\}$ to nodes inside $\Subtree(j)$. As none of these involve player $i$, they are identical under $\pi^1$ and $\pi^2$, so
\begin{equation}
\label{eq:prf:partial:subtree-same}
\ts^{1,*}_n = \ts^{2,*}_n \;\; \forall n \in \Subtree(j), \qquad \tp^{1,*}_n = \tp^{2,*}_n \;\; \forall n \in \Desc(j).
\end{equation}
(The price $\tp_j$ offered by $i$ to $j$ is set by player $i$ and is \emph{not} internal to $\Subtree(j)$; it is treated in Part~(2).) For $m \in \Desc(j)$, both $\ts^*_m$ and $\tp^*_m$ appear in~\eqref{eq:prf:partial:subtree-same}, hence $E^1_m = \{\bmv : \ts^*_m(v_m) \ge \tp^*_m(v_{\Parent(m)})\} = E^2_m$.

\emph{Case B: $m \in \Subtree(c)$ for some $c \in \Children(i) \setminus \{j\}$.}
By \cref{lem:subtree-decomp}, both the equilibrium strategies within $\Subtree(c)$ and the price $\tp_c$ that $i$ offers to $c$ (the maximizer of $G^{(c)}_i$) depend on $\pi$ only through $\{\pi(n, k) : n \in \Subtree(c),\, k \in \calI\}$. Sibling subtrees are disjoint, so every $n \in \Subtree(c)$ satisfies $n \notin \Subtree(j)$; condition~(e) then gives $\pi^1(n, i) = \pi^2(n, i)$, while condition~(a) gives $\pi^1(n, k) = \pi^2(n, k)$ for all $k \neq i$. Hence $\pi^1(n, k) = \pi^2(n, k)$ for all $n \in \Subtree(c)$ and all $k$, so $G^{(c),1}_i = G^{(c),2}_i$ and the equilibria within $\Subtree(c)$ coincide, giving $E^1_m = E^2_m$. (Condition~(e) is exactly what rules out a spurious change in $i$'s pricing to its other children---the subtlety that distinguishes the tree from the chain, where no siblings exist.)

\emph{Case C: $m \in \Path(i) \cup \{i\}$.}
The seller in trade $m$ is $\Parent(m)$, and $\Parent(m) \in \Path(i)$: for $m \in \Path(i)$ the parent is a higher ancestor of $i$, and for $m = i$ it is $\Parent(i)$; in all cases it lies on the root-to-$i$ path. By condition~(d), both mechanisms reallocate to $\Parent(m)$ exactly as the baseline $\pi^0$, which pays only the direct seller; therefore $\pi^k(m', \Parent(m)) = 0$ for every strict descendant $m' \in \Desc(m)$ and $k \in \{1,2\}$. \cref{lem:scaling}(2), applied to trade $m$ under each mechanism, then yields $E^k_m = \{\bmv : v_m / v_{\Parent(m)} \ge \tau_m\}$ with $\tau_m$ depending only on $F_m$. Since $\tau_m$ is mechanism-independent, $E^1_m = E^2_m$.

\emph{Case D: $m \notin \Subtree(i) \cup \Path(i)$.}
Here $i \notin \Subtree(m)$, and $\Subtree(m)$ is disjoint from $\Subtree(i) \supseteq \Subtree(j)$. By condition~(a) the two mechanisms differ only in reallocation to $i$, and by conditions~(b),(c),(e) these differences occur only on trades in $\Subtree(j)$; none lie in $\Subtree(m)$. Hence both the threshold $\ts^*_m$ (computed within $\Subtree(m)$) and the price $\tp^*_m$ offered by $\Parent(m)$ (the maximizer of $G^{(m)}_{\Parent(m)}$, which by \cref{lem:subtree-decomp} depends only on reallocations within $\Subtree(m)$) are identical under $\pi^1$ and $\pi^2$, so $E^1_m = E^2_m$.

\paragraph{Proof of Part~(2).}
Since $\ts^{1,*}_j = \ts^{2,*}_j$ (proven in Part~(1)), it suffices to show $\tp^{1,*}_j(v_i) \le \tp^{2,*}_j(v_i)$ for all $v_i$, because then:
\[
E^2_j = \{\bmv : \ts^*_j(v_j) \ge \tp^{2,*}_j(v_i)\} \subseteq \{\bmv : \ts^*_j(v_j) \ge \tp^{1,*}_j(v_i)\} = E^1_j.
\]
We decompose the gain function $G^{(j),k}_i(v_i, p_j)$ under mechanism $\pi^k$ into one-shot and future components:
\begin{equation}
\label{eq:prf:partial:decomp}
G^{(j),k}_i(v_i, p_j) \;=\; r^k_i(v_i, p_j) \;+\; d^k_i(v_i, p_j),
\end{equation}
where
\begin{align*}
r^k_i(v_i, p_j) &\;\coloneqq\; \bbE\!\big[\pi^k(j,i) \cdot p_j \cdot \one\{\ts^*_j(v_j) \ge p_j\} \;\big|\; v_i\big], \\
d^k_i(v_i, p_j) &\;\coloneqq\; \bbE\!\bigg[\one\{\ts^*_j(v_j) \ge p_j\} \cdot \sum_{m \in \Desc(j)} \pi^k(m,i) \cdot p_m \cdot y_{m \mid j} \;\bigg|\; v_i\bigg].
\end{align*}
Here $r^k_i$ is the one-shot profit from trade $j$, and $d^k_i$ is the future reallocated profit from downstream trades. Note that the equilibrium strategies $\ts^*_j, \tp^*_m$ and the conditional trade indicators $y_{m \mid j}$ within $\Subtree(j)$ are the same under both mechanisms (proven above), so we drop the superscript $k$ from these quantities.

Let $p^k = \tp^{k,*}_j(v_i)$ denote the optimal price under $\pi^k$.

\emph{Degenerate case $\pi^1(j,i) = 0$.} Then $r^1_i \equiv 0$, so $G^{(j),1}_i(v_i, \cdot) = d^1_i(v_i, \cdot)$, which is weakly decreasing in $p$ (Property~2 below). The smallest-maximizer tie-break therefore selects $\tp^{1,*}_j(v_i) = 0 \le \tp^{2,*}_j(v_i)$, so the desired inequality holds. \emph{We assume henceforth $\pi^1(j,i) > 0$}, whence $\pi^2(j,i) \ge \pi^1(j,i) > 0$ by condition~(b). We establish two key monotonicity properties.

\emph{Property 1.} Since $\ts^{1,*}_j = \ts^{2,*}_j$ and $\pi^1(j,i), \pi^2(j,i) > 0$, we have $r^1_i(v_i, p) = \frac{\pi^1(j,i)}{\pi^2(j,i)}\, r^2_i(v_i, p)$ with coefficient $\frac{\pi^1(j,i)}{\pi^2(j,i)} \in (0, 1]$ by condition~(b). Consequently, for any $p, p'$:
\begin{equation}
\label{eq:prf:partial:prop1}
r^1_i(v_i, p) \ge r^1_i(v_i, p') \quad\Longrightarrow\quad \Delta r_i(v_i, p) \le \Delta r_i(v_i, p'),
\end{equation}
where $\Delta r_i \coloneqq r^1_i - r^2_i$: since the coefficient is \emph{strictly} positive, $r^1_i(v_i,p) \ge r^1_i(v_i,p')$ is equivalent to $r^2_i(v_i,p) \ge r^2_i(v_i,p')$, and $\Delta r_i = (\frac{\pi^1(j,i)}{\pi^2(j,i)} - 1) r^2_i$ has nonpositive coefficient, so the implication follows.

\emph{Property 2.} For $p' > p$, we have $\one\{\ts^*_j(v_j) \ge p'\} \le \one\{\ts^*_j(v_j) \ge p\}$ pointwise. Since prices are nonnegative and $y_{m|j} \in \{0,1\}$, the integrand in $d^k_i$ is nonnegative and is multiplied by $\one\{\ts^*_j(v_j) \ge p\}$, so $d^k_i(v_i, p)$ is weakly decreasing in $p$. For $\Delta d_i(v_i, p) \coloneqq d^1_i(v_i, p) - d^2_i(v_i, p)$:
\begin{align*}
\Delta d_i(v_i, p) &= \bbE\bigg[\one\{\ts^*_j(v_j) \ge p\} \cdot \sum_{m \in \Desc(j)} \big(\pi^1(m,i) - \pi^2(m,i)\big) \cdot p_m \cdot y_{m|j} \;\bigg|\; v_i\bigg].
\end{align*}
By condition~(c), $\pi^1(m,i) - \pi^2(m,i) \ge 0$ for all $m \in \Desc(j)$, so the summand is nonnegative; hence $\Delta d_i(v_i, p)$ is also weakly decreasing in $p$.

Now suppose for contradiction that $p^1 > p^2$. By optimality of $p^1$ under $\pi^1$:
\[
G^{(j),1}_i(v_i, p^1) \ge G^{(j),1}_i(v_i, p^2).
\]
Since $d^1_i$ is weakly decreasing in $p$ (Property~2) and $p^1 > p^2$, we have $d^1_i(v_i, p^1) \le d^1_i(v_i, p^2)$. Combined with the optimality inequality:
\begin{equation}
\label{eq:prf:partial:rstep}
r^1_i(v_i, p^1) \ge r^1_i(v_i, p^2) \quad\Longrightarrow\quad \Delta r_i(v_i, p^1) \le \Delta r_i(v_i, p^2),
\end{equation}
where the implication follows from Property~1. Define the quantity:
\begin{align*}
S \;&\coloneqq\; \big[G^{(j),2}_i(v_i, p^1) - G^{(j),2}_i(v_i, p^2)\big] - \big[G^{(j),1}_i(v_i, p^1) - G^{(j),1}_i(v_i, p^2)\big] \\
&=\; \big[\Delta r_i(v_i, p^2) - \Delta r_i(v_i, p^1)\big] + \big[\Delta d_i(v_i, p^2) - \Delta d_i(v_i, p^1)\big] \;\ge\; 0,
\end{align*}
where the first bracket is $\ge 0$ by~\eqref{eq:prf:partial:rstep} and the second bracket is $\ge 0$ by Property~2 (since $\Delta d_i$ is weakly decreasing and $p^1 > p^2$). Since $G^{(j),1}_i(v_i, p^1) \ge G^{(j),1}_i(v_i, p^2)$ and $S \ge 0$:
\[
G^{(j),2}_i(v_i, p^1) \ge G^{(j),2}_i(v_i, p^2).
\]
By optimality of $p^2$ under $\pi^2$, equality must hold: $G^{(j),2}_i(v_i, p^1) = G^{(j),2}_i(v_i, p^2)$. Then $S \ge 0$ forces $G^{(j),1}_i(v_i, p^1) = G^{(j),1}_i(v_i, p^2)$, so $p^2$ is also a maximizer of $G^{(j),1}_i(v_i, \cdot)$. But $p^2 < p^1$, and the smallest-maximizer tie-break selects the smallest maximizer under $\pi^1$; hence $p^1 = \tp^{1,*}_j(v_i) \le p^2$, contradicting $p^1 > p^2$. Therefore $\tp^{1,*}_j(v_i) \le \tp^{2,*}_j(v_i)$.
\qed
\end{proof}

\subsection{Proof of \texorpdfstring{\cref{cor:welfare}}{Corollary~\ref{cor:welfare}}: Welfare Improvement}
\begin{proof}
Fix any terminal history $h$. Summing the utilities in \cref{eq:utility-raw} over all players,
\[
\SW(h;\bmv) = \sum_{i \in \calI} U_i(h;v_i) = \sum_{i \in \calI} y_i v_i \;-\; \sum_{i} y_i p_i \;+\; \sum_{i} \sum_{j \in \Desc(i)} y_j\, \pi(j,i)\, p_j.
\]
Swapping the order of the last double sum and using that $j \in \Desc(i) \iff i \in \Path(j)$ together with the Path Only property (\cref{def:prm}),
\[
\sum_{i} \sum_{j \in \Desc(i)} y_j\, \pi(j,i)\, p_j = \sum_{j} y_j p_j \sum_{i \in \Path(j)} \pi(j,i) = \sum_{j} y_j p_j,
\]
where the last equality holds by budget balance. This cancels $\sum_i y_i p_i$, so $\SW(h;\bmv) = \sum_{i \in \calI} y_i v_i$. This is intuitive: with budget balance, monetary transfers are internal and welfare equals the total realized valuation.

Now set $h = h(\tbmp^*, \tbms^*; \bmv)$. Under the threshold equilibrium, trade $i$ occurs ($y_i = 1$) exactly when every player on the path from the root to $i$ is willing to buy, \ie, $\bmv \in \bigcap_{k \in (\Path(i) \cup \{i\}) \setminus \{r\}} E_k(\tbmp^*, \tbms^*)$. By \cref{thm:global}, $E_k(\tbmp^{0,*}, \tbms^{0,*}) \subseteq E_k(\tbmp^*, \tbms^*)$ for every $k$, so $y_i$ in $h(\tbmp^*, \tbms^*; \bmv)$ is weakly larger than $y_i$ in $h(\tbmp^{0,*}, \tbms^{0,*}; \bmv)$. Since $v_i > 0$ with probability 1, summing over $i$ and taking the expectation over $\bmv \sim \calF$ yields the claim.
\qed
\end{proof}

\newpage
\section{Supplementary Experimental Results}
\label{sec:app-experiments}

This appendix collects experimental results deferred from \cref{sec:exp} for space.

\subsection{Irregular-Tree Experiment}
\label{app:irregular}
Our complete-tree experiments in \cref{sec:exp} exploit level-symmetry to compute $d+1$ representative strategies. To confirm that the conclusions are not an artifact of this regularity, we additionally solve the \emph{irregular} tree of \cref{fig:tree-notation}---with unequal branching and depth---exactly and without any symmetry reduction, using the general-tree solver of \cref{thm:algo-exact}.
\Cref{tab:exp-irregular} reports, for this irregular instance, the same two metrics and four valuation models as \cref{tab:exp}. The qualitative picture is unchanged: every reallocation mechanism weakly beats RAW for all four models, OPT helps in all cases, and the instances of our general family again dominate the fixed-ratio TBDS---so the benefit of widening the design space is not special to symmetric trees.

\begin{table}[h]
\centering
\caption{Equilibrium transaction volume (TV) and social welfare (SW) on the \emph{irregular} tree of \cref{fig:tree-notation} (7 players, unequal branching and depth), across the four valuation models. Profit reallocation again improves on the baseline RAW; the empirically optimized OPT helps in every case, including M3 where the fixed-ratio TBDS and uniform AVE leave the small market unchanged.}
\label{tab:exp-irregular}
\begin{tabular*}{\textwidth}{@{\extracolsep{\fill}} l cccc cccc @{}}
\toprule
& \multicolumn{4}{c}{Transaction Volume} & \multicolumn{4}{c}{Social Welfare} \\
\cmidrule(lr){2-5}\cmidrule(lr){6-9}
Mechanism & M1 & M2 & M3 & M4 & M1 & M2 & M3 & M4 \\
\midrule
RAW  & $1.25$ & $2.78$ & $0.50$ & $1.46$ & $5.60$ & $0.78$ & $3.00$ & $0.55$ \\
TBDS & $1.89$ & $2.78$ & $0.50$ & $1.46$ & $7.05$ & $0.78$ & $3.00$ & $0.55$ \\
AVE  & $1.89$ & $2.85$ & $0.50$ & $1.53$ & $7.05$ & $0.80$ & $3.00$ & $0.57$ \\
OPT  & $2.96$ & $3.78$ & $0.61$ & $1.74$ & $8.27$ & $0.91$ & $3.31$ & $0.60$ \\
\bottomrule
\end{tabular*}
\end{table}

\end{document}